\theoremstyle{plain}
\newtheorem{theo}{Theorem}[section]
\newtheorem{prop}{Proposition}[section]
\newtheorem{lemm}{Lemma}[section]
\newtheorem{coro}{Corrollary}[section]
\theoremstyle{definition}
\newtheorem{defn}{Definition}[section]
\newtheorem{exam}{Example}[section]
\theoremstyle{remark}
\newtheorem*{remark}{Remark}
\newtheorem*{notation}{Notation}
\numberwithin{equation}{section}
\DeclareMathOperator{\Tr}{Tr}
\begin{document}

\let\origmaketitle\maketitle
\def\maketitle{
  \begingroup
  \def\uppercasenonmath##1{} 
  \let\MakeUppercase\relax 
  \origmaketitle
  \endgroup
}

\title{\bfseries \LARGE {\scshape The Learnability of Unknown Quantum Measurements}}

\author{\large \textsc{Hao-Chung Cheng$^{1}$, Min-Hsiu Hsieh$^2$, and Ping-Cheng Yeh$^3$}}
\address{Graduate Institute Communication Engineering, National Taiwan University, Taiwan (R.O.C)$^{1,3}$ \\ Centre for Quantum Computation and Intelligent Systems, \\Faculty of Engineering and Information Technology, University of Technology Sydney, Australia$^{1,2}$} 
\email{\href{mailto:f99942118@ntu.edu.tw}{f99942118@ntu.edu.tw}$^1$}
\email{\href{mailto:minhsiuh@gmail.com}{minhsiuh@gmail.com}$^2$}
\email{\href{mailto:pcyeh@ntu.edu.tw}{pcyeh@ntu.edu.tw}$^3$}

\begin{abstract}
Quantum machine learning has received significant attention in recent years, and promising progress has been made in the development of quantum algorithms to speed up traditional machine learning tasks.
In this work, however, we focus on investigating the information-theoretic upper bounds of sample complexity---how many training  samples are sufficient to predict the future behaviour of an unknown target function. This kind of problem is, arguably, one of the most fundamental problems in statistical learning theory and the bounds for practical settings can be completely characterised by a simple measure of complexity.

Our main result in the paper is that, for learning an unknown quantum measurement, the upper bound, given by the fat-shattering dimension, is linearly proportional to the dimension of the underlying Hilbert space. Learning an unknown quantum state becomes a dual problem to ours, and as a byproduct, we can recover Aaronson's famous result [\emph{Proc.~R.~Soc.~A \textbf{463}, 3089--3144 (2007)}] solely using a classical machine learning technique. In addition,  other famous complexity measures like covering numbers and Rademacher complexities are derived explicitly. We are able to connect measures of sample complexity with various  areas in quantum information science, e.g.~quantum state/measurement tomography, quantum state discrimination and quantum random access codes, which may be of independent interest. Lastly, with the assistance of general Bloch-sphere representation,
we show that learning quantum measurements/states can be mathematically formulated as a neural network. Consequently, classical ML algorithms can be applied to efficiently accomplish the two quantum learning tasks.

\end{abstract}

\maketitle



\section{Introduction} \label{intro}

\emph{Statistical learning theory} \cite{Vap95,Vap98} or \emph{Machine Learning} (ML) \cite{Mit97} is a branch of artificial intelligence which aims to devise algorithms for machines to systematically learn from historic data. Typically, ML has been separated into \emph{unsupervised learning} and \emph{supervised learning}. In unsupervised learning, the machine is most useful for finding the hidden structure, e.g.~{clustering} or {density estimation}, within unlabeled data. In supervised learning, the machine is equipped with more power to predict the class or to infer the characteristic from the structured data.
The figures of merit for a learning machine include: (i) \emph{computational complexity} which measures the run-time efficiency of a learning algorithm; (ii) \emph{sample complexity} which determines the number of queries to a membership  made by the learning algorithm such that the hypothesis function is Probably Approximately Correct (PAC) \cite{Val84}; and (iii) \emph{model complexity} (otherwise called generalization error \cite{AML12}) which is defined as the discrepancy between the out-of-sample error and the in-sample error. Note that model complexity is closely related to sample complexity in the sense that a learning machine with large model complexity requires more samples to accurately approximate the target function, which results in high sample complexity. Current research trends include the reduction of computational complexity due a large volume data set (big data) as well as the high dimensional features of each data point, and how to balance model complexity with in-sample error such that the training data set can be trained well without the occurrence of overfitting.

\emph{Quantum Information Processing} (QIP) is an active field that studies the computational capability in quantum systems. In recent years, QIP has achieved significant breakthroughs \cite{NC00}: factorizing  large prime integers with an exponential speed-up \cite{Sho97} and searching an unstructured database with a quadratic speed-up \cite{Gro97} are two most famous examples. There are two features of QIP that result in dramatic improvement over classical information processing:
(1) The superposition principle: contrary to the classical bit, which takes discrete value either 0 or 1, a quantum bit (or \emph{qubit}) can be in any linear combination of two quantum states $|0\rangle$ and $|1\rangle$. The principle is a consequence of the fundamental property of quantum mechanics---the linearity of Schr\"{o}dinger's wave equation.
Therefore, the superposition principle allows the outcomes of parallel quantum computation to be stored in a single quantum state, which gives quantum machines more computing ability than classical devices.
(2) Entanglement: quantum entanglement is the most remarkable phenomenon in quantum theory. This resource plays a crucial role in numerous results, including  quantum Shannon theory \cite{DH13, HW10a, HW10b, HDW08}, quantum error-correcting codes \cite{BDH06, BDH14, WHB14, HYH11}, and so on.
These features make QIP a multidisciplinary research area with a broad range of promising applications. 

Owing to the successful achievements of QIP, researchers have begun to explore whether QIP can advance other subjects of classical computer science.
Consequently, the interdisciplinary area of quantum machine learning \cite{Wit14, SSP14b} has attracted substantial interest lately. The central problems are two-fold. The first kind of problem investigates how QIP can improve classical ML tasks by converting classical algorithms partially or totally to a quantum algorithm. More precisely, one studies how quantum machines can serve to accelerate the ML process to improve computational efficiency, or to reduce sample complexity by transforming classical training data into special sets of quantum states. We call this line of research \emph{Quantum Computational Learning} \cite{SG01, Ser01, SG04, AS05, ARR+03, ABG206, PL12, ABG13, WBL12, LMR13, RML13, LMR213, WKS14, Wan14, SSP14a, LGZ14, CCS14, SSP14d, WKS14b}. On the other hand, certain fundamental quantum problems, such as the inference of unknown quantum states or operations, or the hidden structure of the underlying quantum system, fits well into the setting of statistical learning theory. However, it requires certain generalisation of current theory of machine learning to accommodate the operator-valued inputs and/or outputs.  We term  this line of research \emph{Quantum Statistical Learning}\footnote{Our catalogue of quantum ML is different from the learning class $\mathcal{L}^\mathit{context}_\mathit{goal}$, where the subscript ``\emph{goal}'' refers to the learning goal and the superscript ``\emph{context}'' refers to the training data set and/or the learner's abilities, introduced by A\"{i}meur, Brassard, and Gambs \cite{ABG06}. According to the authors, $\mathcal{L}^\mathit{c}_\mathit{c}$ corresponds to pure classical ML tasks. When the learner can access to a quantum computer to accelerate the classical ML problems, it belongs to  the learning class $\mathcal{L}^\mathit{q}_\mathit{c}$.} \cite{ABG06, Aar07, Gam08, GK10, BCD+10, BDP+11, GLF+10, FGL+12, SCM+12, SGA14, LB14, BS13, AKK14, SSP14a, SSP14d, MBW10, BBA14, CHB+14, MW14}. 

Current achievements in Quantum Computational Learning come from quantum enhancement of the computation procedures such as optimization, inner product of big data and ability to compute classical functions in parallel.
For example, Servedio and Gortler \cite{SG01,Ser01,SG04,AS05} considered two standard learning models of Boolean functions: Angluin's \cite{Ang88} exact learning from membership queries, and Valiant's \cite{Val84} PAC learning from examples. By defining quantum extensions of the classical oracles to manipulate classical binary data, it was shown that the quantum oracles and classical machines are polynomially equivalent in terms of sample complexity.
Anguita \emph{et~al.} (2003) \cite{ARR+03} used the method of Durr and Hoyer  \cite{DH96} to perform the optimization process in support vector machine (SVM). A\"{i}meur, Brassard, and Gambs (2006) \cite{ABG206, ABG13} applied a modified Grover's algorithm \cite{Gro97} in clustering problems. Lloyd \emph{et~al.} (2013) \cite{LMR13, RML13, LMR213} introduced a quantum random access memory \cite{GLM08} to store classical data and proposed an efficient density matrix exponentiation method to improve the computational procedure of supervised, unsupervised and SVM algorithms. Additionally, Lloyd \emph{et al.} \cite{LGZ14} also provided quantum algorithms to execute topological analysis for big data. Pudenz and Lidar (2012) \cite{PL12} considered the verification  of software and applied adiabatic quantum computation methods to solve the quadratic binary optimization problem. Wiebe, Kappor, and Svore (2014) \cite{WKS14} (Microsoft Research) modified  Lloyd's approach and proposed a quantum nearest-neighbor algorithm. Surprisingly, they showed that the number of queries depends on the sparsity and maximum value of the training data rather than on the feature dimension. Wang (2014) \cite{Wan14} combined phase estimation and the dense Hamiltonian simulation technique to improve the ML performance in curve fitting.
Cross \emph{et al.} \cite{CCS14} considered the problem of learning parity functions in the presence of noise. They showed that the quantum oracle is computationally efficient than the classical counterpart.
Schuld \emph{et al.} \cite{SSP14d} presented a quantum pattern classification and discussed its advantages.
Recently, Wiebe \emph{et al.} \cite{WKS14b} successfully applied quantum computers to perform an important machine learning task---deep learning.
We refer the interested readers to Ref.~\cite[Table 1.1]{Wit14}, where Wittek provides a detailed comparisons of existing quantum machine learning algorithms.

On Quantum Statistical Learning, A\"{i}meur, Brassard, and Gambs \cite{ABG06} introduced the task of \emph{quantum clustering}, where the goal is to group similar quantum states (according to some fidelity measure) while putting dissimilar states in different clusters.
\begin{figure}
	\caption{Current Development of Quantum Machine Learning. `Quantum Computational Learning'  investigates how quantum machines can serve to accelerate the ML process to improve computational efficiency, or to reduce sample complexity by transforming classical training data into special sets of quantum states. In this line of research, both the input space $\mathcal{X}$ and output space $\mathcal{Y}$ are classical. On the other hand, `Quantum Statistical Learning' studies the inference of unknown quantum states, operations, or hidden structure in the quantum system. We term the quantum version of classical statistical/stochastic model as `Quantum Stochastic Model'.} \label{table:survey}
	\tikzstyle{block} =
	[
	rectangle
	, draw
	, text width=15.5em
	, text centered
	, rounded corners
	, minimum height=3em
	]
	
	\tikzstyle{block2} =
	[
	rectangle
	, draw
	, text width=10.5em
	, text centered
	, rounded corners
	, minimum height=3em
	]
	
	\tikzstyle{line} =
	[
	draw
	, -latex'
	]
	\begin{tikzpicture}       
	\matrix (mat) [matrix of nodes, nodes=block, column sep=-25mm, row sep=1cm]
	{ & & & & & & \\
		& & &\node [block] (root) {\textbf{Quantum Machine Learning}};  & & & \\
		&\node [block] (c1) {\textbf{Quantum\\Computational Learning}};  & & & &
		\node [block] (c2)  {\textbf{Quantum Statistical Learning}}; &     \\
		\node [block2] (c11) {\textbf{Computational Complexity}\\
			Angluin \cite{Ang88}\\
			A\"{i}meur \emph{et al.} \cite{ABG206, ABG13}\\
			Pudenz and Lidar \cite{PL12}\\
			Cross emph{et al.} \cite{CCS14}\\
			Lloyd \emph{et al.} \cite{WBL12, LMR13, RML13, LMR213, LGZ14}\\
			Wang \cite{Wan14}\\
			Wiebe \emph{et al.} \cite{WKS14, WKS14b}\\
			Schuld \emph{et al.} \cite{SSP14d}};
		&&
		\node [block2] (c12)   {\textbf{Sample Complexity}\\Servedio \emph{et al.} \cite{SG01, Ser01, SG04, AS05}};
		&&
		\node [block2] (c21) {\textbf{Quantum\\Stochastic Model}\\
			Bisio \emph{et al.} \cite{BCD+10,BDP+11}\\
			Gross \emph{et al.} \cite{FGL+12, GLF+10}\\
			Classificatioin: \cite{Gam08,GK10,SCM+12,SGA14}\\
			QDT: \cite{LB14}\\
			QNN: \cite{BS13, AKK14, SSP14a, SSP14d}\\
			QHMM: \cite{MBW10, BBA14, CHB+14, MW14}};
		&&
		\node [block2] (c22) {\textbf{Sample Complexity}\\Aaronson \cite{Aar07} and this work};\\
	};
	\path[line] (root.south)    --+(0,-0.5) -| node [pos=0.3, above] {} (c1.north);
	\path[line] (root.south)   -- +(0,-0.5) -| node [pos=0.3, above] {} (c2.north);
	\path[line] (c1.south) -- +(0,-0.5) -| node [pos=0.3, above] {} (c11.north);
	\path[line] (c1.south) -- +(0,-0.5) -| node [pos=0.3, above] {} (c12.north);
	\path[line] (c2.south)    -- +(0,-0.5) -| node [pos=0.3, above] {} (c21.north);
	\path[line] (c2.south)    -- +(0,-0.5) -| node [pos=0.3, above] {} (c22.north);
	\end{tikzpicture}
\end{figure}
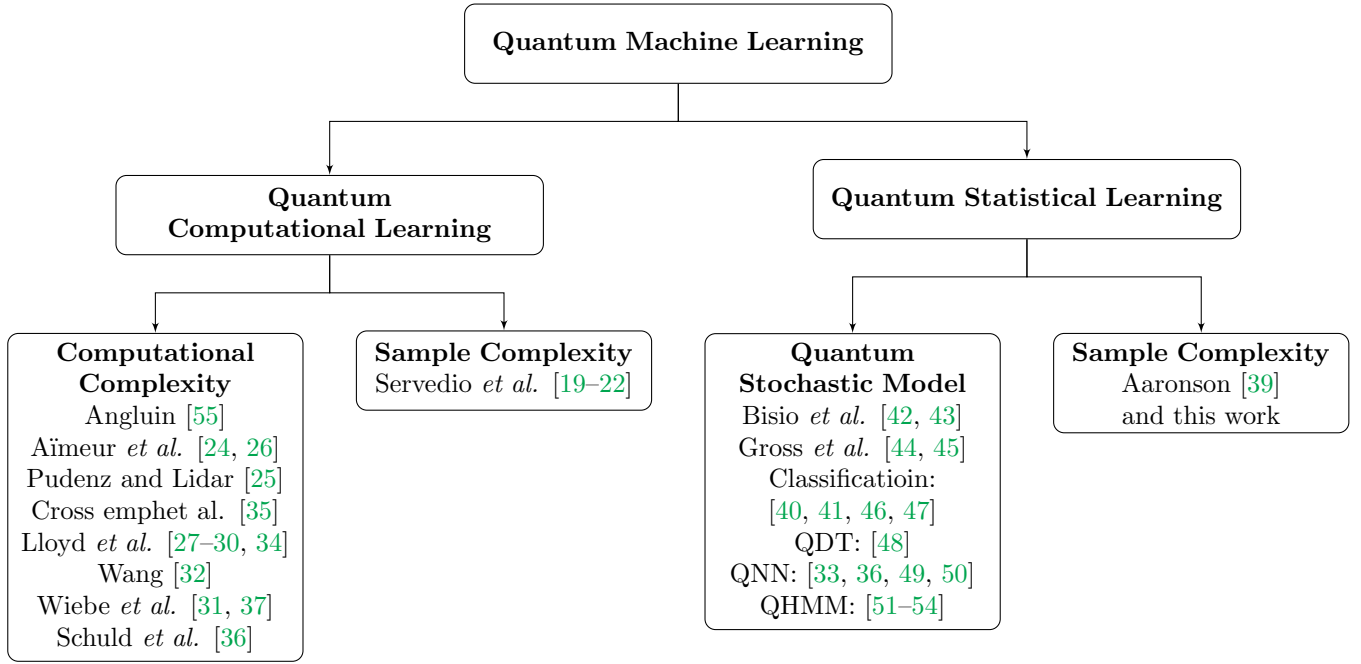

\subsection{Contributions of this work}
In this work, we answer the following two questions in quantum ML.

\textbf{How many quantum states are sufficient to learn a quantum measurement?}
Assume there is an unknown two-outcome quantum measurement device, and we can prepare a set of quantum states that are randomly drawn from an unknown distribution. Suppose that the outcome statistics of the set of quantum states are known. Can we infer the unknown quantum measurement from the quantum states at hand? How many samples of quantum states are needed for the learning machine to decide an optimal quantum measurement from the hypothesis set? Can the chosen candidate approximate the target measurement with the desired accuracy? These questions are typical sample complexity problems in statistical learning theory, and the answer lies in a proper quantification of the ``effective size" of the hypothesis set. In this paper, we propose a framework (see Section \ref{Framework}) to connect the problems of learning two-outcome measurements with the tasks of learning real-valued linear functional on quantum states. By exploiting Banach space theory and the \emph{noncommutative Khintchine inequalities} \cite{LP91} in Random Matrix Theory, we prove (Theorem \ref{theo:effect}) that the complexity measure---\emph{fat-shattering dimension}---is upper bounded by $O(d/\epsilon^2)$. Under the same framework, other complexity measures, such as covering numbers and Rademacher complexity, can be derived. As a result, the number of required sample states to learn an unknown quantum measurement is proportional to the dimension of the Hilbert space.

\textbf{How many quantum measurements are sufficient to learn a quantum state?}
Following the paradigm of learning quantum measurements, we can similarly formalize the problem of  learning an unknown state into its \emph{dual problem}. Unlike Aaronson \cite{Aar07}, we employ tools solely from statistical learning theory to show (Theorem \ref{theo:state}) that the fat-shattering dimension is $O(\log d/\epsilon^2)$ for learning a qudit state.  In addition, we also derive the covering number and the Rademacher complexity. Our results show that all three complexity measures are characterised by logarithmically proportional to the Hilbert dimension.

Lastly, by formulating the quantum learning problems into Bloch-sphere representation, we show that it is equivalent to a \emph{neural network}. Hence the classical ML algorithms can be practically applied to perform quantum ML tasks.
~\\~\\
There are several fields that may relate to or benefit from our work.
~\\~\\
\textbf{Quantum State/Measurement Tomography.}
Quantum state tomography is a difficult task in physics because the number of unknown parameters in a multi-partite quantum system grows exponentially.
Aaronson pointed out that quantum ML can serve as an alternative approach to quantum state tomography \cite{Aar07}. Surprisingly, learning an unknown target state within a given accuracy requires only the number of measurements that grows logarithmically with the dimension $d$. In this work, we push Aaronson's result one step further and consider application of machine learning framework to study quantum measurement tomography. To the best of our knowledge, there are very few results in this direction.
We hope that our result in learning quantum measurements will stimulate further investigation into this problem.

\bigskip
\textbf{Quantum State Discrimination.}
The goal of quantum state discrimination is to determine the identity of a state in an ensemble. Whenever states are not mutually orthogonal, they cannot be perfectly discriminated. Therefore, a possible way is ambiguous state discrimination with the goal of minimizing the error probability.
Given $\epsilon>0$ we show that the fat-shattering dimension guarantees that a set of quantum states can be discriminated into two subsets with the worst error probability no greater than $1/2-\epsilon$.
Following the same reasoning, the quantum states in the hypothesis set can be used to distinguish between two-outcome measurements. 

\bigskip
\textbf{Quantum Random Access Codes.}
The $(n,m,p)$-QRA coding stands for encoding an $n$-bit sequence into $m$-qubit  so that the receiver can recover any one of the bits with successful probability at least $p$. The information-theoretic inequalities of $n$ and $m$ provide an upper bound for the fat-shattering dimension of learning quantum states. Alternatively, we can use the complexity measure---pseudo dimension---to show that there exists no $(n,m,p)$-QRA coding scheme, with $n\geq2^{2m}$. The result coincides with the work of Hayashi \emph{et al.}~\cite{HIN+06}. See Section \ref{RQRAC} for further discussions.

\bigskip
The paper is organised as follows. In Section \ref{PSLT} we introduce the background of statistical learning theory (especially on supervised learning) and describe important complexity measures. In Section \ref{Framework}, we formalise a unified  framework to relate the problems of learning quantum measurements and learning quantum states with the learning real-valued functions. Based on the proposed approach, we derive learning quantum measurements and prove the main results in Section \ref{LQE}. In addition, we discuss the interpretations of the  to ambiguous set discrimination and also derive the covering numbers and the Rademacher complexity. In Section \ref{LQS}, we consider the problem of learning quantum states and discuss its relationship with QRA codes. In Section \ref{algorithms}, we formulate the learning problem into Bloch-sphere representation and discuss possible algorithms (e.g.~neural networks) to implement the quantum learning tasks. We conclude this paper in Section \ref{conclusion}. 

\begin{notation}
In this paper, we denote a Hilbert space by $\mathcal{H}$.
The trace of an operator $M$ on $\mathcal{H}$ is calculated as
\[
\Tr(M):=\sum_k e_k M e_k,
\]
where $\{e_k\}$ is any orthonormal basis on $\mathcal{H}$.
Let $\mathbb{M}_d$ denote the set of all self-adjoint operators on $\mathbb{C}^d$.
The Hilbert-Schmidt inner product on $\mathbb{M}_d$ can be defined as
$\langle A,B\rangle_\text{HS}:=\Tr(A B)$, where the subscript `HS' will be omitted when the context is clear.
For $p\in [1,\infty)$, we denote the Schatten $p$-norm of an operator $M$ as
\[
\|M\| _{p} := \bigg( \sum _{i\geq 1} |\lambda_i(M)|^p \bigg)^{1/p},
\]
where $\lambda_i(M)$ is the eigenvalue of $M$. We denote
$\|M\|_\infty := \sup_i |\lambda_i(M)|$ as the operator norm.
Clearly, $\|\cdot\|_1$ and $\|\cdot\|_2$ correspond to the trace
norm and Hilbert-Schmidt norm $\|\cdot\|_\text{HS}$ respectively. Slightly abusing the
notation, we also denote the conventional $\ell_p$ norm on
$\mathbb{R}^d$ by $\|\cdot\|_p$ for $p\in [1,\infty]$. We define the
unit ball associated with the Schatten norms as $S_p^d =\{
M\in\mathbb{M}_d :\, \|M\|_p \leq 1\}$.
The set of bounded operators on $\mathcal{H}$ is denoted as $\mathcal{B(H)}$, which is the set operators with finite Schatten $\infty$-norm.
Likewise, the set of operators with finite Schatten $1$-norm is called the set of trace class operators, $\mathcal{T(H)}$.

A \emph{quantum state} (also called \emph{density operators}) on the Hilbert space $\mathcal{H}$ is a positive semi-definite operator with unit trace.
We identify the \emph{state space} as the set of all quantum states on $\mathcal{H}$, i.e.~,
\[
\mathcal{Q}(\mathcal{H}):=\{\rho\in\mathcal{T(H)}:\, \rho\succeq 0,\, \Tr(\rho)=1\}.
\]
A positive operator-valued measure (POVM) on $\mathcal{H}$ is a finite set of positive semi-definite operators $\{\Pi_i\}_{i\in I}$ such that
\[
\sum_{i\in I} \Pi_i = \mathcal{I},
\]
where $\mathcal{I}$ denotes the identity operator on $\mathcal{H}$.
Each POVM element $\Pi_i$ is called a \emph{quantum effect}, which serves as an instrument to perform a yes-no measurement.
We denote the set of all effects as an \emph{effect space}:

\[ \label{effect_space}
\mathcal{E}(\mathcal{H}):=\{E\in\mathcal{B(H)}:\,\mathcal{O}\preceq E\preceq \mathcal{I}\}.
\]
All constants are denoted as $C$ or $c$ and are independent from other parameters. Their values may change from line to line. The
notation $A\lesssim B$ means there is a constant $c$ such that
$A\leq cB$ and $A\simeq B$ means both $A\lesssim B$ and $A\gtrsim
B$. We summarise all the notation in table \ref{table:notation} in Appendix \ref{notation}.
\end{notation}

\section{Background of Statistical Learning Theory} \label{PSLT}

The starting point of this section is the mathematical formalism of the \emph{supervised machine learning}. We describe the effectiveness of a learning machine and examine the number of samples required to produce an almost optimal function with an error rate below the desired accuracy. As will be shown later, the bound of the sample complexity is closely related to the measures of complexity which characterise the ``size" of a function class.

\subsection{Supervised Machine Learning} \label{FSLT}

Generally speaking, supervised learning is a ML task that infers a function (or a learning model) by observing the data and the response to the data.
In this work, we focus on the definitions of agnostic PAC learnability and sample complexity for supervised machine learning.
For more comprehensive introduction to ML,
we refer the readers to literature such as Refs.~\cite{DGL96, Vap98, AB99, BBL03, Men04, HTF11, SB14}.

Consider a probability space
$(\mathcal{Z},\mu)$, where $\mathcal{Z}:=\mathcal{X}\times\mathcal{Y}$ with $\mathcal{X}$ (called the \emph{input space}) a measurable space and $\mathcal{Y}$ (called the \emph{output space}) a closed subset of real line $\mathbb{R}$. The probability distribution $\mu$ over $\mathcal{Z}$ is assumed to be fixed but known only through the \emph{training data set}, i.e.~$\mathsf{Z}_n=\left\{(X_1,Y_1),\ldots,(X_n,Y_n)\right\}\in\mathcal{Z}^n$ sampled independently and identically according to the measure $\mu$.
Supervised learning aims to construct a function $f:\mathcal{X}\rightarrow\mathcal{Y}$ which approximates the functional relationship between the input variable $X\in\mathcal{X}$ and the output variable $Y\in\mathcal{Y}$ from the observed training data set.
To evaluate the performance of the approximation, we define the loss function as a measurable map $\ell_f:\mathcal{Z}\rightarrow[0,+\infty)$ and the \emph{expected risk} (also called the \emph{out-of-sample error}):
\[
L(f)=\mathbb{E}_{\mu}\ell_f(X,Y).
\]
The loss function is usually taken as the absolute error or square error, i.e.
\[
\ell_f (X,Y)=|f(X)-Y|\quad \text{or} \quad \ell_f (X,Y)=(f(X)-Y)^2.
\]
For convenience, we only consider the square error in this work.
Other loss functions that satisfy the Lipschitz condition can be easily generalised\footnote{
	A loss function $\ell_f:\mathcal{Z}\rightarrow (0,\infty)$ is a Lipschitz function if it satisfies the Lipschitz condition
	\[
	|\ell_f(X,Y)-\ell_g(X,Y)|\leq \mathsf{L}|f(X)-g(X)|
	\]
	for all possible $(X,Y)\in\mathcal{Z}$ and the quantity $\mathsf{L}\in\mathbb{R}$ is called the Lipschitz constant.
	Denote by $\ell_{\mathcal{F}}$ the set $\{\ell_f:f\in\mathcal{F}\}$.
	Then the complexity measures (e.g.~the covering number and Rademacher complexity) of the class $\ell_{\mathcal{F}}$ are different from that of the hypothesis set $\mathcal{F}$ by the Lipschitz constant $\mathsf{L}$ \cite{Nat93,BM02}, i.e.~
	\[
	\mathcal{N}_p (\epsilon,\ell_{\mathcal{F}},m)\leq\mathcal{N}_p (\epsilon/\mathsf{L},\mathcal{F},m)\quad \text{for } p\geq 1,\,m\in\mathbb{N}
	\]
	and
	\[
	\mathcal{R}_n(\ell_{\mathcal{F}})\leq \mathsf{L}\mathcal{R}_n(\mathcal{F}).
	\]
	Therefore, by homogeneity we may assume the loss function is the absolute error with $L=1$ or the square error $L=2$ for deriving the sample complexity problems.
	}.
	
Since we are interested in minimising the expected risk, hence the \emph{target function} (or \emph{Bayes function}) as $t(x)=\mathbb{E}[Y|X=x]$ can be defined to achieve the minimum expected risk (called the \emph{Bayes risk}), i.e.~
\begin{eqnarray} \label{eq:min}
L_\text{Bayes}:=L(t)=\inf_f L(f),
\end{eqnarray}
where the infimum is taken over all possible measurable functions from $\mathcal{X}$ to $\mathcal{Y}$.
When $y$ is a deterministic function of $X$, then $Y=t(X)$ almost surely and $L(t)=0$.

The goal of the learner is to identify the target function $t$ from a
collection of functions $\mathcal{F}$, called the \emph{hypothesis
set}\footnote{Note that we use the term `hypothesis set' and `function class' interchangeably throughout the paper.}, which is a set of real-valued functions defined on the input
space $\mathcal{X}$. A \emph{learning algorithm} $\mathsf{A}$ for hypothesis set $\mathcal{F}$ is a mapping that assigns to
every training data $\mathsf{Z}_n$ some candidate function $\mathsf{A}(\mathsf{Z}_n)\in\mathcal{F}$, i.e.~
\[
\mathsf{A}:\cup_{n=1}^{\infty}\mathsf{Z}^n\rightarrow \mathcal{F}.
\]
The effectiveness of the learning
algorithm is measured by the number of data required to produce an
almost optimal function in the sense of Eq.~\eqref{eq:min}.
Therefore, we introduce one of the most fundamental concepts in supervised machine learning---\emph{Agnostic Probably Approximately Correct (PAC)} learning model \cite{Val84, Hau92}:
\begin{defn}[Agnostic PAC Learnability \cite{SB14}, Def.~3.3] \label{defn:PAC}
	A hypothesis set $\mathcal{F}$ is agnostic PAC learnable if there exist a function $m_\mathcal{F}:\mathbb{R}\times\mathbb{R}\rightarrow \mathbb{N}$ and a learning algorithm with the following property: For every $\epsilon,\delta\in(0,1)$ and for every distribution $\mu$ over $\mathcal{Z}$, when running the learning algorithm on $n\geq m_\mathcal{F}(\epsilon,\delta)$ samples generated by $\mu$, the algorithm returns a hypothesis $\hat{f}$ such that, with probability of at least $1-\delta$ (over the choice of the $n$ training examples),
	\[
	L(\hat{f})\leq \inf_{f\in\mathcal{F}} L(f)+\epsilon.
	\]
\end{defn}

However, the expected risk
$L(f)=\mathbb{E}_\mu[\ell_f(X,Y)]$ cannot be calculated since $\mu$ is unknown. We can only evaluate the agreement
of a candidate function over the training data set, which is called the
\emph{empirical risk} (also called the \emph{in-sample error}):
\[ \label{eq:ER}
\widehat{L}_n(f)=\frac1n \sum_{i=1}^n \ell_f(X_i,Y_i).
\]
For example, one of the most well-known learning algorithms is the Empirical Risk Minimization (ERM) principle \cite{Vap98} that assigns a function $f_n\in\mathcal{F}$ to each training data set which is ``almost optimal" on the data, i.e.~
\begin{eqnarray} \label{eq:ERM}
f_n=\arg \min _{f\in\mathcal{F}} \widehat{L}_n(f).
\end{eqnarray}

One way to evaluate the performance of the learning algorithm is to relate the risk $L(f_n)$ to the empirical risk $\widehat{L}_n(f_n)$. Following the reasoning of agnostic PAC model, our goal is hence to estimate the \emph{generalisation error} $\epsilon$:
\[ \label{eq:empirical_bound}
L(f_n)\leq \widehat{L}_n(f_n) + \epsilon(n,\mathcal{F}). 
\]
For any algorithm that outputs a $f_n\in\mathcal{F}$, we have
\[ \label{eq:sup_bound}
L(f_n)- \widehat{L}_n(f_n)\leq \sup_{f\in\mathcal{F}} \{ L(f)-\widehat{L}_n(f) \},
\]
which leads to the definition of \emph{uniform Glivenko-Cantelli class} (uGC class).
\begin{defn}
We say that the hypothesis set $\mathcal{F}$ is a uniform Glivenko-Cantelli class if for every $\epsilon>0$,
\[
\lim_{n\rightarrow\infty} \sup_\mu \Pr \left\{ \sup_{f\in\mathcal{F}} \left|L(f)-\widehat{L}_n(f)\right|\geq \epsilon \right\}=0.
\]
\end{defn}
The uniformity is with respect to all members of $\mathcal{F}$ and over
all possible probability measures $\mu$ on the domain $\mathcal{Z}$.
In addition to the conditions of the learnability, we also consider the bound on the rate of uniform convergence. For every $0<\epsilon,\delta<1$, let $m_\mathcal{F}(\epsilon,\delta)$ be the first integer such that for every $n\geq m_\mathcal{F}(\epsilon,\delta)$ and any probability measure $\mu$,
\begin{eqnarray} \label{eq:sub_bound2}
\Pr \left\{ \sup_{f\in\mathcal{F}} \left|L(f)-\widehat{L}_n(f)\right|\geq \epsilon \right\}\leq \delta.
\end{eqnarray}
The quantity $m_\mathcal{F}(\epsilon,\delta)$ satisfied Eq.~\eqref{eq:sub_bound2} is called the (Glivenko-Cantelli) \emph{sample complexity} of the hypothesis set  $\mathcal{F}$ with accuracy $\epsilon$ and confidence $\delta$. The sample complexity encapsulates the number of samples required to learn a set of functions.

Vapnik studied the relation between the uGC class and learnability \cite{Vap92, Vap95, Vap98} and showed that if a hypothesis set
$\mathcal{F}$ is a uGC class, then it is sufficient for the agnostic PAC
learnability\footnote{Agnostic PAC learnable is also called \emph{learnable with ERM}, or we can say that the ERM algorithm is \emph{consistent}. Recent works consider the \emph{stability} issues of the learning algorithm as one of the criterion of learnability. However, in this paper we do not deal with issues of stability and refer interested readers to Refs.~\cite{SSS+10, VRP13} and the references therein.}.

\begin{theo}
	[Uniform Convergence \protect{\cite[Corollary 4.4]{SB14}}] \label{theo:uniform}
	A training data set $\mathsf{Z}_n$ is called $\epsilon$-representative (with respect to domain $\mathcal{Z}$, hypothesis set $\mathcal{F}$, loss function $\ell$, and distribution $\mu$) if
	\[
	\forall f\in\mathcal{F},\quad \left|\widehat{L}_n(f)-L(f)\right|\leq\epsilon.
	\]
	Then, for every $\epsilon,\delta\in(0,1)$ and every probability distribution $\mu$ over $\mathcal{Z}$, a uGC class $\mathcal{F}$ that guarantees an $\epsilon/2$-representative set with probability of at least $1-\delta$ is agnostic PAC learnable.
	Furthermore, the ERM algorithm is an agnostic PAC learner for $\mathcal{F}$.
\end{theo}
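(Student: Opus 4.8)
The plan is a short two-step argument: first show that on the event that the drawn sample is $\epsilon/2$-representative the ERM output already meets the PAC guarantee, and then invoke the uGC property of $\mathcal{F}$ (and its finite Glivenko--Cantelli sample complexity) to ensure that event has probability at least $1-\delta$.

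For the first step, fix $\epsilon,\delta\in(0,1)$ and a distribution $\mu$, and suppose the sample $\mathsf{Z}_n$ is $\epsilon/2$-representative, i.e.\ $|\widehat{L}_n(f)-L(f)|\le\epsilon/2$ for every $f\in\mathcal{F}$. Let $f_n$ be any ERM hypothesis as in Eq.~\eqref{eq:ERM}. For an arbitrary $\eta>0$ pick $g\in\mathcal{F}$ with $L(g)\le\inf_{f\in\mathcal{F}}L(f)+\eta$; crucially the representativeness event does not depend on $g$, so this choice is harmless. Chaining three elementary bounds---representativeness applied to $f_n$, the ERM optimality $\widehat{L}_n(f_n)\le\widehat{L}_n(g)$, and representativeness applied to $g$---gives
\[
L(f_n)\ \le\ \widehat{L}_n(f_n)+\tfrac{\epsilon}{2}\ \le\ \widehat{L}_n(g)+\tfrac{\epsilon}{2}\ \le\ L(g)+\epsilon\ \le\ \inf_{f\in\mathcal{F}}L(f)+\epsilon+\eta .
\]
Letting $\eta\downarrow 0$ yields $L(f_n)\le\inf_{f\in\mathcal{F}}L(f)+\epsilon$ on the representativeness event.

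For the second step, since $\mathcal{F}$ is a uGC class its sample complexity is finite, so there is an integer $m_\mathcal{F}(\epsilon,\delta)$ such that for all $n\ge m_\mathcal{F}(\epsilon,\delta)$ and all $\mu$ one has $\Pr\{\sup_{f\in\mathcal{F}}|\widehat{L}_n(f)-L(f)|\ge\epsilon/2\}\le\delta$; equivalently, a sample of size $n$ is $\epsilon/2$-representative with probability at least $1-\delta$. Combining with the first step, for every $n\ge m_\mathcal{F}(\epsilon,\delta)$ the ERM hypothesis $\hat f=f_n$ satisfies $L(\hat f)\le\inf_{f\in\mathcal{F}}L(f)+\epsilon$ with probability at least $1-\delta$, which is exactly Definition~\ref{defn:PAC} with ERM as the learning algorithm.

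There is essentially no hard step here; the only points requiring a little care are the bookkeeping of the two $\epsilon/2$ budgets (one spent bounding $L(f_n)$ by $\widehat{L}_n(f_n)$, one bounding $\widehat{L}_n(g)$ by $L(g)$) and the observation that, because the uniform-deviation event is independent of the comparison hypothesis $g$, one may pass to the infimum over $\mathcal{F}$ even when it is not attained.
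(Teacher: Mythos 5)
Your proof is correct and follows essentially the same route as the result the paper relies on: the paper imports this theorem from \cite[Corollary~4.4]{SB14} without giving its own proof, and your two-step argument (on the $\epsilon/2$-representativeness event chain representativeness at $f_n$, ERM optimality, and representativeness at an $\eta$-approximate minimizer $g$, then use the finite Glivenko--Cantelli sample complexity of the uGC class to make that event hold with probability at least $1-\delta$) is exactly the standard argument behind that citation. The bookkeeping of the two $\epsilon/2$ budgets and the passage to the infimum via $\eta\downarrow 0$ are handled correctly, so nothing further is needed.
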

As a result, we consider the generalisation error $\epsilon(n,\mathcal{F})$ and the sample complexity $m_\mathcal{F}(\epsilon,\delta)$ of the hypothesis set $\mathcal{F}$ as the performance criterion to investigate whether the underlying learning problem is agnostic PAC learnability.

In summary, the fundamental problems in ML are two-fold. The first is under what conditions the machine is agnostic PAC learnable.
Secondly, the sample complexity determines the rate of the uniform convergence and the information-theoretic efficiency of the hypothesis set $\mathcal{F}$.
In the next subsection, several complexity measures are introduced to characterise the ``richness" or ``effective size'' of the hypothesis set.
In Appendix \ref{SC}, we show that the sample complexity can be further expressed in terms of the complexity measures.


\subsection{Measures of Sample Complexity} \label{MC}
As discussed before, we are interested in the parameters which effectively measure the size of a given hypothesis set. There are some well-known measures of (information) complexity\footnote{The complexity measures introduced in this section and the generalisation bounds derived in Section \ref{SC} are information-theoretic in the sense that the learning algorithms are based on the agnostic PAC model regardless of the computational resources.} of the function class: \emph{combinatorial parameters}, \emph{covering numbers}, and \emph{Rademacher complexity}.

The first combinatorial parameter---\emph{Vapnik-Chervonenkis (VC) dimension}---was introduced by Vapnik and Chervonenkis \cite{VC71} for learning Boolean functions.

\begin{defn}[VC Dimension]
Let $\mathcal{F}$ be a set of $\{0,1\}$-valued functions on a domain $\mathcal{X}$. We say that $\mathcal{F}$ shatters a set $\{x_1,\ldots, x_n\}\subseteq \mathcal{X}$ if for every subset $B\subseteq \{1,\ldots,n\}$ there exists a function $f_B\in\mathcal{F}$ for which $f_B(x_i)=1$ if $i\in B$, and $f_B(x_i)=0$ if $i\notin B$. Let
\[
\text{VCdim}(\mathcal{F})=\sup \left\{ |\mathcal{S}| :\, \mathcal{S}\subseteq \mathcal{X},\,\mathcal{S} \text{ is shattered by } \mathcal{F}\right\}.
\]
The \emph{VC dimension} of $\mathcal{F}$ (on the domain $\mathcal{X}$) is denoted as VCdim$(\mathcal{F})$. 
\end{defn}

Pollard \cite{Pol84} generalised the concept of VC dimension and introduced the \emph{pseudo dimension} to quantify the sample complexity of a real-valued function class. The parameterised version of Pollard's pseudo-dimension is the \emph{scale-sensitive dimension} (also called the \emph{fat-shattering dimension}) introduced by Kearns and Schapire \cite{KS94}.

\begin{defn}[Pseudo Dimension] \label{defn_pseudo}
Let $\mathcal{F}$ be a set of real-valued functions on a domain $\mathcal{X}$.
We say a set $\mathcal{S}=\{x_1,\ldots,x_n\}\subseteq \mathcal{X}$ is pseudo-shattered by $\mathcal{F}$ if there exists a set $\{\alpha_i\}_{i=1}^n$ such that for every $B\subseteq \{1,\ldots,n\}$ there is some function $f_B\in\mathcal{F}$ for which $f_B(x_i)\geq \alpha_i$ if $i\in B$, and $f_B(x_i)< \alpha_i$ if $i\notin B$. Define the \emph{pseudo dimension} of $\mathcal{F}$ as
\[
\text{Pdim}(\mathcal{F})=\sup \left\{ |\mathcal{S}| :\, \mathcal{S}\subseteq \mathcal{X},\,\mathcal{S} \text{ is pseudo-shattered by } \mathcal{F}\right\}.
\]
$f_B$ is called the shattering function of the set $\mathcal{S}$. 
\end{defn}
There is a desirable property of the pseudo dimension that will be useful in our main theorems.

\begin{theo}[Pollard \cite{Pol84}] \label{theo:pseudo}
 \hfill
 \begin{itemize}
\item[(i)] If $\mathcal{F}$ is a vector space of real-valued functions then Pdim$(\mathcal{F})=$ dim$(\mathcal{F})$.
\item[(ii)] If $\mathcal{F}$ is a subset of a vector space $\mathcal{F'}$ of real-valued functions then Pdim$(\mathcal{F})\leq$ dim$(\mathcal{F'})$.
\end{itemize}
\end{theo}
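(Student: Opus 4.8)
The plan is to first dispose of part (ii) as a formal consequence of part (i). Pseudo-dimension is monotone under inclusion of hypothesis classes: if $\mathcal{S}$ is pseudo-shattered by $\mathcal{F}$ with shattering functions $f_B$, then the very same $f_B\in\mathcal{F}\subseteq\mathcal{F'}$ witness that $\mathcal{S}$ is pseudo-shattered by $\mathcal{F'}$. Hence $\text{Pdim}(\mathcal{F})\leq\text{Pdim}(\mathcal{F'})=\dim(\mathcal{F'})$ once (i) is available (the assertion being vacuous if $\dim\mathcal{F'}=\infty$). So all the content sits in (i), and I would treat the finite-dimensional case $d:=\dim\mathcal{F}<\infty$, the infinite case following by exhausting $\mathcal{F}$ with finite-dimensional subspaces of arbitrarily large dimension.

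For the lower bound $\text{Pdim}(\mathcal{F})\geq d$ I would fix a basis $f_1,\dots,f_d$ of $\mathcal{F}$ and look at the ``feature map'' $x\mapsto(f_1(x),\dots,f_d(x))\in\mathbb{R}^d$. Its range spans $\mathbb{R}^d$, since otherwise some nonzero $(c_1,\dots,c_d)$ would annihilate the range, i.e.\ $\sum_j c_j f_j\equiv 0$, contradicting linear independence. Therefore one can choose points $x_1,\dots,x_d\in\mathcal{X}$ whose feature vectors form a basis of $\mathbb{R}^d$, which makes the evaluation map $L\colon\mathcal{F}\to\mathbb{R}^d$, $L(f)=(f(x_1),\dots,f(x_d))$, surjective. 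Taking all witnesses $\alpha_i=0$, for every $B\subseteq\{1,\dots,d\}$ surjectivity yields some $f_B\in\mathcal{F}$ with $f_B(x_i)=1$ for $i\in B$ and $f_B(x_i)=-1$ for $i\notin B$; thus $\{x_1,\dots,x_d\}$ is pseudo-shattered.

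The substantive part, and the step I expect to be the main obstacle, is the upper bound $\text{Pdim}(\mathcal{F})\leq d$, which I would prove by contradiction. Suppose $\mathcal{S}=\{x_1,\dots,x_{d+1}\}$ is pseudo-shattered with witnesses $\alpha_1,\dots,\alpha_{d+1}$. The evaluation map $L\colon\mathcal{F}\to\mathbb{R}^{d+1}$ has image of dimension $\leq d<d+1$, so there is a nonzero $\lambda=(\lambda_1,\dots,\lambda_{d+1})$ with $\sum_i\lambda_i f(x_i)=0$ for all $f\in\mathcal{F}$. The key move is to test this identity against the two shattering functions indexed by $B_+=\{i:\lambda_i>0\}$ and $B_-=\{i:\lambda_i<0\}$. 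For $f_{B_+}$ each term $\lambda_i\bigl(f_{B_+}(x_i)-\alpha_i\bigr)$ is $\geq 0$ (if $\lambda_i>0$ then $i\in B_+$ so $f_{B_+}(x_i)\geq\alpha_i$; if $\lambda_i<0$ then $i\notin B_+$ so $f_{B_+}(x_i)<\alpha_i$; if $\lambda_i=0$ the term vanishes), and summing with $\sum_i\lambda_i f_{B_+}(x_i)=0$ gives $\sum_i\lambda_i\alpha_i\leq 0$; symmetrically $f_{B_-}$ makes every term $\leq 0$ and yields $\sum_i\lambda_i\alpha_i\geq 0$. Hence $\sum_i\lambda_i\alpha_i=0$, which forces both same-signed sums to vanish termwise. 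But then any $i$ with $\lambda_i<0$ gives $f_{B_+}(x_i)=\alpha_i$, contradicting the strict inequality $f_{B_+}(x_i)<\alpha_i$ demanded because $i\notin B_+$; and any $i$ with $\lambda_i>0$ gives $f_{B_-}(x_i)=\alpha_i$ against $f_{B_-}(x_i)<\alpha_i$. So $\lambda$ has no nonzero entry, a contradiction, and no $d+1$ points can be pseudo-shattered. The delicate points to get right are the bookkeeping of the strict-versus-nonstrict inequalities in the definition of pseudo-shattering, and the realization that it is exactly one linear-dependence vector $\lambda$ that must be played against both index sets $B_+$ and $B_-$.
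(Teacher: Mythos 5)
Your proof is correct, including the delicate strict-versus-nonstrict bookkeeping in the upper bound and the use of a single dependence vector $\lambda$ against both index sets $B_+$ and $B_-$. Note that the paper itself offers no proof of this statement: Theorem \ref{theo:pseudo} is simply quoted from Pollard's book, so there is no internal argument to compare yours against. What you have written is essentially the standard argument (the same linear-algebraic duality trick used classically for the VC dimension of linear threshold classes, adapted to pseudo-shattering): the lower bound via surjectivity of the evaluation map on $d$ points with witnesses $\alpha_i=0$, and the upper bound by testing a vector orthogonal to the image of the evaluation map on $d+1$ points against the shattering functions for $B_+$ and $B_-$. The reduction of (ii) to (i) by monotonicity, and the treatment of the infinite-dimensional case by exhaustion, are also fine.
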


\begin{defn}[Fat-Shattering Dimension] \label{defn:fat}
Let $\mathcal{F}$ be a set of real-valued functions on a domain $\mathcal{X}$.
For every $\epsilon>0$, a set $\mathcal{S}=\{x_1,\ldots,x_n\}\subseteq \mathcal{X}$ is said to be $\epsilon$-shattered by the $\mathcal{F}$ if there exists a set $\{\alpha_i\}_{i=1}^n\subset \mathbb{R}$ such that for every $B\subseteq \{1,\ldots,n\}$ there is some function $f_B\in\mathcal{F}$ for which $f_B(x_i)\geq \alpha_i+\epsilon$ if $i\in B$, and $f_B(x_i)< \alpha_i-\epsilon$ if $i\notin B$. Define the \emph{fat-shattering dimension} of $\mathcal{F}$ on the domain $\mathcal{X}$ as
\[
\text{fat}_\mathcal{F}(\epsilon,\mathcal{X})=\sup \left\{ |\mathcal{S}| :\, \mathcal{S}\subseteq \mathcal{X},\,\mathcal{S} \text{ is $\epsilon$-shattered by } \mathcal{F}\right\}.
\]
$f_B$ is called the shattering function of the set $B$ and the set $\{\alpha_i\}_{i=1}^n$ is called a witness to the $\epsilon$-shattering. When the underlying space is clear, we denote it by fat$_\mathcal{F}(\epsilon)$.
If the witness set $\{\alpha_i\}$ are all equal to a constant, we call it as the \emph{level fat-shattering dimension}, $\underline{\text{fat}}_\mathcal{F}(\epsilon)$.
\end{defn}

In Ref. \cite{AB99}, a relationship between the fat-shattering dimension and the pseudo-dimension can be given.
\begin{theo}[Anthony and Bartlett \cite{AB99}] \label{theo:AB99}
Let $\mathcal{F}$ be a set of real-valued functions. Then:
\begin{itemize}
\item[(i)] For all $\epsilon>0$, fat$_\mathcal{F}(\epsilon)\leq \text{Pdim}(\mathcal{F})$.
\item[(ii)] If a finite set $\mathcal{S}$ is pseudo-shattered then there is $\epsilon_0$ such that for all $\epsilon>\epsilon_0$, $\mathcal{S}$ is $\epsilon$-shattered.
\item[(iii)] The function fat$_\mathcal{F}(\epsilon)$ is non-increasing with $\epsilon$.
\item[(iv)] Pdim$(\mathcal{F})=\lim_{\epsilon\downarrow0}\text{fat}_\mathcal{F}(\epsilon)$ (where both sides may be infinite).
\end{itemize}
\end{theo}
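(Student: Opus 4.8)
The plan is to derive all four parts directly from Definitions \ref{defn_pseudo} and \ref{defn:fat}, with essentially all of the content sitting in part (ii). Parts (i) and (iii) are just the observation that $\epsilon$-shattering is a monotone strengthening of pseudo-shattering. For (i), I would take a set $\mathcal{S}=\{x_1,\ldots,x_n\}$ that is $\epsilon$-shattered with witness $\{\alpha_i\}$ and shattering functions $\{f_B\}$; since $\epsilon>0$, the inequalities $f_B(x_i)\geq\alpha_i+\epsilon$ for $i\in B$ and $f_B(x_i)<\alpha_i-\epsilon$ for $i\notin B$ immediately give $f_B(x_i)\geq\alpha_i$ and $f_B(x_i)<\alpha_i$, so the same data pseudo-shatter $\mathcal{S}$; hence every $\epsilon$-shattered set is pseudo-shattered and $\text{fat}_\mathcal{F}(\epsilon)\leq\text{Pdim}(\mathcal{F})$. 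For (iii), if $0<\epsilon_1\leq\epsilon_2$ and $\{\alpha_i\},\{f_B\}$ witness an $\epsilon_2$-shattering of $\mathcal{S}$, then $\alpha_i+\epsilon_2\geq\alpha_i+\epsilon_1$ and $\alpha_i-\epsilon_2\leq\alpha_i-\epsilon_1$ show the same data witness an $\epsilon_1$-shattering, so $\text{fat}_\mathcal{F}(\epsilon_1)\geq\text{fat}_\mathcal{F}(\epsilon_2)$.

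The heart of the matter is (ii), where I would use the finiteness of $\mathcal{S}=\{x_1,\ldots,x_n\}$ crucially. Fix a pseudo-shattering witness $\{\alpha_i\}$ and, for each of the $2^n$ subsets $B\subseteq\{1,\ldots,n\}$, a shattering function $f_B$. For each coordinate $i$ set $a_i:=\min_{B:\,i\in B}f_B(x_i)$ and $b_i:=\max_{B:\,i\notin B}f_B(x_i)$; these extrema are attained because only finitely many functions appear, and the defining inequalities of pseudo-shattering force $b_i<\alpha_i\leq a_i$, so $\gamma_i:=\tfrac12(a_i-b_i)>0$. Re-centre the witness at $\alpha_i':=\tfrac12(a_i+b_i)$ and set $\epsilon_0:=\tfrac12\min_i\gamma_i>0$. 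Then for every $B$ and every $\epsilon$ with $0<\epsilon\leq\epsilon_0$ one has $f_B(x_i)\geq a_i=\alpha_i'+\gamma_i\geq\alpha_i'+\epsilon$ when $i\in B$, and $f_B(x_i)\leq b_i=\alpha_i'-\gamma_i<\alpha_i'-\epsilon$ when $i\notin B$ (the last inequality strict because $\epsilon\leq\epsilon_0<\gamma_i$), so the functions $\{f_B\}$ witness an $\epsilon$-shattering of $\mathcal{S}$ with the new witness $\{\alpha_i'\}$. Thus $\mathcal{S}$ is $\epsilon$-shattered for every $\epsilon$ below the positive threshold $\epsilon_0$, which, together with (iii), is the form of (ii) that part (iv) invokes. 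The one subtlety to flag is the built-in asymmetry of Definition \ref{defn_pseudo}, where $f_B(x_i)\geq\alpha_i$ for $i\in B$ is non-strict: it is exactly to absorb a possible equality $f_B(x_i)=\alpha_i$ that the witness must be shifted from $\alpha_i$ to $\alpha_i'$.

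Part (iv) then follows by assembling the previous three. By (iii) the map $\epsilon\mapsto\text{fat}_\mathcal{F}(\epsilon)$ is non-increasing, so $\lim_{\epsilon\downarrow0}\text{fat}_\mathcal{F}(\epsilon)=\sup_{\epsilon>0}\text{fat}_\mathcal{F}(\epsilon)$ exists in $\mathbb{N}\cup\{\infty\}$, and by (i) it is at most $\text{Pdim}(\mathcal{F})$. Conversely, for any finite $n\leq\text{Pdim}(\mathcal{F})$ there is a pseudo-shattered set of cardinality $n$, which by (ii) is $\epsilon$-shattered for all sufficiently small $\epsilon>0$, so $\text{fat}_\mathcal{F}(\epsilon)\geq n$ for such $\epsilon$ and hence $\lim_{\epsilon\downarrow0}\text{fat}_\mathcal{F}(\epsilon)\geq n$; letting $n$ increase to $\text{Pdim}(\mathcal{F})$ (or to $\infty$ when $\text{Pdim}(\mathcal{F})$ is infinite) gives the reverse bound, so the two quantities coincide. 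I expect (ii) to be the only step needing genuine care---namely the choice of the shifted witness $\alpha_i'$ and of the threshold $\epsilon_0$, and the appeal to finiteness that guarantees $\gamma_i>0$---whereas (i) and (iii) are immediate from the definitions and (iv) is pure bookkeeping.
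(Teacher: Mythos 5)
The paper does not prove this theorem at all: it is quoted verbatim (modulo a typo) from Anthony and Bartlett \cite{AB99} and used as a black box, so there is no internal proof to compare against. Your argument is correct and is essentially the standard textbook proof of that result: (i) and (iii) follow because $\epsilon$-shattering with a given witness immediately yields pseudo-shattering, respectively $\epsilon'$-shattering for any $0<\epsilon'\leq\epsilon$, with the same witness; your treatment of (ii) --- taking $a_i=\min_{B\ni i}f_B(x_i)$, $b_i=\max_{B\not\ni i}f_B(x_i)$, using finiteness of $\mathcal{S}$ (hence of the $2^{|\mathcal{S}|}$ shattering functions) to get $b_i<\alpha_i\leq a_i$, re-centring the witness at $\alpha_i'=\tfrac12(a_i+b_i)$ and setting $\epsilon_0=\tfrac12\min_i\tfrac12(a_i-b_i)$ --- is exactly the point where the real content lies, and your handling of the strict/non-strict inequality asymmetry is right; (iv) then follows by the bookkeeping you describe, since subsets of pseudo-shattered sets are pseudo-shattered and the limit equals $\sup_{\epsilon>0}\mathrm{fat}_\mathcal{F}(\epsilon)$ by (iii). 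One remark: as printed in the paper, item (ii) reads ``for all $\epsilon>\epsilon_0$,'' which is a transcription slip (it is false as stated, e.g.\ for $[0,1]$-valued classes); the correct statement, and the one in Anthony--Bartlett, is ``for all $\epsilon<\epsilon_0$,'' which is precisely what you prove and precisely the form that part (iv) requires, so your proof should be read as establishing the corrected statement.
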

Note that it is possible for the pseudo-dimension to be infinite, even when the fat-shattering dimension is finite for all positive $\epsilon$.

In addition to the combinatorial parameters bounding the sample complexity, there are other quantities called \emph{covering number} which measure the size of the function class by the finite approximating set. The concept of covering number dates back to Kolmogorov \emph{et al.} \cite{KV61} and has been used in many areas of mathematics.

\begin{defn}[Covering Number] \label{defn:cn}
Let $(Y,d)$ be a metric space and let $\mathcal{F}\subset Y$. For every $\epsilon>0$, the set $\{y_1,\ldots,y_n\}$ is called an $\epsilon$-cover of $\mathcal{F}$ if every $f\in\mathcal{F}$ has some $y_i$ such that $d(f,y_i)<\epsilon$. The covering number $\mathcal{N}(\epsilon,\mathcal{F},d)$ is the minimum cardinality of a $\epsilon$-covering set for $\mathcal{F}$ with respect to the metric $d$.
\end{defn}

To characterise the size of the function class $\mathcal{F}$ in machine learning, we investigate the metrics endowed by the samples; for every sample $\{x_1,\ldots,x_n\}\in\mathcal{X}$, let $\mu_n=n^{-1}\sum_{i=1}^n \delta_{x_i}$ be the empirical measure supported on that sample. For $1\leq p<\infty$ and a function $f$, denote $\|f\|_{L_p (\mu_n)}=\left(n^{-1}\sum_{i=1}^n |f(x_i)|^p\right)^{1/p}$ and $\|f\|_\infty=\max_{1\leq i\leq n}|f(x_i)|$. Then, $\mathcal{N}\left(\epsilon,\mathcal{F},L_p(\mu_n)\right)$ is the covering number of $\mathcal{F}$ at scale $\epsilon$ with respect to the $L_p(\mu_n)$ norm.

\begin{defn}[Entropy Number]
For every class $\mathcal{F}$, $1\leq p\leq \infty$ and $\epsilon>0$, let
\[
\mathcal{N}_p(\epsilon,\mathcal{F},n)=\sup_{\mu_n} \mathcal{N}\left( \epsilon,\mathcal{F},L_p(\mu_n)\right),
\]
and
\[
\mathcal{N}_p(\epsilon,\mathcal{F})=\sup_n\sup_{\mu_n} \mathcal{N}\left( \epsilon,\mathcal{F},L_p(\mu_n)\right).
\]
We call $\log \mathcal{N}_p(\epsilon,\mathcal{F},n)$ the \emph{entropy number} of $\mathcal{F}$ with respect to $L_p(\mu_n)$ and $\log \mathcal{N}_p(\epsilon,\mathcal{F})$ the \emph{uniform entropy number}.
\end{defn}

The significance of the uniform measures of complexity (i.e.~uniform entropy number and combinatorial parameters) lies in that they can characterise the uGC class. However, the bounds are loose. Bartlett and Mendelson \cite{BM02} considered the techniques of concentration of measures for empirical processes and proposed a random average quantity---\emph{Rademacher complexity}, which capture the size of the uGC class more directly and leads to sharp complexity bounds.

\begin{defn}[Rademacher Complexity\footnotemark \cite{KS94, BM02, Men02}]\label{def_RCom}
Let $\mu$ be a probability measure on $\mathcal{X}$ and $\mathcal{F}$ be a set of uniformly bounded functions on $\mathcal{X}$. For every positive integer $n$, define
\[
\mathcal{R}_n(\mathcal{F})=\mathbb{E} \sup_{f\in\mathcal{F}} \frac1{\sqrt{n}}\left| \sum_{i=1}^n\gamma_i f(x_i)\right|,
\]
where $\{x_i\}_{i=1}^n$ are independent random variables distributed according to $\mu$ and $\{\gamma_i\}_{i=1}^n$ independently takes values in $\{-1,+1\}$ with equal probability (which are also independent of $\{x_i\}_{i=1}^n$). The quantity $\mathcal{R}_n(\mathcal{F})$ is called the Rademacher complexity associated with the class $\mathcal{F}$.
\end{defn}

We remark that the complexity measures can be related among each other \cite{Dud67, Sud71,MV03}:
\[
\text{fat}_\mathcal{F}(\epsilon)\lesssim \log\mathcal{N}_2(\epsilon,\mathcal{F},n)\lesssim \frac{\mathcal{R}_n^2(\mathcal{F})}{\epsilon^2} \lesssim \text{fat}_\mathcal{F}(\epsilon)\cdot\log\left(\frac{1}{\epsilon}\right).
\]

\footnotetext{
	Some authors define the Rademacher complexity with the normalisation term as $n$ rather than $\sqrt{n}$. Here we follow the notation used in Ref.~\cite{Men02}, which is more convenient to bound the sample complexity (e.g.~Eq.~\eqref{eq:sc_ra}).
	}

To sum up the results we have presented so far, the complexity measures, such as the combinatorial parameters (e.g.~VC dimension and fat-shattering dimension), covering numbers and the Rademacher complexity of the hypothesis set control the rate of uniform convergence. By computing those quantities of the given hypothesis set and according to Eqs. \eqref{eq:sc_VC}, \eqref{eq:sc_fat}, \eqref{eq:sc_cn}  and \eqref{eq:sc_ra}, we can estimate the bounds on the sample complexity of the learning problems.


\section{The Framework for Learning Quantum Measurements and Quantum States}
\label{Framework}

In this section, we unify the two quantum learning problems at hand into learning linear functionals. In Section \ref{justification}, we justify the proposed quantum learning model in practical situations.

\subsection{Quantum Learning Problems as Linear Functional on Matrices} \label{relation}

Recall that a physical theory aims to predict events observed in the experiments by describing three types of apparatus: preparation, transformation, and measurement. The preparation process of a system can be embodied by a state, while an \emph{effect} is a measurement that produces either `yes' or `no' outcomes in order to observe the physical experiment. However, according to the statistical nature of Quantum Theory, only probabilities of the occurrence can be predicted (counting multiple measurements).
More precisely, assume that a system is prepared in the state $\rho\in\mathcal{Q(H)}$. Then the outcome of every two-outcome measurement $E\in\mathcal{E(H)}$ takes the form of the probability distribution:
\[
f_E(\rho)=\Tr(E\rho)=\left\langle E,\rho \right\rangle\in[0,1].
\]
Note that it is a linear functional on the state space, i.e.~$f_E:\mathcal{Q(H)}\rightarrow \mathbb{R}$. In the theory of learning, such $[0,1]$-valued functions are called \emph{probabilistic concepts} \cite{KS94}.

The following proposition establishes the one-to-one correspondence between $f_E\leftrightarrow E$.

\begin{prop}[The Correspondence between Two-Outcome Measurement and Linear Functional]{\protect{\cite[Prop.~2.30]{Hei12}}}  \label{prop:effect}
	Given a Hilbert space $\mathcal{H}$, let $f_E$ be an effect, i.e.~a linear map from $\mathcal{Q(H)}$ to the interval $[0,1]$.
	Then there exists a bounded operator $E \in\mathcal{E(H)}$ such that
	\[
	f_E(\rho) = \Tr(E\rho) =\left\langle E,\rho \right\rangle \quad \forall \rho\in\mathcal{Q(H)}.
	\]
	Furthermore, the operator $E$ is unique in the following sense. Let $E_1,\,E_2\in\mathcal{E(H)}$. If $\langle \varphi,E_1\varphi\rangle=\langle \varphi,E_2 \varphi\rangle$ for every $|\varphi\rangle\in\mathcal{H}$, then $E_1=E_2$.
\end{prop}
The proposition states that every two-outcome measurement can be identified as a linear functional on the state space.
Consequently, the problem of learning an unknown (two-outcome) quantum measurement is equivalent to learning a real-valued linear functional on quantum states. Here and subsequently, we call an effect to represent either the linear functionals on $\mathcal{Q(H)}$ or the two-outcome measurement $E\in\mathcal{E(H)}$.

Conversely, if the measurement apparatus is chosen as some $E\in\mathcal{E(H)}$, then the measurement outcome of every state $\rho$ is distributed as
\[
f_\rho(E)=\Tr(E\rho)=\left\langle E,\rho \right\rangle\in[0,1]\quad \forall \rho\in\mathcal{Q(H)}.
\]
Therefore, we take the state space as the set of linear functionals on the effect space by the following proposition:

\begin{prop}
	[The Correspondence between Quantum State and Linear Functional on Effect Space]{ \cite{Bus03}} \label{prop:state}
	Given a Hilbert space $\mathcal{H}$, let $f_\rho$ be probability measure on $\mathcal{E(H)}$.
	Then there exists a quantum state $\rho\in\mathcal{Q(H)}$ such that
	\[
	f_\rho(E) = \Tr(E\rho) =\left\langle E,\rho \right\rangle \quad \forall E\in\mathcal{E(H)}.
	\]
	Furthermore, different $\rho_1,\,\rho_2\in\mathcal{Q(H)}$ determines different probability measures, i.e.~there exists an operator $E\in\mathcal{E(H)}$ such that $\Tr(E\rho_1)\neq\Tr(E\rho_2)$.
\end{prop}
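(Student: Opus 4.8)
The plan is to mirror the proof of Proposition~\ref{prop:effect}, exploiting the duality between the effect space $\mathcal{E(H)}$ and the trace-class operators $\mathcal{T(H)}$. First I would unpack the hypothesis: that $f_\rho$ is a probability measure on $\mathcal{E(H)}$ means it takes values in $[0,1]$, is additive on effects, $f_\rho(E_1+E_2)=f_\rho(E_1)+f_\rho(E_2)$ whenever $E_1+E_2\preceq\mathcal{I}$, and is normalised, $f_\rho(\mathcal{I})=1$. Additivity gives rational homogeneity, $f_\rho(qE)=q\,f_\rho(E)$ for $q\in[0,1]\cap\mathbb{Q}$; monotonicity (which follows from positivity: $E_1\preceq E_2$ forces $f_\rho(E_1)\leq f_\rho(E_2)$) then upgrades this to $f_\rho(tE)=t\,f_\rho(E)$ for every real $t\in[0,1]$ by squeezing between rationals. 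Writing an arbitrary self-adjoint $A\in\mathcal{B(H)}$ via its Jordan decomposition as $A=\|A\|_\infty(E_+-E_-)$ with $E_\pm$ effects, one sets $\tilde f(A):=\|A\|_\infty\bigl(f_\rho(E_+)-f_\rho(E_-)\bigr)$, checks this is well defined (independent of the representation) and $\mathbb{R}$-linear on the self-adjoint part, and extends by complexification to a positive linear functional on $\mathcal{B(H)}$.

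Next I would invoke a Riesz-type representation. In the finite-dimensional setting $\mathcal{H}=\mathbb{C}^d$ this is immediate: every linear functional on $\mathbb{M}_d$ is $A\mapsto\Tr(A\sigma)$ for a unique self-adjoint $\sigma$, positivity of $\tilde f$ forces $\sigma\succeq0$, and $\tilde f(\mathcal{I})=f_\rho(\mathcal{I})=1$ forces $\Tr(\sigma)=1$; taking $\rho=\sigma$ gives $f_\rho(E)=\Tr(E\rho)=\langle E,\rho\rangle$ on all effects. For a general $\mathcal{H}$ one uses the $\sigma$-additivity implicit in the word ``measure'' to guarantee normality of $\tilde f$, and then appeals to the fact that normal states on $\mathcal{B(H)}$ are exactly the density operators in $\mathcal{T(H)}$.

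For the ``furthermore'' part---injectivity of $\rho\mapsto f_\rho$---suppose $\rho_1\neq\rho_2$ and set $\Delta=\rho_1-\rho_2$, a nonzero self-adjoint trace-class operator. Since $\Delta$ has a nonzero eigenvalue, there is a unit vector $|\varphi\rangle$ with $\langle\varphi|\Delta|\varphi\rangle\neq0$; then $E=|\varphi\rangle\langle\varphi|\in\mathcal{E(H)}$ and $\Tr(E\rho_1)-\Tr(E\rho_2)=\langle\varphi|\Delta|\varphi\rangle\neq0$, so $f_{\rho_1}\neq f_{\rho_2}$.

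The main obstacle is the first step: turning a $[0,1]$-valued additive functional on effects into a genuine linear functional on all of $\mathcal{B(H)}$. The homogeneity bootstrap---rational homogeneity from additivity, then the monotone squeeze to real homogeneity, then well-definedness of the difference representation---is where the real work sits, and in infinite dimensions this step also shoulders the normality/$\sigma$-additivity hypothesis that rules out singular states. Everything after it is a citation of operator-functional duality together with the elementary separation argument above.
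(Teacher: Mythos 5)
Your proposal is correct in outline, but note that the paper itself offers no proof of this proposition: it is imported wholesale from the cited reference [Bus03] (Busch's effect-algebra version of Gleason's theorem), just as Proposition \ref{prop:effect} is imported from [Hei12]. What you have written is essentially a reconstruction of Busch's own argument: unpack ``probability measure on $\mathcal{E(H)}$'' as a normalised, additive, $[0,1]$-valued map; derive monotonicity and rational homogeneity from additivity, upgrade to real homogeneity by the monotone squeeze; extend to positive operators by homogeneity and to all self-adjoint operators by the difference representation $A=\|A\|_\infty(E_+-E_-)$; then apply the trace-duality/Riesz representation, with positivity giving $\rho\succeq0$ and $f_\rho(\mathcal{I})=1$ giving $\Tr(\rho)=1$. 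Your identification of the genuine work is also accurate: the well-definedness and additivity of the extension (which hinges on first proving additivity of the homogeneous extension on positive operators, since a single operator admits many difference representations), and, in infinite dimensions, the $\sigma$-additivity (normality) hypothesis needed to exclude singular states of $\mathcal{B(H)}$ and land in $\mathcal{T(H)}$ --- this is precisely the content carried by the word ``measure'' in the statement. The injectivity argument via a unit eigenvector of $\Delta=\rho_1-\rho_2$ (nonzero, self-adjoint, trace class, hence compact with a nonzero eigenvalue) and the rank-one effect $E=|\varphi\rangle\langle\varphi|$ is complete and is the standard separation argument. So there is no gap in the approach, only the acknowledged technical details at the extension step that Busch's paper supplies; as far as this paper is concerned, the proposition is a citation, and your sketch faithfully matches the cited proof rather than offering a different route.
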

Similarly, according to the one-to-one correspondence between $\rho\leftrightarrow f_\rho$, learning an unknown quantum state coincides with learning a real-valued linear functional on the effect space.

\subsection{Learning Linear Functionals on Banach Space} \label{Banach}

In the previous section, we establish the relationship between quantum measurements/states and linear functional on matrices. 
By the duality theorem  (see Theorem \ref{theo:duality} below), the two quantum learning problems can be unified into the problem of learning the membership in a Banach space.
{Furthermore, the real-valued function that associates with the target quantity in the Banach space is isomorphic to the linear functional on the input space, i.e.~an element in the dual space of the input space.}
For example, assume the input space is the unit ball of the Schatten $p$-class, i.e.~$\mathcal{X}=S_p^d$.
Then the hypothesis set can be represented as the linear functionals that are polar\footnote{In convex analysis, a \emph{convex body} $K\subset \mathbb{R}^n$ is a convex compact set with nonempty interior. The \emph{gauge} of a convex body $K$, also known as the \emph{Minkowski functional}, is defined by $\|x\|_K:=\inf\{t\geq 0:\, x\in tK\}$. If $K$ is symmetric with respect to the origin ($-K=K$), then $K$ is a unit ball associated with the norm $\|\cdot\|_K$ and the inner product $\langle\cdot,\cdot\rangle$.
	We define the \emph{polar} of $K$ as
	\[
	K^\circ=\left\{ x\in\mathbb{R}^n: \, \sup_{k\in K}\langle k, x\rangle \leq 1\right\}.
	\]
In the symmetric case, $K^\circ$ is the unit ball of the dual space of $(\mathbb{R}^n,\|\cdot\|_K)$. Here, $S_1^d$ is a unit ball of Schatten 1-class and $S_\infty^d$ is a unit ball of Schatten $\infty$-class. Considering the Hilbert-Schmidt inner product,	$S_1^d$ and $S_\infty^d$ are polar to each other.} to $S_p^d$, i.e. for all $x\in S_p^d$ and $1/p+1/q=1$,
\[
\mathcal{F}=\left\{x\mapsto \left\langle E,x\right\rangle : E\in S_q^d\right\}=\left(S_p^d\right)^\circ.
\]
Under this duality formalism, the problems of estimating the complexity measures of the subset in a Banach space can be transformed into the following question: Whether a set of linear functionals is agnostic PAC learnable?

\begin{theo}[Duality of Bounded Operator and Trace class]{\cite[Theorems~19.1 and 19.2]{Con99}} \label{theo:duality}
Fix a Hilbert space $\mathcal{H}$. The map $E\mapsto f_E$ is an isometric isomorphism from the space of bounded operators, $\mathcal{B(H)}$, to the dual space of the set of trace classes operators, $\mathcal{T(H)^*}$.
	Conversely,  the map $\rho\mapsto f_{\rho}$ is an isometric isomorphism from $\mathcal{T(H)}$ to $\mathcal{B(H)^*}$.
\end{theo}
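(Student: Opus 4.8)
The plan is to prove Theorem~\ref{theo:duality} as a standard duality statement about Schatten classes, following the classical route for $\ell_1$--$c_0$ duality but in the operator setting. First I would set up the pairing: for a trace class operator $\rho\in\mathcal{T(H)}$ and a bounded operator $E\in\mathcal{B(H)}$, the quantity $f_E(\rho):=\Tr(E\rho)$ is well defined because the product of a bounded operator and a trace class operator is again trace class, and $|\Tr(E\rho)|\leq \|E\|_\infty\|\rho\|_1$. This immediately shows $f_E\in\mathcal{T(H)^*}$ with $\|f_E\|\leq\|E\|_\infty$, so $E\mapsto f_E$ is a norm-nonincreasing linear map into $\mathcal{T(H)^*}$. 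The bulk of the work is to show it is onto and norm-preserving.

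For the isometry, I would exhibit a state (or near-state) $\rho$ that nearly saturates the bound: given $E$ and $\varepsilon>0$, pick a unit vector $|\varphi\rangle$ with $\|E|\varphi\rangle\|\geq\|E\|_\infty-\varepsilon$, and test against the rank-one operator $\rho=|\varphi\rangle\langle\psi|$ with $|\psi\rangle$ chosen so that $\langle\psi|E|\varphi\rangle=\|E|\varphi\rangle\|$; since $\|\rho\|_1=1$ this gives $|f_E(\rho)|\geq\|E\|_\infty-\varepsilon$, hence $\|f_E\|=\|E\|_\infty$. For surjectivity, given a bounded functional $\Lambda\in\mathcal{T(H)^*}$, define a sesquilinear form on $\mathcal{H}\times\mathcal{H}$ by $B(\varphi,\psi):=\Lambda(|\psi\rangle\langle\varphi|)$; it is bounded because $\||\psi\rangle\langle\varphi|\|_1=\|\varphi\|\|\psi\|$, so by the Riesz representation for bounded sesquilinear forms there is a bounded operator $E$ with $B(\varphi,\psi)=\langle\psi|E|\varphi\rangle=\Tr(E|\psi\rangle\langle\varphi|)=f_E(|\psi\rangle\langle\varphi|)$. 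Then $\Lambda$ and $f_E$ agree on all finite-rank operators, and since finite-rank operators are $\|\cdot\|_1$-dense in $\mathcal{T(H)}$ and both functionals are $\|\cdot\|_1$-continuous, $\Lambda=f_E$. This proves the first isomorphism.

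For the converse statement, that $\rho\mapsto f_\rho$ with $f_\rho(E)=\Tr(E\rho)$ is an isometric isomorphism from $\mathcal{T(H)}$ onto $\mathcal{B(H)^*}$, I would note one direction is immediate from the same estimate $|\Tr(E\rho)|\leq\|\rho\|_1\|E\|_\infty$, giving $\|f_\rho\|\leq\|\rho\|_1$; the reverse inequality uses the polar decomposition $\rho=U|\rho|$ and the identity $\|\rho\|_1=\Tr(|\rho|)=\Tr(U^*\rho)$ with $\|U^*\|_\infty\leq 1$, so testing $f_\rho$ against $E=U^*$ recovers the trace norm. Injectivity is the uniqueness clause, which is exactly Proposition~\ref{prop:state} (different states give different expectation functionals). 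The only genuinely nontrivial point here is surjectivity onto \emph{all} of $\mathcal{B(H)^*}$: in infinite dimensions $\mathcal{B(H)^*}$ is strictly larger than $\mathcal{T(H)}$, so this clause is true only if one interprets $\mathcal{B(H)^*}$ as the predual pairing or restricts to the ultraweakly continuous functionals --- a subtlety I would flag, and in finite dimension ($\mathcal{H}=\mathbb{C}^d$, the case actually used in the paper) it is automatic by dimension count. I expect this last interpretive issue to be the main obstacle; everything else is routine functional analysis, and for the finite-dimensional setting relevant to Sections~\ref{LQE} and \ref{LQS} the whole theorem reduces to the elementary fact that $(S_p^d)^*=S_q^d$ under the Hilbert--Schmidt pairing, which can be cited directly from \cite{Con99}.
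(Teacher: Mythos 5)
The paper gives no proof of this theorem at all --- it is imported verbatim from \cite[Theorems~19.1 and 19.2]{Con99} --- so your argument stands on its own, and it is correct: it is the standard textbook route (essentially Conway's own), namely the H\"older-type bound $|\Tr(E\rho)|\leq\|E\|_\infty\|\rho\|_1$ for boundedness, rank-one test operators $|\varphi\rangle\langle\psi|$ of unit trace norm for the isometry, the Riesz representation of the bounded sesquilinear form $(\varphi,\psi)\mapsto\Lambda(|\psi\rangle\langle\varphi|)$ together with trace-norm density of the finite-rank operators for surjectivity onto $\mathcal{T(H)}^*$, and the polar decomposition $\rho=U|\rho|$ tested against $E=U^\dagger$ for the reverse inequality $\|f_\rho\|\geq\|\rho\|_1$. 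You are also right to flag the one genuine imprecision, which is in the statement itself rather than in your proof: for infinite-dimensional $\mathcal{H}$ the map $\rho\mapsto f_\rho$ identifies $\mathcal{T(H)}$ with the dual of the \emph{compact} operators (equivalently, with the ultraweakly continuous part of $\mathcal{B(H)}^*$), while $\mathcal{B(H)}^*$ properly contains $\mathcal{T(H)}$; the clause as printed is literally true only in finite dimension, which is the only case ($\mathcal{H}=\mathbb{C}^d$, where the theorem reduces to $(S_1^d)^*=S_\infty^d$ and $(S_\infty^d)^*=S_1^d$ under the Hilbert--Schmidt pairing) that the paper actually uses in Sections \ref{LQE} and \ref{LQS}. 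The only nit is a harmless convention slip in the surjectivity step: with the paper's convention $\Tr\left(E|\psi\rangle\langle\varphi|\right)=\langle\varphi|E|\psi\rangle$, so the operator furnished by the Riesz representation should be matched to that matrix element (or replaced by its adjoint); the argument is otherwise unchanged.
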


Mendelson and Schechtman \cite{MS04} first investigated the fat-shattering dimension of sets of linear functionals on Banach space and proposed the following useful result.

\begin{lemm}[Mendelson and Schechtman \cite{MS04}] \label{lemm:Men04}
	The set $\mathcal{S}=\{x_1,\ldots,x_n\}\subset B_X$ is $\epsilon$-shattered by $B_{X^*}$ if and only if $\{x_i\}_{i=1}^n$ are linearly independent and for every $a_1,\ldots,a_n\in\mathbb{R}$,
	\[
	\epsilon\sum_{i=1}^n|a_i|\leq \left\|\sum_{i=1}^n a_i x_i\right\|_{X},
	\]
	where $B_X$ is the unit ball of some Banach space $X$ and $B_{X^*}$ is its dual unit ball.
\end{lemm}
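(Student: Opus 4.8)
The plan is to prove the two directions separately, translating the $\epsilon$-shattering condition (Definition \ref{defn:fat}) into a statement about how the $x_i$ sit inside $B_X$, and then use duality to pass between linear functionals in $B_{X^*}$ and the vectors $x_i$. Throughout, the shattering functions live in $B_{X^*}$, so each $f_B$ is a functional $\phi_B$ with $\|\phi_B\|_{X^*}\le 1$, and we may take the witness $\alpha_i$ to be $0$ for all $i$: indeed, a shift argument (averaging the functionals obtained for a subset $B$ and its complement, using convexity of $B_{X^*}$) lets one reduce to a centered witness, so level $\epsilon$-shattering by $B_{X^*}$ is equivalent to $\epsilon$-shattering by $B_{X^*}$ here. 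Thus the condition becomes: for every sign pattern $\eta\in\{-1,+1\}^n$ there is $\phi_\eta\in B_{X^*}$ with $\eta_i\,\phi_\eta(x_i)\ge \epsilon$ for all $i$.

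For the ``only if'' direction, suppose $\mathcal{S}$ is $\epsilon$-shattered. First I would show linear independence: if $\sum_i c_i x_i=0$ with not all $c_i=0$, pick the sign pattern $\eta_i=\operatorname{sgn}(c_i)$ (breaking ties arbitrarily) and apply $\phi_\eta$ to the relation $\sum_i c_i x_i = 0$ to get $0=\sum_i c_i\phi_\eta(x_i)=\sum_i |c_i|\,\eta_i\phi_\eta(x_i)\ge \epsilon\sum_i|c_i|>0$, a contradiction. Next, for arbitrary $a_1,\dots,a_n\in\mathbb{R}$, choose $\eta_i=\operatorname{sgn}(a_i)$ and the corresponding $\phi_\eta\in B_{X^*}$; then
\[
\epsilon\sum_{i=1}^n|a_i|\le \sum_{i=1}^n a_i\,\eta_i\,\phi_\eta(x_i)\cdot\frac{|a_i|}{a_i}\cdot\frac{a_i}{|a_i|}
\]
— more cleanly, $\epsilon\sum_i|a_i|\le\sum_i |a_i|\,\eta_i\phi_\eta(x_i)=\sum_i a_i\phi_\eta(x_i)=\phi_\eta\!\left(\sum_i a_i x_i\right)\le \|\phi_\eta\|_{X^*}\,\bigl\|\sum_i a_i x_i\bigr\|_X\le \bigl\|\sum_i a_i x_i\bigr\|_X$, which is the desired inequality.

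For the ``if'' direction, assume $\{x_i\}$ are linearly independent and the inequality $\epsilon\sum_i|a_i|\le\bigl\|\sum_i a_i x_i\bigr\|_X$ holds for all scalars. Fix a sign pattern $\eta\in\{-1,+1\}^n$. I want a functional $\phi\in B_{X^*}$ with $\phi(x_i)=\eta_i\epsilon$ (this certainly suffices, witness $0$). Define $\psi$ on the finite-dimensional subspace $Y=\operatorname{span}\{x_1,\dots,x_n\}$ by $\psi(x_i)=\eta_i\epsilon$, which is well-defined by linear independence. For $y=\sum_i a_i x_i\in Y$ we have $|\psi(y)|=\epsilon\,|\sum_i a_i\eta_i|\le \epsilon\sum_i|a_i|\le \bigl\|\sum_i a_i x_i\bigr\|_X=\|y\|_X$, so $\|\psi\|_{Y^*}\le 1$. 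By the Hahn–Banach theorem, extend $\psi$ to $\phi$ on all of $X$ with $\|\phi\|_{X^*}=\|\psi\|_{Y^*}\le 1$, i.e.\ $\phi\in B_{X^*}$. Doing this for each of the $2^n$ sign patterns produces the required family of shattering functionals, so $\mathcal{S}$ is $\epsilon$-shattered by $B_{X^*}$.

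The main obstacle is the bookkeeping in the reduction to a centered witness and making sure the equivalence between ``$\epsilon$-shattered by $B_{X^*}$ with an arbitrary witness'' and ``the stated symmetric inequality'' is airtight — in particular, verifying that one may indeed assume $\alpha_i=0$ without loss of generality, since $B_{X^*}$ is symmetric and convex and the $x_i$ are linearly independent; the rest is a routine Hahn–Banach extension argument combined with the $\operatorname{sgn}$ trick. Everything is purely real Banach-space geometry and does not use the quantum structure; the connection to quantum states and measurements comes only afterwards through Theorem \ref{theo:duality}.
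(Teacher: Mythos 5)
The paper never proves this lemma---it is imported verbatim from Mendelson and Schechtman \cite{MS04}---so there is no internal argument to compare against; your proof is correct and is essentially the standard one from that source. Each step checks out: the centering of the witness at $0$ via $\frac12(\phi_B-\phi_{\bar{B}})$ is legitimate because $B_{X^*}$ is convex and symmetric; the choice $\eta_i=\operatorname{sgn}(a_i)$ together with $\phi_\eta\in B_{X^*}$ yields both linear independence and the inequality $\epsilon\sum_i|a_i|\le\bigl\|\sum_i a_i x_i\bigr\|_X$; and the Hahn--Banach extension of $\psi(x_i)=\eta_i\epsilon$ from $\operatorname{span}\{x_1,\dots,x_n\}$ gives the converse. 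The only quibble is a boundary convention: the paper's Definition \ref{defn:fat} requires the strict inequality $f_B(x_i)<\alpha_i-\epsilon$, while your extension attains the value $-\epsilon$ exactly, so literally you obtain $\epsilon'$-shattering for every $\epsilon'<\epsilon$; this mismatch is inherent in the statement itself and is immaterial to how the lemma is applied in Theorems \ref{theo:effect} and \ref{theo:state}.
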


By restricting the values of the set $\{a_i\}_{i=1}^n$ to $\{+1,-1\}$,
the core idea of Lemma \ref{lemm:Men04} is to calculate the \emph{Rademacher series} on the Banach space, where the $n$ points Rademacher series on $\mathcal{X}$ is defined as $\sum_{i=1}^n \gamma_i x_i$,
where $\{\gamma_i\}_{i=1}^n$ are the symmetric $\{+1,-1\}$-valued random variables.
Additionally, with the following duality formula for the Schatten $p$-norm, we can estimate the range of the linear functional, which will helpful to further derive the complexity measures.

\begin{theo}
	[Duality Formula for $\|A\|_p$]{\cite[Theorem~7.1]{Car10}} \label{theo:Holder}
	For all $p\geq1$, define $q$ by $1/q+1/p=1$. Then for all $A\in\mathbb{M}_d$,
	\[
	\|A\|_p=\sup_{B\in\mathbb{M}_d} \left\{\Tr (B A):\|B\|_q =1\right\}.
	\]
\end{theo}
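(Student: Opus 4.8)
The plan is to prove the two inequalities separately: that $\sup_{\|B\|_q=1}\Tr(BA)\le\|A\|_p$ follows from a trace Hölder inequality, and that the reverse bound is achieved by an explicit maximiser built from the spectral decomposition of $A$. Throughout I would first dispose of the trivial case $A=0$ and assume $A\neq 0$.

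For the upper bound, recall von Neumann's trace inequality: for self-adjoint $A,B\in\mathbb{M}_d$ with singular values $\sigma_1(\cdot)\ge\cdots\ge\sigma_d(\cdot)\ge 0$ one has $|\Tr(BA)|\le\sum_{i=1}^d\sigma_i(B)\,\sigma_i(A)$. Applying the classical Hölder inequality on $\mathbb{R}^d$ to the sequences $(\sigma_i(B))_i$ and $(\sigma_i(A))_i$ with conjugate exponents $q$ and $p$ gives $|\Tr(BA)|\le\|B\|_q\,\|A\|_p$, so the supremum over $\|B\|_q=1$ is at most $\|A\|_p$. (If one wants a self-contained argument, von Neumann's inequality can itself be obtained from the spectral theorem together with a rearrangement argument; since it is a standard fact I would simply cite it.)

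For the matching lower bound I would diagonalise, writing $A=\sum_{i=1}^d\lambda_i\,|v_i\rangle\langle v_i|$ with $\{\lambda_i\}\subset\mathbb{R}$ and $\{v_i\}$ an orthonormal basis, so that $\|A\|_p=\left(\sum_i|\lambda_i|^p\right)^{1/p}$. When $p>1$ (hence $q=p/(p-1)<\infty$) set
\[
B=\frac{1}{\|A\|_p^{\,p-1}}\sum_{i=1}^d\operatorname{sgn}(\lambda_i)\,|\lambda_i|^{\,p-1}\,|v_i\rangle\langle v_i|,
\]
which is self-adjoint and hence lies in $\mathbb{M}_d$. Using $(p-1)q=p$ one checks $\|B\|_q^q=\|A\|_p^{-p}\sum_i|\lambda_i|^p=1$, while $\Tr(BA)=\|A\|_p^{1-p}\sum_i|\lambda_i|^p=\|A\|_p$. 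When $p=1$ (so $q=\infty$) the same works with $B=\sum_i\operatorname{sgn}(\lambda_i)\,|v_i\rangle\langle v_i|$, which satisfies $\|B\|_\infty=1$ and $\Tr(BA)=\sum_i|\lambda_i|=\|A\|_1$. In either case the supremum is at least $\|A\|_p$, and combined with the upper bound this yields equality.

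I do not expect a genuine obstacle here. The only points needing care are the edge cases ($A=0$, and the degenerate behaviour $q=\infty$ when $p=1$, where $\operatorname{sgn}(0)$ may be taken to be $0$), and ensuring that the constructed $B$ is kept self-adjoint so that it is an admissible competitor in $\mathbb{M}_d$. If a fully self-contained treatment is desired, the bulk of the remaining work is the proof of von Neumann's trace inequality, which is the one nontrivial ingredient.
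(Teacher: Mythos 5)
Your proof is correct. Note, however, that the paper contains no proof of this statement at all --- it is quoted directly from \cite[Theorem~7.1]{Car10} --- so there is no internal argument to compare against; your two-sided proof (the upper bound $\Tr(BA)\le\|B\|_q\|A\|_p$ via von Neumann's trace inequality plus scalar H\"{o}lder, and the matching lower bound from the explicit self-adjoint maximiser $B$ built out of the spectral decomposition of $A$, with the $p=1$, $q=\infty$ case treated separately) is precisely the standard argument given in the cited reference, and the edge cases you flag ($A=0$, $\operatorname{sgn}(0)=0$, keeping $B$ self-adjoint so that it lies in $\mathbb{M}_d$) are all handled correctly.
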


The techniques from Mendelson and Schechtman (Lemma \ref{lemm:Men04}) and the duality formula (Theorem \ref{theo:Holder}) can be used to upper bound the fat-shattering dimension and the Rademacher complexity via the Rademacher series. What remains is to compute the Rademacher series on the Banach space for both complexity measures, and we leave the details to Sections \ref{LQE} and \ref{LQS}.

\subsection{The Justification of the Quantum Learning Model} \label{justification}

Before proceeding to derive the complexity measures, we first address two practical issues that may arise in our quantum learning setting:
(1) Only the `yes' (`1') or `no' (`0') outcome can be observed rather than the outcome statistics\footnote{The situation can also occur when only one measurement is performed.}.
(2) The measurement apparatus is not perfect (e.g.~there are measurement errors in the training data set). However, we will show that the sample complexities of the two scenarios remain the same (up to a Lipschitz constant).

~\\
\textbf{The output space consists of binary measurement outcomes rather than measurement statistics.}
In this case, 
the training sample $(X_i,Y_i)$ equals to $(X_i,1)$ with probability $\Tr(\Pi X_i)$, and $(X_i,0)$ with probability $1-\Tr(\Pi X_i)$.
We show that the covering number remains the same as the training sample $(X_i, \Tr(\Pi X_i))$ considered in the quantum machine learning setting. Other complexity measures easily follow by the same argument.
Assume the underlying loss function $\ell_f$ satisfies the Lipschitz condition, i.e.~there exists $L>0$ such that
\begin{eqnarray} \label{eq:Lipschitz}
\left|\ell_f(X,Y)-\ell_g(X,Y)\right|\leq \mathsf{L}\left|f(X)-g(X)\right|.
\end{eqnarray}
By denoting $p_X=\Tr(\Pi X)$, then the expected risk can be expressed as follows
\begin{align*}
L(f)&=\mathbb{E}_\mu \ell_f(X,Y)\\
&=\mathbb{E}_X \mathbb{E}_{Y|X} \ell_f(X,Y)\\
&=\mathbb{E}_X\left[ p_X \ell_f (X,1) + (1-p_X) \ell_f(X,0)\right]\\
&=:\mathbb{E}_X \ell'_f(X,Y).
\end{align*}
In the third equality we use the fact that the `1' (resp.~`0') outcome occurs with probability $p_X=\Tr(\Pi X)$ (resp.~$1-p_X$). In the last line we introduce the induced loss function $\ell'_f(X,Y):=\left[ p_X \ell_f (X,1) + (1-p_X) \ell_f(X,0)\right]$.
Then for all $X\in\mathcal{X}$, the distance between $\ell'_f$ and $\ell'_g$ can be calculated as
\begin{align*}
|\ell'_f(X,Y)-\ell'_g(X,Y)|&=\left|p_X\left(\ell_f(X,1)-\ell_g(X,1)\right)+(1-p_X)\left(\ell_f(X,0)-\ell_g(X,0)\right)\right|\\
&\leq p_X\left|\ell_f(X,1)-\ell_g(X,1)\right|+(1-p_X)\left|\ell_f(X,0)-\ell_g(X,0)\right|\\
&\leq p_X\cdot \mathsf{L}\left|f(X)-g(X)\right|+(1-p_X)\cdot \mathsf{L}\left|f(X)-g(X)\right|\\
&=\mathsf{L}\left|f(X)-g(X)\right|.
\end{align*}
The second inequality follows from the triangle inequality. The next line is due to the Lipschitz condition.
The above relation shows that the distance $|\ell'_f-\ell'_g|$ can be upper bounded by $\mathsf{L}|f-g|$, which is exactly the same as the upper bound for $|\ell_f-\ell_g|$ (see Eq.~\ref{eq:Lipschitz}). Recall Definition \ref{defn:cn}, it is clearly that the covering numbers with respect to the induced loss function and the original loss function are bounded by the same quantity. Therefore, the generalisation error, Eq.~\eqref{eq:sub_bound2} and the sample complexity do not change in this scenario.

~\\
\textbf{There is noise involved in the measurement procedure.} In this  case, we assume that the training sample is $(X,Y+ \mathbf{n})$, where $Y\equiv\Tr(\Pi X)$ and $\mathbf{n}$ is a random variable that models the measurement error.
Following the same reasoning, we can calculate the {expected risk} as follows
\begin{align*}
L(f)&=\mathbb{E}_\mu \ell_f(X,Y+\mathbf{n})\\
&=\mathbb{E}_X \mathbb{E}_\mathbf{n} \ell_f(X,Y+\mathbf{n})\\
&=:\mathbb{E}_X \ell'_f(X,Y).
\end{align*}
In the last line, we let $\ell'_f(X,Y):=\mathbb{E}_{\mathbf{n}}\ell_f(X,Y+\mathbf{n})$. Thus,
\begin{align*}
|\ell'_f(X,Y)-\ell'_g(X,Y)|&=\left|\mathbb{E}_\mathbf{n}\ell_f(X,Y+\mathbf{n})-\mathbb{E}_\mathbf{n}\ell_g(X,Y+\mathbf{n})\right|\\
&\leq \mathsf{L}\left|\mathbb{E}_\mathbf{n}\left[f(X)-g(X)\right]\right|\\
&=\mathsf{L}\left|f(X)-g(X)\right|.
\end{align*}
Therefore, the original complexity measures (which depends on the distance of the loss function) and the induced sample complexity hold the same.

\section{Learning Quantum Measurements} \label{LQE}

In this section, we follow the quantum learning framework presented in Section \ref{Framework} and explicitly show how to derive the upper bound for the fat-shattering dimension, Rademacher complexity and the covering/entropy number. We then discuss how these complexity measures relate to quantum state discrimination.

Recall that, in the problem of learning an unknown quantum measurement, the goal is to learn a fixed but unknown effect $\Pi\in\mathcal{E}(\mathbb{C}^d)$ through the training data set is $\mathsf{Z}_n=\left\{(\rho_i,\Tr(\Pi\rho_i))\right\}_{i=1}^n$, where $\{\rho_i\}_{i=1}^n\in\mathcal{Q}(\mathbb{C}^d)\equiv\mathcal{X}$ distribute independently according to the unknown measure $\mu$. Note that learning $\Pi$ is equivalent to learning a two-outcome POVM $\{\Pi, \mathcal{I}-\Pi\}$. Due to the correspondence between an quantum effect $E\in\mathcal{E}(\mathbb{C}^d)$ and the linear functional $f_E:\rho\mapsto \langle E,\rho\rangle$ on the input space $\mathcal{X}$ (Proposition~\ref{prop:effect}), we consider the hypothesis set that consists of all quantum effects\footnote{
The hypothesis set can be chosen as a subset of the effects space, to which the target effect $\Pi$ may not belong. Then the goal is to choose an effect in the hypothesis set that approximates the target well. We discuss this issue in Section \ref{algorithms}. Also note that we sometimes denote $\mathcal{F}$ as the subset of $\mathcal{E}(\mathbb{C}^d)$ and sometimes denote it as the linear functionals formed by that subset.}; that is,
\[ \label{eq:hypothesis}
\mathcal{F}=\{f_E:\, E\in \mathcal{E}(\mathbb{C}^d)\}.
\]


In the following, we present our main result to the question: ``\emph{how many quantum states are needed to learn a quantum measurement?}" This is exactly the sample complexity problem introduced in Section \ref{FSLT}. To tackle this problem, we have to estimate the complexity measures that characterise the size of the hypothesis set. 

\subsection{The Fat-Shattering Dimension for Learning Quantum Measurements} \label{SCLQE}

Our first step is to use a common trick in convex analysis; namely, ``symmetrisation" of the state space and the effect space, to embed them into a subset of the Banach space.
In other words, the symmetric convex hull of the state space forms a unit ball of Schatten 1-class:
\[
S_1^d:=\text{conv}(-\mathcal{Q}(\mathbb{C}^d)\cup \mathcal{Q}(\mathbb{C}^d)),
\]
where conv$(\cdot)$ denotes the convex hull operation.
Similarly, we have
\[
S_\infty^d:=\text{conv}(-\mathcal{E}(\mathbb{C}^d)\cup \mathcal{E}(\mathbb{C}^d)).
\]
Now the input space $\mathcal{X}\subset S_1^d$ and the hypothesis set $\mathcal{F}$ consists of linear functionals which can be paremeterised by the elements in $S_\infty^d$. That is,
\[
\mathcal{F}=\{f_E:\, E\in S_\infty^d\}.
\]

The main reason for introducing $S_1^d$  and $S_\infty^d$ is that they are unit balls which are \emph{polar} to each other (through the Hilbert-Schmidt inner product). Thus, we can apply Mendelson and Schechtman's result (Lemma \ref{lemm:Men04}) to estimate the fat-shattering dimension.

The following is our main result in this result.

\begin{theo}[Fat-Shattering Dimension for Learning Quantum Measurements] \label{theo:effect}
For all $0<\epsilon<1/2$, and integer $d\geq 2$, we have
\[
\text{Pdim}(\mathcal{E}(\mathbb{C}^d))\leq d^2,
\]
and
\[
\text{fat}_{\mathcal{E}(\mathbb{C}^d)}(\epsilon,\mathcal{Q}(\mathbb{C}^d))=\min\{O(d/\epsilon^2),d^2\}.
\]
\end{theo}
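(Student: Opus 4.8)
The plan is to prove the two upper bounds separately and then intersect them. The bound $\text{Pdim}(\mathcal{E}(\mathbb{C}^d))\le d^2$ is the easy half. The correspondence $E\mapsto f_E$ of Proposition~\ref{prop:effect} is linear and injective, so $\mathcal{F}=\{f_E:E\in\mathcal{E}(\mathbb{C}^d)\}$ is a subset of the real vector space $\{f_E:E\in\mathbb{M}_d\}$ of linear functionals on $\mathbb{M}_d$, which has dimension $d^2$. Pollard's Theorem~\ref{theo:pseudo}(ii) then gives $\text{Pdim}(\mathcal{F})\le d^2$, and Theorem~\ref{theo:AB99}(i) upgrades this to $\text{fat}_{\mathcal{E}(\mathbb{C}^d)}(\epsilon,\mathcal{Q}(\mathbb{C}^d))\le d^2$ for every $\epsilon>0$. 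So the real content is the $O(d/\epsilon^2)$ bound.

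For that I would work in the symmetrised picture of Section~\ref{SCLQE}: $\mathcal{Q}(\mathbb{C}^d)\subset S_1^d$, and $S_\infty^d$ is the dual (polar) unit ball of $S_1^d$ under the Hilbert--Schmidt pairing (Theorem~\ref{theo:duality}), so Lemma~\ref{lemm:Men04} is applicable. The key reduction is a factor-of-two symmetrisation: suppose $\mathcal{S}=\{\rho_1,\dots,\rho_n\}\subset\mathcal{Q}(\mathbb{C}^d)$ is $\epsilon$-shattered by $\mathcal{F}$ with witness $\{\alpha_i\}$, and for $B\subseteq\{1,\dots,n\}$ let $E_B,E_{B^c}\in\mathcal{E}(\mathbb{C}^d)$ be the corresponding shattering effects. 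Then $M_B:=E_B-E_{B^c}$ satisfies $\|M_B\|_\infty\le 1$ (it is a difference of two operators with $\mathcal{O}\preceq\cdot\preceq\mathcal{I}$), while $\Tr(M_B\rho_i)>2\epsilon$ for $i\in B$ and $\Tr(M_B\rho_i)<-2\epsilon$ for $i\notin B$. Hence $\mathcal{S}$ is $2\epsilon$-shattered by $S_\infty^d$, and Lemma~\ref{lemm:Men04} yields
\[
2\epsilon\sum_{i=1}^n |a_i|\;\le\;\Bigl\|\sum_{i=1}^n a_i\rho_i\Bigr\|_1 \qquad\text{for all } a_1,\dots,a_n\in\mathbb{R}.
\]

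Next I would specialise $a_i$ to independent Rademacher signs $\gamma_i$ and average, so the left side becomes $2\epsilon n$ and the right side becomes the expected trace norm of a \emph{Rademacher series} of density operators:
\[
2\epsilon n\;\le\;\mathbb{E}_\gamma\Bigl\|\sum_{i=1}^n \gamma_i\rho_i\Bigr\|_1.
\]
The only genuine estimation is to bound this quantity by $O(\sqrt{dn})$. Two routes work: the elementary one uses $\|\cdot\|_1\le\sqrt d\,\|\cdot\|_2$ on $\mathbb{C}^d$ together with Jensen and the orthogonality of the $\gamma_i$, giving $\mathbb{E}_\gamma\|\sum_i\gamma_i\rho_i\|_1\le\sqrt d\,(\sum_i\|\rho_i\|_2^2)^{1/2}\le\sqrt{dn}$ since $\|\rho_i\|_2\le\|\rho_i\|_1=1$; alternatively the noncommutative Khintchine inequality \cite{LP91} bounds $\mathbb{E}_\gamma\|\sum_i\gamma_i\rho_i\|_1$ by $\bigl\|(\sum_i\rho_i^2)^{1/2}\bigr\|_1\le\Tr\bigl((\sum_i\rho_i)^{1/2}\bigr)\le\sqrt{d\,\Tr(\sum_i\rho_i)}=\sqrt{dn}$, using $\rho_i^2\preceq\rho_i$, operator monotonicity of $\sqrt{\cdot}$, and Cauchy--Schwarz on the spectrum. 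Either way $2\epsilon n\le\sqrt{dn}$, hence $n\le d/(4\epsilon^2)$, i.e.\ $\text{fat}_{\mathcal{E}(\mathbb{C}^d)}(\epsilon,\mathcal{Q}(\mathbb{C}^d))\le d/(4\epsilon^2)$; intersecting with the $d^2$ bound finishes the proof.

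I expect the main difficulty to be bookkeeping rather than deep: getting the symmetrisation and the $\epsilon\leftrightarrow 2\epsilon$ rescaling exactly right, and checking that the hypotheses of Lemma~\ref{lemm:Men04} (in particular that $S_\infty^d$ is precisely the dual unit ball of $S_1^d$, and that the $\rho_i$ are forced to be linearly independent) are in force; the constraint $0<\epsilon<1/2$ enters here, since $f_E$ takes values in $[0,1]$ and no set can be $\epsilon$-shattered by effects for $\epsilon\ge 1/2$. If one also wants the matching lower bound $\text{fat}\gtrsim\min\{d/\epsilon^2,d^2\}$, the harder direction is to run Lemma~\ref{lemm:Men04} backwards: exhibit $n\simeq\min\{d/\epsilon^2,d^2\}$ density operators (for instance $\rho_i\propto\mathcal{I}+c\,W_i$ built from a rescaled basis of generalised Pauli operators) for which $\|\sum_i a_i\rho_i\|_1\ge\epsilon\sum_i|a_i|$ holds for \emph{all} coefficient vectors $a$, not just sign vectors --- and that is exactly the point where the lower-bound half of the noncommutative Khintchine inequality, rather than Cauchy--Schwarz, is indispensable.
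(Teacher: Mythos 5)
Your proof is correct and takes essentially the same route as the paper: Pollard's theorem for $\mathrm{Pdim}\le d^2$, the Mendelson--Schechtman duality lemma (Lemma \ref{lemm:Men04}) applied with Rademacher coefficients, and an $O(\sqrt{nd})$ bound on $\mathbb{E}\bigl\|\sum_i\gamma_i\rho_i\bigr\|_1$ --- the paper uses the noncommutative Khintchine inequality followed by a convex-optimisation step, which matches your second route, while your elementary $\|\cdot\|_1\le\sqrt{d}\,\|\cdot\|_2$ route is an equally valid shortcut. Your explicit $E_B-E_{B^c}$ symmetrisation merely sharpens the constant to $d/(4\epsilon^2)$ and is not needed for the stated $O(d/\epsilon^2)$ bound, since $\mathcal{Q}(\mathbb{C}^d)\subset S_1^d$ and $\mathcal{E}(\mathbb{C}^d)\subset S_\infty^d$ already allow the lemma to be applied directly, exactly as the paper does.
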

\begin{proof}
We first present the outline of the proof.
According to the definition of the fat-shattering dimension, it follows that the function fat$_\mathcal{F}(\epsilon)$ is non-increasing with $\epsilon$. Hence, our first objective is to check whether the fat-shattering dimension is unbounded. Equivalently, it suffices to find the pseudo dimension which bounds the fat-shattering dimension (Theorem \ref{theo:AB99}). Secondly, assume there is a set of $n$ points that can be $\epsilon$-shattered; we will find an inequality to relate $n$ with $\epsilon$, which will prove our claim.

(i) Pseudo Dimension: Since $\mathbb{M}_d$ is a vector space with dimension $d^2$ and $S_\infty^d$ is a subset of $\mathbb{M}_d$, we can embed $S_\infty^d$ into a real vector space of dimension $d^2$. Since the function class $\mathcal{F}$ is a subset of a $d^2$-dimensional vector space, by Theorem \ref{theo:pseudo} we obtain Pdim$(\mathcal{F})\leq d^2$.

(ii) Fat-Shattering Dimension:
Consider any set  $\mathcal{S}=\{x_1,\ldots,x_n\}\subset S_1^d$ is $\epsilon$-shattered by $S_\infty^d$, where $n\leq d^2$. 
Denote a \emph{Rademacher series} as $\sum_{i=1}^n \gamma_i x_i$, where $\{\gamma_i\}_{i=1}^n$ are independent and uniform $\{+1,-1\}$ random variables (also called Rademacher random variables). By selecting $a_i=\gamma_i$ in Lemma \ref{lemm:Men04}, we have
\begin{eqnarray} \label{eq:epsilon}
\epsilon n\leq\left\|\sum_{i=1}^n \gamma_i x_i\right\|_1.
\end{eqnarray}
We adopt a probabilistic method to upper bound the right-hand side of Eq.~\eqref{eq:epsilon}. If we can find a quantity $C(n,d)$ that upper bounds $\mathbb{E}\left\|\sum_{i=1}^n \gamma_i x_i\right\|_1$, then there is a realization of $\{\gamma_i\}_{i=1}^n$ such that $\left\|\sum_{i=1}^n \gamma_i x_i\right\|_1\leq C(n,d)$. As a result, it remains to find an upper bound for the expected norm of the Rademacher series $\mathbb{E}\left\|\sum_{i=1}^n \gamma_i x_i\right\|_1$.

In order to upper bound the Rademacher series, we need the powerful \emph{Noncommutative Khintchine Inequalities} \cite{LP91}:

\begin{prop}[Noncommutative Khintchine Inequalities \cite{LP91, HM07}] \label{theo:NKI}
	Let $\{x_i\}_{i=1}^n$ be deterministic $d\times d$ matrices, $\{\gamma_i\}_{i=1}^n$ be independent Rademacher random variables.
	Then
	\[
	\mathbb{E}\left\|\sum_{i=1}^n \gamma_i x_i \right\|_{p}\approx_p
	\left\{\begin{matrix} \left(\|(\sum_{i=1}^n x_i
	x_i^\dagger)^{1/2}\|_{p}^p+\|(\sum_{i=1}^n x_i^\dagger
	x_i)^{1/2}\|_{p}^p\right)^{1/p},\quad\text{if}\quad 2\leq p < \infty \\
	\inf_{x_i=a_i+b_i}\left(\|(\sum_{i=1}^n a_i
	a_i^\dagger)^{1/2}\|_{p}^p+\|(\sum_{i=1}^n b_i^\dagger
	b_i)^{1/2}\|_{p}^p\right)^{1/p},\quad\text{if}\quad1\leq p \leq 2.
	\end{matrix}\right.
	\]
	where $\approx_p$ means that the equality holds up to an absolute constant depending on $p$, and $^\dagger$ denotes the complex conjugate operation.
	
	Note that Haagerup and Musat \cite{HM07} proved that the result also holds as $\{\gamma_i\}_{i=1}^n$ are independent standard complex Gaussian random variables
\end{prop}

Invoking Proposition~\ref{theo:NKI}, we have
\[
\mathbb{E}\left\|\sum_{i=1}^n\gamma_i x_i\right\|_1\lesssim \left\|\left(\sum_{i=1}^n x_i^2\right)^{1/2}\right\|_1.
\]
Since the square operation preserves $S_1^d$, i.e.~$x_i^2\in S_1^d$, for all $x_i\in S_1^d$, by the convexity of $S_1^d$,
we have $\frac1n \sum_{i=1}^n x_i^2\in S_1^d$.
Then the problem is reduced to finding
\[
\max_{\{x_i\}\in S_1^d} \sqrt{n}\left\|\left(\frac1n\sum_{i=1}^n x_i^2\right)^{1/2}\right\|_1=\max_{x\in S_1^d} \sqrt{n}\|\sqrt{x}\|_1,
\]
which is essentially a convex optimisation problem
\[
\max_{x\in S_1^d}\sqrt{n}\sum_{j=1}^d\sqrt{|\lambda_i|}, \text{
subject to } \sum_{j=1}^d |\lambda_j|=1.
\]
Since the square root is concave, we attain the maximum when $|\lambda_j|=1/d$, for $j=1,\ldots,d$.
That is,
\begin{eqnarray} \label{eq:RS1}
\mathbb{E}\left\|\sum_{i=1}^n\gamma_i x_i\right\|_1\lesssim \max_{x\in S_1^d}\sqrt{n}\sum_{i=1}^d\sqrt{\lambda_i}=\sqrt{nd}.
\end{eqnarray}
Consequently, there is a realization of $\{\gamma_i\}_{i=1}^n$ such that $\left\|\sum_{i=1}^n \gamma_i x_i\right\|_1\leq \sqrt{nd},\, \forall x_i\in S_1^d$.
Combined with Eq.~\eqref{eq:epsilon}, we have $n\leq d/\epsilon^2$ which proves our claim.

\end{proof}

In the following proposition, we will demonstrate that the upper bound is tight.

\begin{prop}
Considering a Hilbert space $\mathbb{C}^d$, there exist infinitely
many sets of $d$ quantum states that can be $1/2$-shattered by the
effect space.
\end{prop}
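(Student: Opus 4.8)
The plan is to exhibit explicit witnesses. By Lemma~\ref{lemm:Men04}, a set $\mathcal{S}=\{x_1,\ldots,x_n\}\subset S_1^d$ is $\epsilon$-shattered by $S_\infty^d$ precisely when the $x_i$ are linearly independent and $\epsilon\sum_{i=1}^n|a_i|\leq\|\sum_{i=1}^n a_i x_i\|_1$ for all real scalars $a_i$. So I want to produce, for the case $n=d$ and $\epsilon=1/2$, a linearly independent family inside the symmetrised state space $S_1^d$ for which $\|\sum_i a_i x_i\|_1\geq\frac12\sum_i|a_i|$, and then argue there is a continuum of such families. The natural first guess is to take the $x_i$ to be (symmetrised) pure states or, even simpler, $x_i=\tfrac12 e_i$ where $e_i$ is the rank-one projector onto the $i$-th vector of some orthonormal basis --- but note $\tfrac12 e_i$ sits in $S_1^d$ since $\|\tfrac12 e_i\|_1=\tfrac12\leq 1$, and with orthogonal supports $\|\sum_i a_i\tfrac12 e_i\|_1=\tfrac12\sum_i|a_i|$, giving equality at $\epsilon=1/2$. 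That already $1/2$-shatters $d$ points; the linear independence is immediate from orthogonality of supports.

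The remaining task is to get \emph{infinitely many} such shattering sets. The cleanest route is to note that the construction above is stable under unitary conjugation: if $\{x_i\}$ is $\tfrac12$-shattered by $S_\infty^d$ and $U$ is any unitary, then $\{U x_i U^\dagger\}$ still lies in $S_1^d$ (the trace norm is unitarily invariant, as is the eigenvalue spectrum), is still linearly independent, and still satisfies $\|\sum_i a_i U x_i U^\dagger\|_1=\|U(\sum_i a_i x_i)U^\dagger\|_1=\|\sum_i a_i x_i\|_1\geq\tfrac12\sum_i|a_i|$. Since the unitary group $U(d)$ is infinite (indeed a positive-dimensional manifold for $d\geq 2$) and distinct unitaries applied to a fixed orthonormal basis give genuinely different sets of projectors (e.g. one can already move a single basis vector through a continuum of directions), we obtain infinitely many --- in fact a continuum of --- distinct $\tfrac12$-shattered sets of size $d$. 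One should check that the map $U\mapsto\{U e_i U^\dagger\}$ is not constant, which is clear for $d\geq 2$.

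I expect the main conceptual point --- rather than a genuine obstacle --- to be making sure the inequality in Lemma~\ref{lemm:Men04} holds for \emph{all} real $a_i$, not merely $\pm 1$-valued ones, so one cannot simply reuse the Rademacher-series computation from the proof of Theorem~\ref{theo:effect}; here the orthogonal-support choice makes $\|\sum_i a_i x_i\|_1$ exactly additive, so this is painless. A secondary point is whether $1/2$ is the right threshold: with $x_i=\tfrac12 e_i$ the constant is exactly $\tfrac12$, so the shattering is with $\epsilon=1/2$ and witness set $\alpha_i=0$ (recalling Definition~\ref{defn:fat}, we need $f_B(x_i)\geq\alpha_i+\epsilon$ for $i\in B$ and $<\alpha_i-\epsilon$ for $i\notin B$, which is achieved by choosing $E_B=\sum_{i\in B}e_i-\sum_{i\notin B}e_i\in S_\infty^d$ so that $\langle E_B,x_i\rangle=\pm\tfrac12$). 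This ties the proposition directly to the $d/\epsilon^2$ upper bound of Theorem~\ref{theo:effect} at $\epsilon=1/2$, confirming tightness up to constants. If one wants the $x_i$ to be bona fide density operators rather than sub-normalised ones, one can instead take $x_i=\tfrac12 e_i+\tfrac{1}{2d}\mathcal{I}$ or simply note that the symmetrised state space $S_1^d$ is the relevant object in the framework, so no adjustment is needed.
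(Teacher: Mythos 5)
There is a genuine gap: the proposition asserts that \emph{quantum states} are $1/2$-shattered by the \emph{effect space} $\mathcal{E}(\mathbb{C}^d)$, but your construction establishes something weaker. Your inputs $x_i=\tfrac12 e_i$ are not states (trace $\tfrac12$), and your explicit shattering operators $E_B=\sum_{i\in B}e_i-\sum_{i\notin B}e_i$ have eigenvalue $-1$, so they are not effects; what Lemma~\ref{lemm:Men04} gives you is $1/2$-shattering by the \emph{symmetrised} class $S_\infty^d$, and since $\mathcal{E}(\mathbb{C}^d)\subsetneq S_\infty^d$, shattering by the larger class does not imply shattering by the effect space. The quantitative obstruction is that effects evaluated on states take values in $[0,1]$, so $1/2$-shattering forces, for each $i$, values separated by $2\epsilon=1$, i.e.\ values (essentially) $1$ and $0$; on your subnormalised $x_i$ effect values lie in $[0,\tfrac12]$, so no effect can $1/2$-shatter them. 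Your first proposed repair, $x_i=\tfrac12 e_i+\tfrac1{2d}\mathcal{I}$, also fails: these states are full rank, so $\Tr(Ex_i)=1$ forces $E=\mathcal{I}$ and $\Tr(Ex_i)=0$ forces $E=\mathcal{O}$, which cannot depend on the subset; with the natural effects $\sum_{j\in B}e_j$ one only gets a gap of $\tfrac12$ (e.g.\ values $\tfrac34$ versus $\tfrac14$ for $d=2$), i.e.\ roughly $1/4$-shattering. Your second repair (``work in $S_1^d$, no adjustment needed'') simply changes the statement, whose whole point is tightness for genuine states and effects.

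The fix is the one the paper uses, and it is a small step from where you are: take the normalised pure states themselves, $\rho_i=e_i$ for an orthonormal basis, and as shattering functions the honest effects $E_B=\sum_{i\in B}e_i$ (these are projections, hence in $\mathcal{E}(\mathbb{C}^d)$); then $\Tr(E_B\rho_i)=1$ for $i\in B$ and $0$ for $i\notin B$, which is the extreme separation around the witness $\alpha_i=\tfrac12$. Your unitary-conjugation argument for producing infinitely many such sets is correct and is exactly how the paper gets infinitude (``arbitrary'' mutually orthogonal rank-one projections), so with the inputs and effects corrected as above your argument collapses onto the paper's proof rather than constituting a different route; the detour through Lemma~\ref{lemm:Men04} is unnecessary here and, as written, proves the wrong statement.
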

\begin{proof} Consider arbitrary $d$ mutually orthogonal rank-1 projection operators (pure states) $\{\rho_i\}_{i=1}^d$ on $\mathbb{C}^d$ as the input states.
Now for every $B\subseteq \{1,\ldots,d\}$, denote
$f_B:\rho\rightarrow \langle  \sum_{i\in B} \rho_i,\rho \rangle$,
for some $\rho\in \mathcal{Q}(\mathbb{C}^d)$. Note that one can
easily check $\sum_{i\in B}\rho_i\in \mathcal{E}(\mathbb{C}^d)$.
Then for $i\in B$, we have
\begin{align*}
f_B(\rho_i)&=\left\langle  \sum_{i\in B} \rho_i,\rho \right\rangle\\
&=\langle \rho_i,\rho_i\rangle\\
&=1.
\end{align*}
Similarly, $f_B(\rho_i)=0$ if $i\notin B$. As a result,
$\{\rho_i\}_{i=1}^d$ is $1/2$-shattered by $\{f_B\}$.
\end{proof}

\subsection{The Rademacher Complexity} \label{ra_effect}

Following the paradigm in Section \ref{SCLQE}, we calculate the Rademacher complexity of the effect space $\mathcal{E}(\mathbb{C}^d)$ via the duality formula, Theorem \ref{theo:Holder}, and the noncommutative Khintchine inequality, Proposition \ref{theo:NKI}.

\begin{theo}[Rademacher Complexity for Learning Quantum Measurements] \label{theo:ra_effect}
Assume the input space is the state space $\mathcal{X}=\mathcal{Q}(\mathbb{C}^d)$ and the hypothesis set $\mathcal{F}=\{f_E: \forall E\in \mathcal{E}(\mathbb{C}^d)\}$. Then the Rademacher complexity is
	\[
	\mathcal{R}_n(\mathcal{E}(\mathbb{C}^d))=O\left(\sqrt{d}\right).
	\]
\end{theo}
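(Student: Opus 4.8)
The plan is to reduce the statement to the bound on the expected trace norm of a Rademacher series already established in the proof of Theorem~\ref{theo:effect}. The first step exploits linearity of effects: for any $E\in\mathcal{E}(\mathbb{C}^d)$ and any sample points $\rho_1,\dots,\rho_n\in\mathcal{Q}(\mathbb{C}^d)$,
\[
\sum_{i=1}^n \gamma_i f_E(\rho_i)=\sum_{i=1}^n\gamma_i\langle E,\rho_i\rangle=\Bigl\langle E,\sum_{i=1}^n\gamma_i\rho_i\Bigr\rangle .
\]
Writing $M=\sum_{i=1}^n\gamma_i\rho_i\in\mathbb{M}_d$, the supremum inside the definition of $\mathcal{R}_n$ becomes $\sup_{E\in\mathcal{E}(\mathbb{C}^d)}|\langle E,M\rangle|$. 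Since $\mathcal{E}(\mathbb{C}^d)\subset S_\infty^d$ and $S_\infty^d$ is the unit ball of the operator norm, the duality formula for the Schatten norms (Theorem~\ref{theo:Holder}) gives $\sup_{E\in\mathcal{E}(\mathbb{C}^d)}|\langle E,M\rangle|\le\sup_{\|B\|_\infty\le 1}|\Tr(BM)|=\|M\|_1$.

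Combining this with Definition~\ref{def_RCom} yields
\[
\mathcal{R}_n(\mathcal{E}(\mathbb{C}^d))\le\frac1{\sqrt n}\,\mathbb{E}\Bigl\|\sum_{i=1}^n\gamma_i\rho_i\Bigr\|_1,
\]
where the expectation is over both the Rademacher signs $\{\gamma_i\}$ and the i.i.d.\ sample $\{\rho_i\}\subset\mathcal{Q}(\mathbb{C}^d)$ drawn from $\mu$. The inner expectation over $\{\gamma_i\}$ is bounded exactly as in the proof of Theorem~\ref{theo:effect}: the noncommutative Khintchine inequality (Proposition~\ref{theo:NKI}) with $p=1$ gives $\mathbb{E}_\gamma\|\sum_i\gamma_i\rho_i\|_1\lesssim\bigl\|(\sum_i\rho_i^2)^{1/2}\bigr\|_1=\sqrt n\,\bigl\|(\tfrac1n\sum_i\rho_i^2)^{1/2}\bigr\|_1$. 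Since each $\rho_i^2$ is positive semi-definite with $\Tr(\rho_i^2)\le 1$, the average $\tfrac1n\sum_i\rho_i^2$ lies in $S_1^d$, and maximising $\|\sqrt x\|_1=\sum_j\sqrt{\lambda_j}$ over subnormalised states $x$ in dimension $d$ (Cauchy--Schwarz, equivalently concavity of the square root, with the maximum at $\lambda_j=1/d$) gives $\|\sqrt x\|_1\le\sqrt d$. Hence $\mathbb{E}_\gamma\|\sum_i\gamma_i\rho_i\|_1\lesssim\sqrt{nd}$, and this holds for every realisation of the sample, so taking the expectation over $\{\rho_i\}$ preserves it. Substituting back gives $\mathcal{R}_n(\mathcal{E}(\mathbb{C}^d))\lesssim\frac1{\sqrt n}\sqrt{nd}=\sqrt d$, which is the claim.

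The argument is essentially a repackaging of estimates already in hand, so there is no genuine obstacle; the only point requiring minor care is the order of the two expectations — one must observe that the Khintchine bound $\mathbb{E}_\gamma\|\sum_i\gamma_i\rho_i\|_1\lesssim\sqrt{nd}$ holds uniformly over all admissible families $\{\rho_i\}\subset\mathcal{Q}(\mathbb{C}^d)$, so averaging over the random sample (and over an arbitrary $\mu$) cannot worsen it. Optionally, one can complement the $O(\sqrt d)$ bound with a matching $\Omega(\sqrt d)$ lower bound by testing $\mathcal{R}_n$ against $d$ mutually orthogonal pure states (using $n\ge d$ of them, or $n$ copies), in the spirit of the tightness proposition following Theorem~\ref{theo:effect}, confirming that the rate $\sqrt d$ is sharp.
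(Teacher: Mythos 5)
Your argument is correct and is essentially the paper's own proof: you use linearity to write $\sum_i\gamma_i f_E(\rho_i)=\langle E,\sum_i\gamma_i\rho_i\rangle$, bound the supremum over effects by the trace norm via the duality formula (Theorem~\ref{theo:Holder}), and then invoke the noncommutative Khintchine bound $\mathbb{E}\|\sum_i\gamma_i\rho_i\|_1\lesssim\sqrt{nd}$ from Eq.~\eqref{eq:RS1}, exactly as in Section~\ref{ra_effect}. Your added remarks on the order of the two expectations and on a matching lower bound are sound but not needed for the stated claim.
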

\begin{proof}
	
Recall the definition of the Rademacher complexity (Definition~\ref{def_RCom}). We have
\begin{align*}
\sqrt{n}\mathcal{R}_n(S_\infty^d) &= \mathbb{E}\sup_{E\in S_\infty^d}\left|\sum_{i=1}^n \gamma_i f_E(x_i)\right|\\
&= \mathbb{E}\sup_{E\in S_\infty^d}\left|\sum_{i=1}^n \gamma_i \langle E, x_i \rangle\right|\\
	&=\mathbb{E}\sup_{E\in S_\infty^d}\left|  \left\langle E, \sum_{i=1}^n \gamma_i x_i \right\rangle\right|\\
	&\leq \mathbb{E} \left\|\sum_{i=1}^{n} \gamma_i x_i \right\|_1\\
	&\lesssim \sqrt{nd}.
	\end{align*}
The third line is due to the duality formula (Theorem \ref{theo:Holder}), and the last relation follows from Eq.~\eqref{eq:RS1}. This completes the proof.
\end{proof}

\subsection{The Entropy Number} \label{en_effect}

The covering number (and the related entropy number) follows directly from the Rademacher complexity by the \emph{Sudakov's minoration theorm}.

\begin{theo}[Sudakov's Minoration Theorem]{ \cite{Sud71, VW96, Dud99}} \label{theo:Sud}
Let $\mathcal{T}$ be an index set. Let $X=(X_t)_{t\in\mathcal{T}}$ be a sub-Gaussian process\footnote{A stochastic process is called \emph{sub-Guassian} if there exists $\sigma>0$ such that $\mathbb{E}\exp(\theta X_t)\leq\exp(\sigma^2\theta^2/2)$ for all $\theta\in\mathbb{R}$ and $t\in\mathcal{T}$. Note that both Gaussian process and Rademacher process belong to sub-Gaussian process.} with $L_2$-metric $d_X$ (i.e.~$d_X(s,t)=\|X_s-X_t\|_2$) for $s,t\in\mathcal{T}$). Then for each $\epsilon>0$,
\[
\epsilon(\log \mathcal{N}(\epsilon,\mathcal{T},d_X))^{1/2}\leq C\mathbb{E} \sup_{t\in\mathcal{T}} \|X_t\|_1,
\]
for some constant $C$.
\end{theo}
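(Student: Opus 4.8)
The plan is to argue in three moves --- discretise, compare, count --- with the middle one carrying all the weight. Fix $\epsilon>0$ and write $N:=\mathcal{N}(\epsilon,\mathcal{T},d_X)$. First I would extract a maximal $\epsilon$-separated family $\{t_1,\dots,t_M\}\subseteq\mathcal{T}$, i.e.\ one with $d_X(t_i,t_j)=\|X_{t_i}-X_{t_j}\|_2\ge\epsilon$ whenever $i\ne j$; by maximality such a family is automatically an $\epsilon$-net, so $M\ge N$. Since $\mathbb{E}\sup_{t\in\mathcal{T}}X_t\ge\mathbb{E}\max_{i\le M}X_{t_i}$ and $\sqrt{\log M}\ge\sqrt{\log N}$, it then suffices to show $\mathbb{E}\max_{i\le M}X_{t_i}\ge c\,\epsilon\sqrt{\log M}$ for an absolute constant $c>0$.

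For the core comparison step, suppose first that $X$ is a genuine centred Gaussian process. Then $(X_{t_1},\dots,X_{t_M})$ is a centred Gaussian vector with $\mathbb{E}|X_{t_i}-X_{t_j}|^2=d_X(t_i,t_j)^2\ge\epsilon^2$. I would introduce an independent i.i.d.\ standard Gaussian family $g_1,\dots,g_M$ and set $Y_i:=(\epsilon/\sqrt2)\,g_i$, so that $\mathbb{E}|Y_i-Y_j|^2=\epsilon^2\le\mathbb{E}|X_{t_i}-X_{t_j}|^2$ for all $i,j$ (the diagonal case being trivial). The Sudakov--Fernique comparison inequality then gives $\mathbb{E}\max_i X_{t_i}\ge\mathbb{E}\max_i Y_i=(\epsilon/\sqrt2)\,\mathbb{E}\max_{i\le M}g_i$. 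Combining this with the elementary estimate $\mathbb{E}\max_{i\le M}g_i\ge c_0\sqrt{\log M}$ (an absolute $c_0>0$ for $M\ge2$, since $\Pr[\max_i g_i\le t]=\prod_i\Pr[g_i\le t]$ is already small once $t$ is of order $\sqrt{\log M}$ by the lower Gaussian tail bound) yields $\mathbb{E}\sup_{t\in\mathcal{T}}X_t\ge(c_0/\sqrt2)\,\epsilon\sqrt{\log N}$, which is the asserted inequality with $C=\sqrt2/c_0$.

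The genuinely delicate point, which I expect to be the main obstacle, is the passage from the Gaussian case to a merely sub-Gaussian process: the Sudakov--Fernique step uses joint Gaussianity in an essential way, and the bare $L_2$-metric statement is in fact \emph{false} for Bernoulli processes --- for instance, taking $\mathcal{T}=\{e_1,\dots,e_n\}\subset\mathbb{R}^n$ and $X_t=\sum_i\gamma_it_i$ makes the left-hand side grow like $\sqrt{\log n}$ while $\mathbb{E}\sup_{t}X_t\le1$. The correct general formulation, as in the cited references, measures separation in the $\psi_2$ Orlicz metric rather than in $L_2$, and is obtained either via a sub-Gaussian analogue of the comparison principle or via the majorizing-measure machinery. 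For the use made of the theorem in this paper the difficulty can be sidestepped entirely: one applies the Gaussian form of the bound above to the \emph{Gaussian} process $X_E=\langle E,\sum_i g_ix_i\rangle$ indexed by $E\in S_\infty^d$, whose induced $L_2$-metric is exactly $\sqrt n\,\|\cdot\|_{L_2(\mu_n)}$, and one controls $\mathbb{E}\|\sum_i g_ix_i\|_1$ by the complex-Gaussian version of the noncommutative Khintchine inequality (Proposition \ref{theo:NKI}, which by Haagerup and Musat remains valid for Gaussian coefficients). The remaining ingredients --- the discretisation and the maximum-of-Gaussians estimate --- are routine.
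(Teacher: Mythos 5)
The paper does not actually prove Theorem~\ref{theo:Sud}: it is quoted from the literature with citations, so there is no internal argument to compare yours against. Your sketch is the standard textbook proof of the Gaussian form of Sudakov's minoration --- pass to a maximal $\epsilon$-separated subset (which has cardinality at least $\mathcal{N}(\epsilon,\mathcal{T},d_X)$), compare via Sudakov--Fernique with the independent family $(\epsilon/\sqrt{2})\,g_i$, and use $\mathbb{E}\max_{i\le M}g_i\ge c_0\sqrt{\log M}$ --- and each of these steps is correct as you state them. A cosmetic remark: the right-hand side of the displayed inequality should be $\mathbb{E}\sup_{t\in\mathcal{T}}|X_t|$ (or $\mathbb{E}\sup_t X_t$ for a centred process) rather than $\mathbb{E}\sup_t\|X_t\|_1$; your argument in fact proves the stronger form with $\mathbb{E}\sup_t X_t$, which is what Corollaries~\ref{coro:cn_effect} and~\ref{coro:cn_state} actually use.

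Your caveat about the sub-Gaussian formulation is the substantive point, and you are right: with separation measured in the plain $L_2$-metric of an arbitrary sub-Gaussian process the stated inequality is false, and your counterexample ($\mathcal{T}=\{e_1,\dots,e_n\}$, $X_t=\sum_i\gamma_i t_i$, pairwise distances $\sqrt{2}$, yet $\mathbb{E}\sup_t X_t\le 1$ while the left side grows like $\sqrt{\log n}$) is exactly the standard one. This matters here because the paper applies the theorem to a Rademacher process in the proof of Corollary~\ref{coro:cn_effect} (and implicitly for Corollary~\ref{coro:cn_state}), i.e.\ precisely the case not covered by the $L_2$-metric statement. Your proposed repair is the right one and preserves all stated conclusions: replace the Rademacher signs by i.i.d.\ Gaussians, apply genuine Gaussian Sudakov minoration to the process $E\mapsto\langle E,\sum_i g_i x_i\rangle$ (whose canonical metric is the empirical $L_2$ distance up to the $\sqrt{n}$ normalisation), and bound $\mathbb{E}\|\sum_i g_i x_i\|_1$, respectively $\mathbb{E}\|\sum_i g_i E_i\|_\infty$, by the Gaussian versions of Proposition~\ref{theo:NKI} (Haagerup--Musat, as the paper itself notes) and Proposition~\ref{theo:RS}; this reproduces the $O(d/\epsilon^2)$ and $O(\log d/\epsilon^2)$ entropy bounds. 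The alternative route via Talagrand's Bernoulli version of Sudakov minoration requires an additional $\ell_\infty$ control on the increments and yields a weaker two-branch bound, so the Gaussian detour you suggest is the cleaner fix; either way, the theorem as displayed should be restated for Gaussian processes (or with the correct $\psi_2$-metric hypotheses) before being invoked.
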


\begin{coro}[{Entropy Number} for Learning Quantum Measurements] \label{coro:cn_effect}
Assume the input space is the state space $\mathcal{X}=\mathcal{Q}(\mathbb{C}^d)$ and the hypothesis set $\mathcal{F}=\{f_E: \forall E\in \mathcal{E}(\mathbb{C}^d)\}$. Then for each $\epsilon>0$, the covering number of the function class is
	\[
	\log\mathcal{N}_2(\epsilon,\mathcal{E}(\mathbb{C}^d),n)=O(d/\epsilon^2).
	\]
\end{coro}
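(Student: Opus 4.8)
The plan is to read the covering-number estimate off the Rademacher-complexity bound of Theorem~\ref{theo:ra_effect} by means of Sudakov's minoration (Theorem~\ref{theo:Sud}), which is exactly the logical dependency advertised before the statement. Since $\mathcal{N}_2(\epsilon,\mathcal{E}(\mathbb{C}^d),n)=\sup_{\mu_n}\mathcal{N}\!\left(\epsilon,\mathcal{E}(\mathbb{C}^d),L_2(\mu_n)\right)$, it suffices to fix an arbitrary empirical measure $\mu_n=n^{-1}\sum_{i=1}^n\delta_{x_i}$ supported on states $x_1,\dots,x_n\in\mathcal{Q}(\mathbb{C}^d)\subset S_1^d$ and to bound $\log\mathcal{N}\!\left(\epsilon,\mathcal{E}(\mathbb{C}^d),L_2(\mu_n)\right)$ by $O(d/\epsilon^2)$ with a constant that does \emph{not} depend on the sample; the final claim then follows by taking the supremum over $\mu_n$.

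The key construction is the Rademacher process indexed by the effect space: for $E\in\mathcal{E}(\mathbb{C}^d)$ put $X_E:=n^{-1/2}\sum_{i=1}^n\gamma_i\langle E,x_i\rangle$, where $\{\gamma_i\}_{i=1}^n$ are independent Rademacher variables. This is a sub-Gaussian process, and its intrinsic $L_2$-metric is precisely the empirical metric appearing in the definition of $\mathcal{N}_2$, since $\|X_E-X_{E'}\|_2=\left(n^{-1}\sum_{i=1}^n|\langle E-E',x_i\rangle|^2\right)^{1/2}=\|f_E-f_{E'}\|_{L_2(\mu_n)}$ by orthonormality of the Rademacher variables. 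Hence a covering of $\mathcal{E}(\mathbb{C}^d)$ in $d_X$ is a covering of the function class $\mathcal{F}$ in $L_2(\mu_n)$, and Theorem~\ref{theo:Sud} applied with $\mathcal{T}=\mathcal{E}(\mathbb{C}^d)$ gives $\epsilon\left(\log\mathcal{N}\!\left(\epsilon,\mathcal{E}(\mathbb{C}^d),L_2(\mu_n)\right)\right)^{1/2}\le C\,\mathbb{E}_\gamma\sup_{E\in\mathcal{E}(\mathbb{C}^d)}X_E$.

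It remains to bound the right-hand side, which is nothing but the conditional Rademacher complexity of the sample. Using the duality formula (Theorem~\ref{theo:Holder}), $\sup_E X_E=n^{-1/2}\sup_E\langle E,\sum_{i=1}^n\gamma_i x_i\rangle\le n^{-1/2}\|\sum_{i=1}^n\gamma_i x_i\|_1$, so the estimate $\mathbb{E}_\gamma\|\sum_{i=1}^n\gamma_i x_i\|_1\lesssim\sqrt{nd}$ from Eq.~\eqref{eq:RS1} — which holds for \emph{every} choice of $x_i\in S_1^d$ and is therefore uniform in $\mu_n$ — yields $\mathbb{E}_\gamma\sup_E X_E\lesssim\sqrt{d}$, i.e.\ the content of Theorem~\ref{theo:ra_effect} evaluated on the fixed sample. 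Plugging this into Sudakov's inequality and rearranging gives $\log\mathcal{N}\!\left(\epsilon,\mathcal{E}(\mathbb{C}^d),L_2(\mu_n)\right)\le C'd/\epsilon^2$, and the supremum over $\mu_n$ produces $\log\mathcal{N}_2(\epsilon,\mathcal{E}(\mathbb{C}^d),n)=O(d/\epsilon^2)$. The only point requiring care is matching conventions in the first two steps: one must pick the index set and normalisation of the process so that its intrinsic metric is literally $L_2(\mu_n)$ and its expected supremum coincides, up to an absolute constant, with the Rademacher complexity already controlled — both work because of the symmetrisation $S_\infty^d=\mathrm{conv}(-\mathcal{E}(\mathbb{C}^d)\cup\mathcal{E}(\mathbb{C}^d))$, which makes $\sup_{E\in\mathcal{E}(\mathbb{C}^d)}\langle E,\cdot\rangle$ and $\sup_{E\in S_\infty^d}|\langle E,\cdot\rangle|$ agree within a factor of two. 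Everything else is bookkeeping absorbed into the $O(\cdot)$ constants.
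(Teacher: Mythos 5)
Your proposal is correct and follows essentially the same route as the paper: you apply Sudakov's minoration (Theorem~\ref{theo:Sud}) to the Rademacher process indexed by the effect space, identify its intrinsic $L_2$-metric with the empirical metric $L_2(\mu_n)$, bound the expected supremum by the Rademacher complexity $O(\sqrt{d})$ via the duality formula and Eq.~\eqref{eq:RS1}, and take the supremum over empirical measures, noting the bound is distribution-free. The only cosmetic difference is that you work with the scalar process $X_E$ while the paper writes a vector-valued process, which changes nothing of substance.
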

\begin{proof}
	The upper bound of the empirical $L_2$ entropy number by the Rademacher complexity follows directly from the Sudakov's minoration theorem.
	Denote the (vector-valued) stochastic process by
	\[
	X_f :=\frac1{\sqrt{n}} (\gamma_1 f(x_1),\ldots,\gamma_n f(x_n)),
	\]
	where $x_1,\ldots,x_n$ are independently drawn from $\mathcal{X}$ according to some distribution $\mu$.
	Then the distance measure can be calculated as
	\[
	d_X(f,g)=\left\|X_f-X_g\right\|_2=\frac1{\sqrt{n}}\left(\sum_{i=1}^n \left|f(x_i)-g(x_i)\right|^2\right)^{1/2}=
	\left\| f-g\right\|_{L_2(\mu_n)}.
	\]
	Invoke Theorem \ref{theo:Sud} and \ref{ra_effect} to obtain
	\begin{align*}
	\log\mathcal{N}(\epsilon,\mathcal{F},L_2(\mu_n)) &= \log\mathcal{N}(\epsilon,\mathcal{F},d_X)\\
	&\leq C^2\frac{\left(\mathbb{E}\sup_{f\in\mathcal{F}}\|X_f\|_1\right)^2}{\epsilon^2}\\
	&=C^2\frac{\mathcal{R}_n(\mathcal{F})^2}{\epsilon^2}\\
	&\leq C^2\frac{d}{\epsilon^2}.
	\end{align*}
	Note that the right-hand side in the last line does not depend on the distribution $\mu$. Hence the entropy number $\log\mathcal{N}_2(\epsilon,\mathcal{F},n)=\sup_{\mu_n} \log\mathcal{N}(\epsilon,\mathcal{F},L_2(\mu_n))=O(d/\epsilon^2)$ follows.
\end{proof}

The pseudo dimension of the effect space Pdim$(\mathbb{C}^d)=d^2$ means that we need $d^2$ parameters to exactly determine a POVM element.
Note that it coincides the number of measurements in the quantum measurement tomography (since $\mathcal{E}(\mathbb{C}^d)$ lies in a $d^2$-dimensional real vector space).
However, if we relax the criterion by tolerating an $\epsilon$ accuracy, then the effect space can be covered by $\mathcal{N}_2(\epsilon,\mathcal{E}(\mathbb{C}^d))=\exp(d/\epsilon^2)$ balls each with radius $\epsilon$.
In other words, we need $\log \mathcal{N}_2(\epsilon,\mathcal{E}(\mathbb{C}^d))=d/\epsilon^2$ samples to identify which ball the target POVM element lies in. That is the meaning of the entropy number.
By applying the quantum learning model to quantum measurement tomography, we can specify a ``PAC" candidate POVM element with accuracy $\epsilon$ and confidence $\delta$ with only $d/\epsilon^2$ samples, which quadratically speed-up the original scheme.

\subsection{The Relationship to Quantum State Discrimination} \label{RQSD}
\emph{Quantum State Discrimination} studies how to optimally distinguish a set of quantum states according to a figure of merit \cite{CH13, CDH14}. 

There are nevertheless some limitations in quantum state discrimination because the states cannot always be perfectly discriminated. Moreover, it may not be necessary to find the exact state in some scenario. Therefore, Zhang and Ying \cite{ZY02} considered \emph{quantum set discrimination}, where the goal is to identify which set the given state belongs to. 
Now we relate the concepts of the fat-shattering dimension to \emph{quantum set discrimination}.

\begin{defn}[$\epsilon$-separable Set]
	A set $\mathcal{S}=\{x_1,\ldots,x_n\}\subset\mathbb{M}_d$ is \emph{$\epsilon$-(linearly) separable} with respect to the set $\mathcal{W}\subseteq\mathbb{M}_d$ if and only if for any subset $B\subseteq \mathcal{S}$ there exists an $\epsilon$-strip which separates $B$ from its complement $\mathcal{S}\setminus B$. In other words, there exist $w\in\mathcal{W}$ and $a\in\mathbb{R}$ such that  $\langle w, x\rangle\geq a+\epsilon/2$ when $x\in B$ and $\langle w, x\rangle\leq a-\epsilon/2$ when $x\in \mathcal{S}\setminus B$.
\end{defn}

It is not difficult to see that an $2\epsilon$-separable set correspond to the task of quantum set discrimination with ensemble $\mathcal{S}=\{x_1,\ldots,x_n\}$, where the error probability that a given state can be classified to a set is no greater than $(1-\epsilon)/2$.  
One interesting question to ask is what the maximum cardinality of the $2\epsilon$-separable set is. The following proposition shows that the fat-shattering dimension equals this quantity.

\begin{prop}
	Denote the function class $\mathcal{F}=\{\rho\rightarrow \langle E, \rho\rangle : E\in\mathcal{E}(\mathbb{C}^d)\}$. 
	Assume there exists a set $\mathcal{S}=\{x_1,\ldots,x_n\}\subset \mathcal{Q}(\mathbb{C}^d)$ that is $2\epsilon$-separable with respect to $\mathcal{E}(\mathbb{C}^d)$.
	Then the maximum cardinality of the set $\mathcal{S}$ is fat$_\mathcal{F}(\epsilon)$.
\end{prop}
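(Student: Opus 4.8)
The plan is to prove the proposition by identifying the two notions involved: I will show that a finite set $\mathcal{S}=\{x_1,\dots,x_n\}\subset\mathcal{Q}(\mathbb{C}^d)$ is $2\epsilon$-separable with respect to $\mathcal{E}(\mathbb{C}^d)$ if and only if it is $\epsilon$-shattered by $\mathcal{F}=\{\rho\mapsto\langle E,\rho\rangle:E\in\mathcal{E}(\mathbb{C}^d)\}$. Once this equivalence is in hand the proposition follows immediately, since by Definition~\ref{defn:fat} the quantity $\text{fat}_\mathcal{F}(\epsilon)$ is exactly the largest cardinality of an $\epsilon$-shattered subset of $\mathcal{Q}(\mathbb{C}^d)$, hence also of a $2\epsilon$-separable one.

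First I would unwind the definitions. Spelling out $\epsilon$-separability at scale $2\epsilon$, the set $\mathcal{S}$ is $2\epsilon$-separable with respect to $\mathcal{E}(\mathbb{C}^d)$ precisely when, for every $B\subseteq\mathcal{S}$, there exist $E_B\in\mathcal{E}(\mathbb{C}^d)$ and $a_B\in\mathbb{R}$ with $\langle E_B,x\rangle\ge a_B+\epsilon$ for $x\in B$ and $\langle E_B,x\rangle\le a_B-\epsilon$ for $x\in\mathcal{S}\setminus B$. Setting $f_B:=\langle E_B,\cdot\rangle\in\mathcal{F}$, this matches the $\epsilon$-shattering condition of Definition~\ref{defn:fat} up to two cosmetic discrepancies: the non-strict inequality $\le a_B-\epsilon$ against the strict $<\alpha_i-\epsilon$ there (which does not change the supremum defining $\text{fat}_\mathcal{F}(\epsilon)$, and one may in any case run the whole theory with non-strict inequalities), and the position of the offsets — a single witness vector $\{\alpha_i\}$ common to all $B$ in Definition~\ref{defn:fat}, versus a level $a_B$ allowed to depend on $B$ here.

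Reconciling the offsets is the real content and the step I expect to be the main obstacle. I would handle it with the machinery of Section~\ref{SCLQE}: the symmetrisations $\mathcal{Q}(\mathbb{C}^d)\subset S_1^d$ and $\mathcal{E}(\mathbb{C}^d)\subset S_\infty^d$ (which are polar to each other), Mendelson and Schechtman's Lemma~\ref{lemm:Men04}, and the duality formula of Theorem~\ref{theo:Holder}. Lemma~\ref{lemm:Men04} characterises $\epsilon$-shattering by a dual ball through a \emph{witness-free} condition — linear independence of the $x_i$ together with $\epsilon\sum_i|a_i|\le\bigl\|\sum_i a_i x_i\bigr\|_1$ for all real scalars $a_i$ — so the task reduces to showing that $2\epsilon$-separability with respect to $\mathcal{E}(\mathbb{C}^d)$ is equivalent to the same intrinsic inequality. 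For this I would use the duality formula $\bigl\|\sum_i a_i x_i\bigr\|_1=\sup_{\|E\|_\infty=1}\langle E,\sum_i a_i x_i\rangle$ to pass between the separating strips and lower bounds on $\bigl\|\sum_i a_i x_i\bigr\|_1$, and exploit the trace normalisation $\langle\mathcal{I},\rho\rangle=1$ on $\mathcal{Q}(\mathbb{C}^d)$ and the symmetry $E\leftrightarrow\mathcal{I}-E$ of $\mathcal{E}(\mathbb{C}^d)$ to move between a partition-dependent level $a_B$ and a common witness without loss. Once both notions are seen to be equivalent to this single condition on $\mathcal{S}$, the family of $2\epsilon$-separable sets and the family of $\epsilon$-shattered sets coincide, and the equality of maximal cardinalities is Definition~\ref{defn:fat}.
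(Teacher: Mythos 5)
Your plan is sound only in one direction, and the step you yourself identify as ``the real content'' is precisely where it breaks down. The easy half does work along your route: if the Mendelson--Schechtman inequality holds (note it must be taken at scale $2\epsilon$, since Lemma~\ref{lemm:Men04} characterises shattering by the dual ball $S_\infty^d$, not by $\mathcal{E}(\mathbb{C}^d)$, and the bijection $E\mapsto 2E-\mathcal{I}$, legitimate on unit-trace inputs, doubles the scale), then the Hahn--Banach step in that lemma produces for each $B$ an operator $M_B\in S_\infty^d$ with $\langle M_B,x_i\rangle=\pm 2\epsilon$, hence an effect $E_B=\tfrac12(M_B+\mathcal{I})$ separating $B$ from its complement at the common level $\tfrac12$; this gives ``shattered $\Rightarrow$ separable'' and in effect reproves what the paper actually proves, namely $\mathrm{fat}_\mathcal{F}(\epsilon)=\underline{\mathrm{fat}}_\mathcal{F}(\epsilon)$ via the constant witness $\tfrac12$. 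The paper's own proof is exactly this witness-to-constant conversion, done directly with convexity and $E\mapsto\mathcal{I}-E$ and without any duality machinery, while reading the separating level as common to all subsets.

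The converse--passing from a partition-dependent level $a_B$ to a single witness ``without loss''--is the genuine gap. The symmetrisation you invoke does not deliver it: averaging $E_B$ with $\mathcal{I}-E_{\bar B}$ yields the level $\tfrac12(1+a_B-a_{\bar B})$, which still depends on $B$; the cancellation the paper exploits happens only because in fat-shattering the \emph{same} per-point witness $s_i$ appears in the bounds for both $B$ and $\bar B$, whereas here the residual $a_B-a_{\bar B}$ survives. If you push the duality route instead, testing $\sum_i a_i x_i$ against $M_B=2E_B-\mathcal{I}$ gives $\bigl\|\sum_i a_i x_i\bigr\|_1\geq \beta_B\sum_i a_i+2\epsilon\sum_i|a_i|$ with $\beta_B=2a_B-1$, and eliminating the unwanted term by testing against $(M_B-\beta_B\mathcal{I})/(1+|\beta_B|)$ costs a factor up to $2$: you only reach the intrinsic inequality at scale $\epsilon$, i.e.\ ``separable at $2\epsilon$ $\Rightarrow$ shattered by $\mathcal{E}(\mathbb{C}^d)$ at $\epsilon/2$,'' not at $\epsilon$. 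Worse, the set-wise equivalence you announce is false under the $B$-dependent-level reading you adopt: for $0<\epsilon<\tfrac12$ take the qubit states $\rho_1=\mathrm{diag}(2\epsilon,1-2\epsilon)$ and $\rho_2=\mathrm{diag}(0,1)$; they are $2\epsilon$-separable (use $\mathrm{diag}(1,0)$ at level $\epsilon$, $\mathrm{diag}(0,1)$ at level $1-\epsilon$, and $\mathcal{O},\mathcal{I}$ for the trivial subsets), yet they are not $\epsilon$-shattered, because the maximal gap $\sup_{E\in\mathcal{E}}\langle E,\rho_1-\rho_2\rangle=2\epsilon$ forces any witness to satisfy $\alpha_1<\alpha_2$ from $B=\{1\}$ and $\alpha_2<\alpha_1$ from $B=\{2\}$, a contradiction. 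So the proposal as written cannot be completed at matched scales; to prove the proposition you must either read the separability level as common across subsets (as the paper implicitly does) or supply a genuinely new argument for removing the $B$-dependence, and in either case the remaining content is the paper's constant-witness argument, which your sketch gestures at but does not carry out.
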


\begin{proof} Recall from Definition \ref{defn:fat} that the set $\mathcal{S}=\{x_1,\ldots,x_n\}$ is $2\epsilon$-separable with respect to $\mathcal{E}(\mathbb{C}^d)$ if and only if $\underline{\text{fat}}_\epsilon(\mathcal{F})\geq n$. Then the proposition is equivalent to show that fat$_\epsilon(\mathcal{F})=\underline{\text{fat}}_\epsilon(\mathcal{F})$.
	
	Because $\underline{\text{fat}}_\epsilon(\mathcal{F})\leq \text{fat}_\epsilon(\mathcal{F})$ by definition, it suffices to show $\underline{\text{fat}}_\epsilon(\mathcal{F})\geq \text{fat}_\epsilon(\mathcal{F})$.
	Given $\epsilon>0$, choose a set $\mathcal{S}=\{x_1,\ldots,x_n\}$ with the largest integer $n$ such that $\mathcal{S}$ is $\epsilon$-shattered by $\mathcal{F}$ (with $\{s_i\}_{i=1}^n$ witnessing the shattering).
	Without loss of generality, we assume some $s_i\neq 1/2$. We then choose an arbitrary subset $B\subseteq \{1,\ldots, n\}$ that contains $i$. By the definition of fat-shattering dimension, there exists $s_i:=s(x_i)$ such that there is some function $E_B\in \mathcal{F}$ for each set $B\subset \mathcal{S}$ so that $\langle E_B,x_i\rangle \geq s_i+\epsilon$, if $i\in B$.
	Also, we have $\langle E_{\bar{B}},x_i\rangle \leq s_i-\epsilon$, where $\bar{B}=\mathcal{S}\setminus B$.
	Now denote $\overline{E_{\bar{B}}}:=\mathcal{I}-E_{\bar{B}}$ such that
	\[
	\langle \overline{E_{\bar{B}}},x_i\rangle=1-\langle E_{\bar{B}},x_i\rangle\geq 1-s_i+\epsilon.
	\]
	Since $\mathcal{F}$ is convex, set
	$E_{B}':=\frac12(E_B+\overline{E_{\bar{B}}})\in\mathcal{F}$ which
	satisfies
	\[
	\langle E_{B}', x_i\rangle\geq 1/2 +\epsilon.
	\]
	Similarly, let $E_{\bar{B}}':=\mathcal{I}-E_{B}'$, we have
	\[
	\langle E_{\bar{B}}', x_i\rangle\leq 1/2 -\epsilon.
	\]
	The same argument holds for other $s_i\neq 1/2$.
	It follows that the level fat-shattering dimension (witnessed by $1/2$) also achieves the cardinality $n$ of the $\epsilon$-shattered set, which completes the proof.
\end{proof}

\section{Learning Quantum States} \label{LQS}

In this section, we consider the problem of learning an unknown quantum state $\rho'\in\mathcal{Q}(\mathbb{C}^d)$ through the training data set $\mathsf{Z}_n=\left\{\left(E_i,\Tr(\rho' E_i)\right)\right\}_{i=1}^n$, where $\{E_i\}_{i=1}^n\in\mathcal{X}=\mathcal{E}(\mathbb{C}^d)$ are independently sampled according to an unknown distribution $\mu'$. 
By Proposition~\ref{prop:state}, the hypothesis set consists of the linear functional $f_\rho:E\mapsto \langle E,\rho\rangle$ on $\mathcal{E}(\mathbb{C}^d)$: 
\[
\mathcal{F}'=\{f_\rho : \forall \rho\in \mathcal{Q}(\mathbb{C}^d)\}.
\]


Similarly, we embed the input space into the unit ball of Schatten $\infty$-class, i.e.~$\mathcal{X}=S_\infty^d$. Then the hypothesis set is the collection of  linear functionals on the input space, i.e.~$S_1^d$.
In the following, we aim to calculate the complexity measures of $S_1^d$, which characterise the sample complexity of learning quantum states. It is interesting to see that the proofs derived in this section (i.e.~the complexity measures of learning quantum states) parallel with that in the previous section (i.e.~the complexity measures of learning quantum measurements) due to the duality relation in Theorem~\ref{theo:duality}. Finally, we discuss the relationship of the fat-shattering dimension with quantum random access codes.


\subsection{The Fat-Shattering Dimension for Learning Quantum States} \label{SCLQS}
Under the framework presented in Section \ref{Framework}, we characterising the input space $\mathcal{X} \subset S_\infty^d$ and the hypothesis set $\mathcal{F}'$ consisting of the linear functionals with elements in $S_1^d$. That is,
\[
\mathcal{F}'=\{ f_\rho : \rho\in S_1^d\}.
\]

Therefore, we have the main result of deriving the fat-shattering dimension of the state space.
\begin{theo}[Fat-Shattering Dimension for Learning Quantum States] \label{theo:state}
For all $0<\epsilon< 1/2$ and integer $d\geq 2$, we have
\[
\text{Pdim}(\mathcal{Q}(\mathbb{C}^d))\leq d^2-1,
\]
and
\[
\text{fat}_{\mathcal{Q}(\mathbb{C}^d)}(\epsilon,\mathcal{E}(\mathbb{C}^d))=\min\{O(\log d/\epsilon^2),d^2-1\}.
\]
\end{theo}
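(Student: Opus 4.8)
The plan is to run the argument of Theorem~\ref{theo:effect} almost verbatim, but with the roles of $S_1^d$ and $S_\infty^d$ exchanged, exactly as the duality of Theorem~\ref{theo:duality} suggests. First I would dispose of the pseudo-dimension bound. The only new wrinkle relative to the effect space is that the state space does not span all of $\mathbb{M}_d$: since $\Tr\rho=1$ for every $\rho\in\mathcal{Q}(\mathbb{C}^d)$, writing $\rho=\tfrac1d\mathcal{I}+\sigma$ with $\sigma$ traceless self-adjoint gives $f_\rho=f_{\mathcal{I}/d}+f_\sigma$, where $f_{\mathcal{I}/d}$ is a fixed function and the $f_\sigma$ range over a vector space of real-valued functions of dimension at most $d^2-1$. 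Because translating a function class by a fixed function does not change its pseudo-dimension (the translate can be absorbed into the thresholds $\alpha_i$ in Definition~\ref{defn_pseudo}), Pollard's Theorem~\ref{theo:pseudo} yields $\text{Pdim}(\mathcal{Q}(\mathbb{C}^d))\leq d^2-1$, and Theorem~\ref{theo:AB99}(i) then gives $\text{fat}_{\mathcal{Q}(\mathbb{C}^d)}(\epsilon,\mathcal{E}(\mathbb{C}^d))\leq d^2-1$ for every $\epsilon>0$, the second term in the claimed minimum.

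For the scale-sensitive term, I would suppose $\mathcal{S}=\{x_1,\ldots,x_n\}\subset S_\infty^d$ is $\epsilon$-shattered by $S_1^d$; since $\mathcal{E}(\mathbb{C}^d)\subseteq S_\infty^d$, $\mathcal{Q}(\mathbb{C}^d)\subseteq S_1^d$, and these balls are polar, any upper bound on $n$ obtained this way bounds $\text{fat}_{\mathcal{Q}(\mathbb{C}^d)}(\epsilon,\mathcal{E}(\mathbb{C}^d))$. Applying Lemma~\ref{lemm:Men04} with $B_X=S_\infty^d$, $B_{X^*}=S_1^d$ and choosing $a_i=\gamma_i$ a Rademacher sequence gives $\epsilon n\leq\|\sum_{i=1}^n\gamma_i x_i\|_\infty$ for every sign pattern, hence $\epsilon n\leq\mathbb{E}\|\sum_{i=1}^n\gamma_i x_i\|_\infty$, so it remains to bound the expected operator norm of a matrix Rademacher sum with $\|x_i\|_\infty\leq1$. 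Here $x_i^2\in S_\infty^d$ and $\|\sum_i x_i^2\|_\infty\leq n$, so $\|(\sum_i x_i^2)^{1/2}\|_p\leq d^{1/p}\sqrt n$ for every $p$; invoking the noncommutative Khintchine inequality (Proposition~\ref{theo:NKI}) at a finite exponent $p$, using $\|M\|_\infty\leq\|M\|_p$, keeping track of the $O(\sqrt p)$ dependence of its constant, and optimising at $p$ of order $\log d$ gives the operator-norm (``matrix Khintchine'') estimate
\[
\mathbb{E}\left\|\sum_{i=1}^n\gamma_i x_i\right\|_\infty\lesssim\sqrt{\log d}\,\left\|\Big(\sum_{i=1}^n x_i^2\Big)^{1/2}\right\|_\infty\lesssim\sqrt{n\log d}.
\]
Combining with $\epsilon n\leq\mathbb{E}\|\sum_i\gamma_i x_i\|_\infty$ yields $n\lesssim\log d/\epsilon^2$, and together with the pseudo-dimension bound this gives $\text{fat}_{\mathcal{Q}(\mathbb{C}^d)}(\epsilon,\mathcal{E}(\mathbb{C}^d))=\min\{O(\log d/\epsilon^2),d^2-1\}$.

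The hard part --- and the only step that is not a mechanical transcription of Section~\ref{SCLQE} --- is precisely this last estimate: one must resist the temptation to bound $\|\cdot\|_\infty$ through $\|\cdot\|_2$ (which, since $\|x_i\|_2\leq\sqrt d$, would only reproduce the measurement-type bound $n\lesssim d/\epsilon^2$) and instead extract the logarithmic factor by interpolating the noncommutative Khintchine inequality to exponent $p=\Theta(\log d)$. This single refinement is what produces the exponential separation between the sample complexities of learning quantum states and of learning quantum measurements. A matching lower bound $\Omega(\log d/\epsilon^2)$, confirming that the estimate is tight up to constants, can be read off from the information-theoretic constraints on quantum random access codes treated later in Section~\ref{LQS}.
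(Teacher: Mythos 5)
Your proposal is correct, and its overall skeleton (pseudo-dimension bound plus Mendelson--Schechtman applied to Rademacher signs, reducing everything to $\mathbb{E}\bigl\|\sum_{i=1}^n\gamma_i x_i\bigr\|_\infty\lesssim\sqrt{n\log d}$) is the same as the paper's; the difference lies in how that key operator-norm estimate is obtained. The paper does not interpolate the noncommutative Khintchine inequality at all: it invokes Tropp's matrix Rademacher bound (Proposition~\ref{theo:RS}), $\mathbb{E}\|\sum_i\gamma_i x_i\|_\infty\leq\sqrt{2\sigma^2\log d}$ with $\sigma^2=\|\sum_i x_i^2\|_\infty\leq n$, as a black box, which immediately gives the explicit constant $\sqrt{2n\log d}$. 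Your route---bounding $\|\cdot\|_\infty$ by $\|\cdot\|_p$, applying Proposition~\ref{theo:NKI} at a finite exponent, and optimising at $p=\Theta(\log d)$---is the classical (Rudelson-style) derivation of the same estimate and is perfectly valid, but note that it needs the $O(\sqrt p)$ growth of the Khintchine constant, which the paper's statement of Proposition~\ref{theo:NKI} does not record (it only says the constant depends on $p$); so you are implicitly using a sharper version of that proposition, whereas the paper outsources exactly this refinement to Tropp's result. On the pseudo-dimension side your argument is in fact a little more careful than the paper's: you justify the reduction to $d^2-1$ by splitting off the fixed functional $f_{\mathcal{I}/d}$ and noting that translating a class by a fixed function leaves the pseudo-dimension unchanged, while the paper simply asserts that $\mathcal{Q}(\mathbb{C}^d)\subset\partial S_1^d$ embeds in a $(d^2-1)$-dimensional real vector space. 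Your closing remark about a matching $\Omega(\log d/\epsilon^2)$ lower bound via quantum random access codes is consistent with the paper's Section~\ref{RQRAC} but is not part of the paper's proof of this theorem.
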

\begin{proof}
Following the same fashion as in the proof of Theorem \ref{theo:effect}, we first estimate the pseudo dimension and then the fat-shattering dimension.

(i) Pseudo Dimension:
The state space lies in the set $\{x\in\mathcal{M}^d:\, \|x\|_1=1\}$, which is the sphere of $S_1^d$, i.e.~$\mathcal{Q}(\mathbb{C}^d)\subset \partial S_1^d$. Since $\partial S_1^d$ can be embedded into a real vector space of dimension $d^2-1$, we have Pdim$(\mathcal{Q}(\mathbb{C}^d))\leq d^2-1$.

(ii) Fat-Shattering Dimension:
For every $\{x_i\}_{i=1}^n\in S_\infty^d$, we have to calculate the Rademacher series $\mathbb{E}\left\|\sum_{i=1}^n \gamma_i x_i\right\|_\infty$.
However, in the scenario of learning quantum states the input space lies in the Schatten $\infty$-class.
We have to estimate the spectral norm of the Rademacher series.
Benefiting from the recent development of matrix concentration inequalities, Tropp \cite{Tro11d} proved the following results:
\begin{prop}[Upper Bound for Rademacher Series \cite{Tro11d}] \label{theo:RS}
Consider a finite sequence $\{x_i\}$ of deterministic Hermitian matrices with dimension $d$, and let $\{\gamma_i\}$ be independent Rademacher variables. Form the matrix Rademacher series
\[
Y=\sum_i \gamma_i x_i.
\]
Compute the variance parameter
\[
\sigma^2=\sigma^2(Y)=\|\mathbb{E}\left(Y^2\right)\|_\infty.
\]
Then
\[
\mathbb{E}\|Y\|_\infty\leq\sqrt{2\sigma^2 \log d}.
\]
Note that the result also holds for the case $\{\gamma_i\}$ being standard complex Gaussian variables.
\end{prop}

Invoking Tropp's development of matrix concentration inequalities (see Proposition \ref{theo:RS}), we have
\begin{eqnarray} \label{eq:bound_inf}
\mathbb{E}\left\|\sum_{i=1}^n \gamma_i x_i\right\|_\infty \leq \sqrt{2\sigma^2 \log d},
\end{eqnarray}
where $\sigma^2:=\left\|\mathbb{E}\left(\sum_{i=1}^n \gamma_i x_i\right)^2\right\|_\infty$.
Straightforward computation shows that
\[
\sigma^2=\left\|\mathbb{E}\left(\sum_{i=1}^n \gamma_i x_i\right)^2\right\|_\infty=\left\|\mathbb{E} \left( \sum_{i,j} \gamma_i \gamma_j x_i x_j\right)\right\|_\infty=\left\|\sum_{i=1}^n x_i^2\right\|_\infty \leq n.
\]
We get
\[
\mathbb{E}\left\|\sum_{i=1}^n \gamma_i x_i\right\|_\infty \leq \sqrt{2n \log
d}.
\]
Then there is a realization of $\{\gamma_i\}_{i=1}^n$ such that $\left\|\sum_{i=1}^n \gamma_i x_i\right\|_\infty\leq\sqrt{2n\log d},\, \forall x_i\in S_\infty^d$.

From Lemma \ref{lemm:Men04}, by selecting $a_i=\gamma_i$, $\epsilon n\leq\left\|\sum_{i=1}^n \gamma_i x_i\right\|_\infty$.
Combining the inequalities, we have $n\leq O(\log d/\epsilon^2)$ completing the proof.
\end{proof}

\subsection{The Rademacher Complexity} \label{ra_state}

By repeating the procedure introduced in Section \ref{ra_effect}, we can compute the Rademacher complexity of the state space.
\begin{theo}[Rademacher Complexity for Learning Quantum States] \label{theo:ra_state}
	Assume the input space is the effect space $\mathcal{X}=\mathcal{E}(\mathbb{C}^d)$. The hypothesis set $\mathcal{F}$ defined on $\mathcal{X}$ is the state space $\mathcal{Q}(\mathbb{C}^d)$. Then the Rademacher complexity of hypothesis set is
	\[
	\mathcal{R}_n(\mathcal{Q}(\mathbb{C}^d))=O\left(\sqrt{\log d}\right).
	\]
\end{theo}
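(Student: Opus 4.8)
The plan is to mirror the argument used for Theorem~\ref{theo:ra_effect}, with the Schatten $1$- and $\infty$-classes swapping roles. After the symmetrisation of Section~\ref{SCLQS} the hypothesis set $\mathcal{F}'=\{f_\rho:\rho\in\mathcal{Q}(\mathbb{C}^d)\}$ is parametrised by the trace-norm ball $S_1^d$, while the sample points $x_1,\dots,x_n$ lie in $\mathcal{X}=\mathcal{E}(\mathbb{C}^d)\subset S_\infty^d$. Unfolding Definition~\ref{def_RCom} and using linearity of the Hilbert--Schmidt pairing,
\[
\sqrt{n}\,\mathcal{R}_n(S_1^d)=\mathbb{E}\sup_{\rho\in S_1^d}\left|\sum_{i=1}^n\gamma_i f_\rho(x_i)\right|=\mathbb{E}\sup_{\rho\in S_1^d}\left|\left\langle\rho,\sum_{i=1}^n\gamma_i x_i\right\rangle\right|,
\]
where $\{\gamma_i\}_{i=1}^n$ are independent Rademacher variables and $\{x_i\}_{i=1}^n$ are drawn independently according to $\mu'$.

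Next I would apply the duality formula for Schatten norms (Theorem~\ref{theo:Holder}) with $p=1$, $q=\infty$: since $S_1^d$ is the unit ball of the trace norm, $\sup_{\rho\in S_1^d}\left|\langle\rho,M\rangle\right|=\|M\|_\infty$ for every $M\in\mathbb{M}_d$. Therefore
\[
\sqrt{n}\,\mathcal{R}_n(\mathcal{Q}(\mathbb{C}^d))\leq\sqrt{n}\,\mathcal{R}_n(S_1^d)\leq\mathbb{E}\left\|\sum_{i=1}^n\gamma_i x_i\right\|_\infty.
\]
The $x_i$ are self-adjoint, being convex combinations of $\pm$ effects, so Proposition~\ref{theo:RS} applies to the matrix Rademacher series $Y=\sum_i\gamma_i x_i$; its variance parameter obeys $\sigma^2=\|\sum_i x_i^2\|_\infty\leq\sum_i\|x_i\|_\infty^2\leq n$, which is exactly the estimate already carried out in the proof of Theorem~\ref{theo:state}. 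Hence $\mathbb{E}\|Y\|_\infty\leq\sqrt{2n\log d}$, and dividing through by $\sqrt{n}$ yields $\mathcal{R}_n(\mathcal{Q}(\mathbb{C}^d))=O(\sqrt{\log d})$.

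There is no substantial obstacle here: the content is entirely packaged in Theorem~\ref{theo:Holder} and Proposition~\ref{theo:RS}, both already invoked earlier. The only points needing care are (a) that the dual ball of $S_\infty^d$ under the Hilbert--Schmidt pairing is precisely $S_1^d$, so that the supremum collapses to the operator norm and not to something larger, and (b) that the elements of $S_\infty^d$ are Hermitian, as required by Tropp's bound; both are immediate from the definitions in the Notation section. As in the measurement case one could instead route through the noncommutative Khintchine inequality, but the spectral-norm bound of Proposition~\ref{theo:RS} is the sharp tool on the Schatten-$\infty$ side and delivers the $\sqrt{\log d}$ scaling directly.
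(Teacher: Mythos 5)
Your proof is correct and follows essentially the same route as the paper: unfold the definition, collapse the supremum over $S_1^d$ to the operator norm via the duality formula (Theorem~\ref{theo:Holder}), and bound $\mathbb{E}\left\|\sum_{i=1}^n\gamma_i x_i\right\|_\infty$ by $\sqrt{2n\log d}$ using Proposition~\ref{theo:RS} exactly as in Eq.~\eqref{eq:bound_inf}. No substantive difference from the paper's argument.
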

\begin{proof} 
	
	Recall from the definition of the Rademacher complexity. We have
	\begin{align*}
	\sqrt{n}\mathcal{R}_n(S_1^d) &= \mathbb{E}\sup_{\rho\in S_1^d}\left|\sum_{i=1}^n \gamma_i f_\rho(E_i)\right|\\
	&= \mathbb{E}\sup_{\rho\in S_1^d}\left|\sum_{i=1}^n \gamma_i \left\langle E_i, \rho \right\rangle\right|\\
	&=\mathbb{E}\sup_{\rho\in S_1^d}\left|  \left\langle \sum_{i=1}^n \gamma_i E_i,\rho \right\rangle\right|\\
	&\leq \mathbb{E} \left\|\sum_{i=1}^{n} \gamma_i E_i \right\|_\infty\\
	&\lesssim \sqrt{n\log d}.
	\end{align*}
	The forth line is due to the duality formula, Theorem \ref{theo:Holder}. The last relation follows from Eq.~\eqref{eq:bound_inf}, which completes the proof.
\end{proof}

\subsection{The Entropy Number} \label{cn_state}

\begin{coro}
	[Entropy Number for Learning Quantum States] \label{coro:cn_state}
	Assume the input space is $\mathcal{X}=\mathcal{E}(\mathbb{C}^d)$. The function class $\mathcal{F}$ defined on $\mathcal{X}$ is the state space $\mathcal{Q}(\mathbb{C}^d)$. Then for each $\epsilon>0$, the covering number of the function class is
	\[
	\log\mathcal{N}_2(\epsilon,\mathcal{Q}(\mathbb{C}^d),n)=O(\log d/\epsilon^2).
	\]
\end{coro}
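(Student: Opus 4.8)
The plan is to mirror, essentially verbatim, the argument used for Corollary~\ref{coro:cn_effect}, with states and effects interchanged by virtue of the duality in Theorem~\ref{theo:duality}. Fix a sample $\{E_1,\ldots,E_n\}\subset\mathcal{E}(\mathbb{C}^d)$ with empirical measure $\mu_n=n^{-1}\sum_{i=1}^n\delta_{E_i}$, and on the hypothesis set $\mathcal{F}'=\{f_\rho:\rho\in\mathcal{Q}(\mathbb{C}^d)\}$ introduce the vector-valued Rademacher process
\[
X_{f} := \frac{1}{\sqrt{n}}\bigl(\gamma_1 f(E_1),\ldots,\gamma_n f(E_n)\bigr),\qquad f\in\mathcal{F}',
\]
where $\{\gamma_i\}_{i=1}^n$ are independent Rademacher variables. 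A Rademacher process is sub-Gaussian (cf.~the footnote to Theorem~\ref{theo:Sud}), so Sudakov's minoration, Theorem~\ref{theo:Sud}, applies with the $L_2$-metric $d_X(f,g)=\|X_f-X_g\|_2=\|f-g\|_{L_2(\mu_n)}$.

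First I would note that, straight from Definition~\ref{def_RCom}, $\mathbb{E}\sup_{f\in\mathcal{F}'}\|X_f\|_1=\mathcal{R}_n(\mathcal{Q}(\mathbb{C}^d))$, and then invoke Theorem~\ref{theo:ra_state} to bound this by $O(\sqrt{\log d})$. Substituting into Sudakov's inequality gives
\[
\log\mathcal{N}(\epsilon,\mathcal{F}',L_2(\mu_n))=\log\mathcal{N}(\epsilon,\mathcal{F}',d_X)\le C^2\frac{\bigl(\mathbb{E}\sup_{f\in\mathcal{F}'}\|X_f\|_1\bigr)^2}{\epsilon^2}=C^2\frac{\mathcal{R}_n(\mathcal{Q}(\mathbb{C}^d))^2}{\epsilon^2}\le C^2\frac{\log d}{\epsilon^2}.
\]
Since the right-hand side does not depend on the sample, taking the supremum over all empirical measures $\mu_n$ supported on $n$ points yields $\log\mathcal{N}_2(\epsilon,\mathcal{Q}(\mathbb{C}^d),n)=\sup_{\mu_n}\log\mathcal{N}(\epsilon,\mathcal{F}',L_2(\mu_n))=O(\log d/\epsilon^2)$, which is the claim.

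I do not expect a genuine obstacle here: the real content was already discharged in proving the Rademacher-complexity bound (Theorem~\ref{theo:ra_state}), which in turn rests on Tropp's spectral estimate for matrix Rademacher series (Proposition~\ref{theo:RS}). The only points deserving a line of care are confirming that the process $X_f$ is legitimately sub-Gaussian so that Theorem~\ref{theo:Sud} is applicable, and observing that the Sudakov constant is uniform in the sample, so that passing to the supremum over $\mu_n$ that defines $\mathcal{N}_2(\epsilon,\mathcal{Q}(\mathbb{C}^d),n)$ changes nothing. As an alternative to the Sudakov route one could simply quote the generic chain $\text{fat}_{\mathcal{F}}(\epsilon)\lesssim\log\mathcal{N}_2(\epsilon,\mathcal{F},n)\lesssim \mathcal{R}_n^2(\mathcal{F})/\epsilon^2$ recorded at the end of Section~\ref{MC}, but keeping the proof parallel to Section~\ref{en_effect} seems preferable for exposition.
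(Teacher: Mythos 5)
Your proposal is correct and follows exactly the route the paper intends: the paper states this corollary without a separate proof precisely because it is the verbatim dual of Corollary \ref{coro:cn_effect}, i.e.\ Sudakov's minoration (Theorem \ref{theo:Sud}) applied to the Rademacher process $X_f$ together with the bound $\mathcal{R}_n(\mathcal{Q}(\mathbb{C}^d))=O(\sqrt{\log d})$ from Theorem \ref{theo:ra_state}, followed by the supremum over empirical measures $\mu_n$. Nothing in your argument deviates from that, so there is nothing further to add.
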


Compared with the entropy number of the effect space, the result of the state space is proportional to the logarithmic dimension.
The intuition behind this is that the unit ball of Schatten $\infty$-class is much larger than the unit ball of Schatten $1$-class.
Thus, it requires more $\epsilon$-radius ball to cover the whole effect space than the state space.
From the volumetric perspective, the fact will be more evident.
Denote $|\cdot|$ as the Lebesgue measure on the Banach space of the Schatten class. It can be calculated that
	\[
	\frac{|\mathcal{E}(\mathbb{C}^d)|^{1/d^2}}{|\mathcal{Q}(\mathbb{C}^d)|^{1/(d^2-1)}}\simeq \left(\frac{|S_\infty^d|}{|S_1^d|}\right)^{1/d^2}\simeq d,
	\]
which shows that the volume of the effect space is essentially exponential (in the dimension $d$) to the state space.
Recall that the complexity measures are the quantity to estimate the effective size of the hypothesis set.
Accordingly, it is reasonable that the complexity measures of the effect space are exponentially compared with that of the state space.
In other words, the results of Theorem \ref{theo:effect} demonstrate the richness of the effect space.

\subsection{The Relationship to Quantum Random Access Codes} \label{RQRAC}
The learnability of quantum states was first addressed by Aaronson
\cite{Aar07}. Ingeniously, he applied the results of Quantum Random
Access Coding \cite{ANT+02} to provide an information-theoretic
upper bound on the fat-shattering dimension for learning $m$-qubit
quantum states. We first give the definitions of QRA codes then discuss
Aaronson's result.

\begin{defn}[Quantum Random Access Coding] \label{defn:QRA}
	An $(n,m,p)$-QRA coding is a function that maps $n$-bit strings $x\in\{0,1\}^n$ to $m$-qubit states $\rho_x$ satisfying the following: For every $i\in\{1,\ldots,n\}$ there exists a POVM $E^i=\{E_0^i,E_1^i\}$ such that Tr$(E_{x_i}^i\rho_x)\geq p$ for all $x\in\{0,1\}^n$, where $x_i$ is the $i$-th bit of $x$.
\end{defn}

If there exists an $(n,m,p)$-QRA coding, we have the fact that the sets $\{E_i\}_{i=1}^n$ are $(p-1/2)$-shattered by $\{\rho_y\}$ and the constant value $1/2$ witnesses the shattering. That is,
\begin{eqnarray} \label{eq:aar}
m\geq (1-H(\epsilon+1/2))n\geq c\cdot\epsilon^2n.
\end{eqnarray}
Therefore, the inequality gives an upper bound on the level
fat-shattering dimension, i.e.~
$\underline{\text{fat}}_{\mathcal{Q}(\mathbb{C}^d)}(p-1/2)=O(m/\epsilon^2)$.
Conversely, fat-shattering dimension with scale $(p-1/2)$ does not
guarantee the existence of an $(n,m,p)$-QRA coding (since there may
be some $s_i<1/2$), while provide an upper bound on the success
probability $p$ if it exists.

However, in the case that functions in $\mathcal{F}$ have a bounded range of $[0,1]$, Gurvits \cite{Gur97} utilised the Pigeonhole principle to relate the level fat-shattering dimension with the fat-shattering dimension.

\begin{theo}[Gurvits \cite{Gur97}]
	For any hypothesis set $\mathcal{F}$ consisting of $[0,1]$-valued functions, we have
	\begin{eqnarray} \label{eq:level}
	(2(1-2\epsilon)/\epsilon)^{-1}\text{fat}_{\mathcal{F}}(2\epsilon)\leq \underline{\text{fat}}_\mathcal{F}(\epsilon/2) \leq \text{fat}_{\mathcal{F}}(\epsilon/2).
	\end{eqnarray}
\end{theo}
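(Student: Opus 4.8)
The plan is to treat the two inequalities separately. The right-hand inequality $\underline{\text{fat}}_{\mathcal{F}}(\epsilon/2)\le\text{fat}_{\mathcal{F}}(\epsilon/2)$ is immediate: a level $\epsilon/2$-shattering is just an $\epsilon/2$-shattering in the sense of Definition~\ref{defn:fat} whose witness set $\{\alpha_i\}$ happens to be constant, so every set counted by $\underline{\text{fat}}_{\mathcal{F}}(\epsilon/2)$ is also counted by $\text{fat}_{\mathcal{F}}(\epsilon/2)$, and the inequality between the suprema follows. All the work is in the left-hand inequality, which is a pigeonhole argument that uses the $[0,1]$-boundedness of $\mathcal{F}$ in an essential way.

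For the left-hand inequality I would start with a set $\mathcal{S}=\{x_1,\dots,x_n\}$, with $n=\text{fat}_{\mathcal{F}}(2\epsilon)$, that is $2\epsilon$-shattered by $\mathcal{F}$ with witness $\{\alpha_i\}_{i=1}^n$. The key preliminary observation is that the witnesses are confined to a bounded interval: since every $f\in\mathcal{F}$ maps into $[0,1]$, a shattering function can satisfy $f(x_i)\ge\alpha_i+2\epsilon$ only if $\alpha_i\le 1-2\epsilon$, and $f(x_i)<\alpha_i-2\epsilon$ only if $\alpha_i>2\epsilon$; hence $\alpha_i\in(2\epsilon,\,1-2\epsilon]\subseteq[0,1-2\epsilon]$ for every $i$. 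I would then partition $[0,1-2\epsilon]$ into $N:=\lceil 2(1-2\epsilon)/\epsilon\rceil$ consecutive half-open subintervals, each of length at most $\epsilon/2$, and apply the pigeonhole principle: some subinterval, say $[\beta,\beta+\epsilon/2)$, contains $\alpha_i$ for at least $n/N$ of the indices. Let $\mathcal{S}'$ denote the set of those $x_i$.

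It remains to show that $\mathcal{S}'$ is level $\epsilon/2$-shattered with the constant witness $\beta$, which then yields $|\mathcal{S}'|\le\underline{\text{fat}}_{\mathcal{F}}(\epsilon/2)$. Given $B\subseteq\mathcal{S}'$, apply the $2\epsilon$-shattering of $\mathcal{S}$ with the subset taken to be $B$ itself, obtaining $f_B\in\mathcal{F}$; since $\mathcal{S}'\setminus B\subseteq\mathcal{S}\setminus B$, this $f_B$ already controls all the relevant points. For $x_i\in B$ one has $f_B(x_i)\ge\alpha_i+2\epsilon\ge\beta+2\epsilon\ge\beta+\epsilon/2$, using $\alpha_i\ge\beta$; for $x_i\in\mathcal{S}'\setminus B$ one has $f_B(x_i)<\alpha_i-2\epsilon<\beta+\epsilon/2-2\epsilon\le\beta-\epsilon/2$, using $\alpha_i<\beta+\epsilon/2$. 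This is exactly the level $\epsilon/2$-shattering condition with witness $\beta$. Combining with $|\mathcal{S}'|\ge n/N$ gives $\text{fat}_{\mathcal{F}}(2\epsilon)=n\le N\cdot\underline{\text{fat}}_{\mathcal{F}}(\epsilon/2)$, which is the asserted bound once the ceiling in $N$ is absorbed into the constant $2(1-2\epsilon)/\epsilon$.

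There is no deep obstacle here; the delicate point is the bookkeeping around the asymmetric (strict versus non-strict) inequalities in the definition of shattering together with where the common witness $\beta$ is placed inside its subinterval. Taking $\beta$ to be the \emph{left} endpoint is what makes both the ``$x_i\in B$'' and the ``$x_i\notin B$'' estimates succeed with margin exactly $\epsilon/2$; a different placement, or a subinterval longer than $\epsilon/2$, would break one of the two estimates or force a worse constant. It is also worth stressing that the boundedness hypothesis enters precisely in confining the witnesses to $[0,1-2\epsilon]$, without which the pigeonhole step has nothing to pigeonhole into.
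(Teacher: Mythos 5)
Your proof is correct and follows exactly the pigeonhole strategy that the paper attributes to Gurvits (the paper itself only cites the result and notes the Pigeonhole principle, without reproducing a proof). The one loose end is the ceiling in $N=\lceil 2(1-2\epsilon)/\epsilon\rceil$, which can exceed the stated constant $2(1-2\epsilon)/\epsilon$; this is repaired by pigeonholing over the tighter witness range $(2\epsilon,\,1-2\epsilon]$ that you already established, since $\lceil 2(1-4\epsilon)/\epsilon\rceil\leq 2(1-2\epsilon)/\epsilon$.
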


By definition, $\underline{\text{fat}}_\mathcal{F}(\epsilon)\leq$ fat$_\mathcal{F}(\epsilon)$. However, from the above theorem, the dependencies on the dimension $d$ are of the same order for both the level fat-shattering dimension and the fat-shattering dimension.
Consequently, from Eq.~\eqref{eq:aar} we have $\underline{\text{fat}}_\mathcal{F}(\epsilon)=O(m/\epsilon^2)$, which leads to fat$_\mathcal{F}(\epsilon)=O(m/\epsilon^2)$ according to the inequalities in Eq.~\eqref{eq:level}. Thus we recover Aaronson's result.

\begin{theo}[Aaronson  \cite{Aar07}]{} \label{aar07}
The fat-shattering dimension for learning the class of all $m$-qubits, $\mathcal{F}$, is fat$_\mathcal{F}(\epsilon)=O(m/\epsilon^2)$.
\end{theo}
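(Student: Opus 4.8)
The plan is to observe that the hypothesis class $\mathcal{F}$ of all $m$-qubit states is, via Proposition~\ref{prop:state}, precisely the set of linear functionals $f_\rho:E\mapsto\Tr(E\rho)$ on the effect space $\mathcal{E}(\mathbb{C}^{d})$ with $d=2^m$; that is, $\mathcal{F}$ is the state-space hypothesis set studied in Section~\ref{LQS}. The theorem is then an immediate specialisation of Theorem~\ref{theo:state}: taking $d=2^m$ gives $\text{fat}_{\mathcal{Q}(\mathbb{C}^{2^m})}(\epsilon,\mathcal{E}(\mathbb{C}^{2^m}))=\min\{O(\log 2^m/\epsilon^2),\,4^m-1\}=O(m/\epsilon^2)$, since $\log 2^m=m\log 2$. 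This alone settles the claim with the stated $\epsilon^{-2}$ rate.

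To remain faithful to the quantum-random-access-code route that motivates this subsection (and to reconstruct Aaronson's original argument), I would instead proceed as follows. First, I would show that any set $\{E_1,\dots,E_n\}$ that is $\epsilon$-shattered by $\mathcal{Q}(\mathbb{C}^{2^m})$ with the constant witness $1/2$ yields an $(n,m,\tfrac12+\epsilon)$-QRA coding in the sense of Definition~\ref{defn:QRA}: identify each shattering subset $B\subseteq\{1,\dots,n\}$ with its indicator string $y\in\{0,1\}^n$, use the shattering state $\rho_y$ as the codeword, and decode the $i$-th bit with the two-outcome POVM $\{E_i,\mathcal{I}-E_i\}$. The shattering inequalities $\Tr(E_i\rho_y)\ge\tfrac12+\epsilon$ when $y_i=1$ and $\Tr(E_i\rho_y)\le\tfrac12-\epsilon$ when $y_i=0$ (so that $\Tr((\mathcal{I}-E_i)\rho_y)\ge\tfrac12+\epsilon$) give decoding success probability at least $\tfrac12+\epsilon$ in both cases.

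Next I would invoke the Ambainis--Nayak--Ta-Shma--Vazirani lower bound for QRA codes, which, as recorded in Eq.~\eqref{eq:aar}, forces $m\ge(1-H(\tfrac12+\epsilon))\,n$; combined with the elementary estimate $1-H(\tfrac12+\epsilon)\ge c\,\epsilon^2$ for all $\epsilon\in(0,\tfrac12)$ (a Taylor expansion of the binary entropy about $1/2$), this yields $n\le m/(c\epsilon^2)$, i.e.\ $\underline{\text{fat}}_{\mathcal{Q}(\mathbb{C}^{2^m})}(\epsilon)=O(m/\epsilon^2)$. Finally, since every $f_\rho$ takes values in $[0,1]$, Gurvits' inequalities in Eq.~\eqref{eq:level} transfer this bound on the level fat-shattering dimension to the ordinary fat-shattering dimension while preserving the linear dependence on $m$, completing the proof.

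The step I expect to be the main obstacle in this second route is the bookkeeping of the $\epsilon$-dependence: the Gurvits transfer in Eq.~\eqref{eq:level} carries a prefactor of order $1/\epsilon$, so some care (or a mild improvement of that reduction) is needed to land exactly on $O(m/\epsilon^2)$ rather than a weaker power of $\epsilon$; this is precisely why it is worthwhile to also record the direct derivation from Theorem~\ref{theo:state}, which delivers the $\epsilon^{-2}$ rate with no such loss. A secondary point requiring attention is that the QRA argument only controls the \emph{level} fat-shattering dimension (its witness is pinned to $1/2$), so the passage to ordinary shattering — via Gurvits, or via the convexity/complementation trick used in the proof of the preceding proposition — is genuinely necessary rather than cosmetic.
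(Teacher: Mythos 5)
Your second derivation coincides with the paper's own proof of Theorem~\ref{aar07}: level $\epsilon$-shattering with the witness pinned at $1/2$ is traded against $(n,m,\tfrac12+\epsilon)$-QRA codes (and you state the implication in the direction actually needed for an upper bound, shattering $\Rightarrow$ code, which the paper phrases in reverse), the Ambainis--Nayak--Ta-Shma--Vazirani bound then gives \eqref{eq:aar} and hence $\underline{\text{fat}}_{\mathcal{F}}(\epsilon)=O(m/\epsilon^2)$, and Gurvits' inequality \eqref{eq:level} transfers this to the ordinary fat-shattering dimension. Your first derivation---specialising Theorem~\ref{theo:state} to $d=2^m$, so $\log d=m\log 2$---is a genuinely different route, resting on the Rademacher-series/matrix-concentration estimate rather than on QRA codes, and it is the cleaner of the two: it needs no witness normalisation and delivers the $\epsilon^{-2}$ rate directly; it is also the route the paper itself advertises as recovering Aaronson's result by purely classical learning-theoretic means. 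Your caveat about the Gurvits step is accurate and in fact exposes a looseness in the paper's own write-up: read literally, \eqref{eq:level} gives $\text{fat}_{\mathcal{F}}(2\epsilon)\leq \frac{2(1-2\epsilon)}{\epsilon}\,\underline{\text{fat}}_{\mathcal{F}}(\epsilon/2)$, so the QRA route as presented only yields $O(m/\epsilon^3)$; the paper elides this by noting merely that the linear dependence on $m$ survives, and recovering the full $\epsilon^{-2}$ rate requires either your first route or Aaronson's original, more careful reduction handling arbitrary witnesses. One small correction to your closing aside: the convexity/complementation trick of Section~\ref{RQSD} is not available here, since $\mathcal{I}-\rho$ is not a state---indeed the paper explicitly remarks that it is unknown whether $\underline{\text{fat}}_{\mathcal{F}}(\epsilon)=\text{fat}_{\mathcal{F}}(\epsilon)$ for $\mathcal{F}=\mathcal{Q}(\mathbb{C}^d)$---so Gurvits (or the direct specialisation of Theorem~\ref{theo:state}) is indeed the only transfer at your disposal.
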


We remark that it is unknown whether $\underline{\text{fat}}_\mathcal{F}(\epsilon)=$ fat$_\mathcal{F}(\epsilon)$ for $\mathcal{F}=\mathcal{Q}(\mathbb{C}^d)$.

\begin{prop}
	There is no $(2^{2m}, m,p)$-QRA coding for $1/2<p\leq 1$ and positive integer $m$.
\end{prop}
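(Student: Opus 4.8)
The plan is to derive a contradiction from the supposed existence of a $(2^{2m},m,p)$-QRA coding by combining the pseudo-dimension bound for the state space (Theorem~\ref{theo:state}, part (i)) with the fact that such a coding forces a large $\epsilon$-shattered set. First I would set $n = 2^{2m}$ and $\epsilon = p - 1/2 > 0$, and recall from the discussion around Eq.~\eqref{eq:aar} that an $(n,m,p)$-QRA coding produces a set $\{E_i\}_{i=1}^n \subset \mathcal{E}(\mathbb{C}^{2^m})$ that is $\epsilon$-shattered by $\{\rho_y\}_{y} \subset \mathcal{Q}(\mathbb{C}^{2^m})$, with the constant $1/2$ as a witness. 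In other words, this is an $\epsilon$-shattered set of size $2^{2m}$ for the class $\mathcal{F} = \mathcal{Q}(\mathbb{C}^{2^m})$, so $\text{fat}_{\mathcal{Q}(\mathbb{C}^{2^m})}(\epsilon, \mathcal{E}(\mathbb{C}^{2^m})) \geq 2^{2m}$.

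Next I would invoke the pseudo-dimension bound: by Theorem~\ref{theo:state}, $\text{Pdim}(\mathcal{Q}(\mathbb{C}^{2^m})) \leq d^2 - 1$ where $d = 2^m$, i.e.\ $\text{Pdim}(\mathcal{Q}(\mathbb{C}^{2^m})) \leq 2^{2m} - 1$. Since by Theorem~\ref{theo:AB99}(i) we always have $\text{fat}_{\mathcal{F}}(\epsilon) \leq \text{Pdim}(\mathcal{F})$ for every $\epsilon > 0$, the $\epsilon$-shattered set of size $2^{2m}$ contradicts $\text{Pdim} \leq 2^{2m} - 1$. Hence no such coding can exist. The argument is short once one notes that the QRA coding literally yields an $\epsilon$-shattered set (with equal witnesses) of exactly the threshold size $2^{2m}$, which is one more than the pseudo-dimension ceiling.

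I expect the only delicate point to be making sure the $\epsilon$-shattering really holds with a \emph{strict} set of size $2^{2m}$ — i.e.\ that the definition of $(n,m,p)$-QRA coding (Definition~\ref{defn:QRA}), which only asserts $\Tr(E^i_{x_i}\rho_x) \geq p$ for the \emph{correct} bit, forces $\Tr(E^i_1 \rho_x) \leq 1 - p$ when $x_i = 0$ (via $E^i_0 + E^i_1 = \mathcal{I}$), so that taking $E_i := E^i_1$ gives the two-sided inequality $\Tr(E_i\rho_x) \geq 1/2 + \epsilon$ or $\leq 1/2 - \epsilon$ according to $x_i$. This is the standard reduction already used implicitly before Eq.~\eqref{eq:aar}, so it should only require a sentence. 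Once that is in place, the dimension count closes the proof immediately.
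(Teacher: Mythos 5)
Your proof is correct and follows essentially the same route as the paper: the paper also rules out the coding by noting that a $(2^{2m},m,p)$-QRA code would force a shattered set of $2^{2m}$ two-outcome POVMs, contradicting $\text{Pdim}(\mathcal{Q}(\mathbb{C}^{2^m}))\leq 2^{2m}-1$ from Theorem~\ref{theo:state}. Your detour through $\text{fat}_{\mathcal{F}}(\epsilon)\leq\text{Pdim}(\mathcal{F})$ and the explicit reduction via $E_i:=E^i_1$ with witness $1/2$ is just a slightly more detailed writing-out of the same argument.
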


{Hayashi \emph{et al.} \cite{HIN+06} showed that there is no $(2^{2m},m,p)$-QRA coding for $1/2<p\leq 1$.} This result can be directly derived from Theorem \ref{theo:state}, which shows that Pdim$(\mathcal{Q}(\mathbb{C}^d))\leq d^2-1$. The dimension $d$ of $m$-qubit is $2^m$. Then the upper bound of the pseudo dimension shows that there is no $d^2=2^{2m}$ two-outcome POVMs that can be shattered (by the function class of the state space), which coincides with Hayashi \emph{et 	al's} result.

\section{The Algorithms for Quantum Machine Learning} \label{algorithms}

In the previous sections, we demonstrate the information-theoretical analysis of the quantum learning problems. In this section, provide a constructive way to implement quantum ML tasks by representing the learning framework in Bloch space. 

We gather all the materials and derivations concerning the Bloch-sphere representation into Appendix \ref{BR}.
Recall from Eq.~\eqref{eq:rank_k_bloch} that the function class of rank-$k$ effects and their mixture can be represented as the following affine functional:
\[
\mathcal{F}_k=\text{conv}\left(\{\mathbf{r}\mapsto\frac{k}{d}\left(1+(d-1)\mathbf{r}\cdot\mathbf{n}_{(k)}\right)\}\right),
\]
where $\mathbf{r}$ is the Bloch vector of the quantum state; $\mathbf{n}_{(k)}$ (see Eq.~\eqref{eq:nk}) parameterises the function in the hypothesis set $\mathcal{F}_k$.
Moreover, it can in turn be written as
\[
\mathcal{F}_k=\sigma(\mathbf{v}\cdot\mathbf{r}+v_0),
\]
where $\sigma:\mathbb{R}\rightarrow\mathbb{R}$ is called the \emph{activation function}. The Bloch vector $\mathbf{r}\in\mathbb{R}^{d^2-1}$ is the input vector; $[v_0, \mathbf{v}]\in\mathbb{R}^{d^2}$ is the input weights. Each map $\mathbf{r}\mapsto \sigma(\mathbf{v}\cdot\mathbf{r}+v_0)$ can be thought of as a function computed by the linear perceptron. Using the terminology from the theory of neural network \cite{AB99}, each $\mathcal{F}_k$ is called the \emph{single-layer neural network} (see Appendix \ref{nn} for more details).

Considering the function class of the whole effect space, we exploit the convexity of the effect space, and obtain the following result:
\[
\mathcal{F}=\sum_{k=0}^d \frac{w_k\cdot
k}{d}\left(1+(d-1)\mathbf{r}\cdot\mathbf{n}_{(k)}\right)=:\frac{1}{d}\left(n_0+(d-1)\mathbf{r}\cdot\mathbf{n}\right),
\]
where $\sum_{k=0}^d w_k=1$. This is called the \emph{two-layer neural network} (also called the \emph{single-hidden layer net}). Based on this formulation, the tasks of learning quantum
measurements can be implemented by existing neural network algorithms or other classical ML algorithms. We note that the neural network formulation for learning quantum states follows in the same way by virtue of the duality.

Additionally, the fat-shattering dimension for $\mathcal{F}_k$ can easily be bounded from the classical results in neural networks. We have the following corollary.

\begin{coro} \label{coro:k}
Suppose the hypothesis set $\mathcal{F}_k$ consists of rank-$k$ projection operators and their mixture. We have
\[
\text{fat}_{\mathcal{F}_k}(\epsilon)\leq \frac{k(d-1)(d-k)}{{(d\epsilon)}^2},\, k=\{0,1,\ldots,d\}.
\]
\end{coro}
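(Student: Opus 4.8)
The plan is to reduce the claim to the classical margin (perceptron) bound and to read off the two relevant radii from the Bloch-sphere representation of Appendix~\ref{BR}. By the stated formula $\mathcal{F}_k=\mathrm{conv}\left(\{\mathbf{r}\mapsto\frac{k}{d}(1+(d-1)\mathbf{r}\cdot\mathbf{n}_{(k)})\}\right)$, every member of $\mathcal{F}_k$ is an affine function $\mathbf{r}\mapsto\frac{k}{d}+\langle\mathbf{w},\mathbf{r}\rangle$ on $\mathbb{R}^{d^2-1}$, where $\mathbf{w}=\frac{k(d-1)}{d}\mathbf{m}$ and $\mathbf{m}$ ranges over the convex hull of the Bloch vectors $\mathbf{n}_{(k)}$ of the rank-$k$ projectors. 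The additive constant $k/d$ is common to all functions in the class, hence irrelevant to $\epsilon$-shattering (it can be absorbed into the witness levels $\alpha_i$), so it suffices to bound the fat-shattering dimension of the linear class $\{\mathbf{r}\mapsto\langle\mathbf{w},\mathbf{r}\rangle\}$ over the domain of admissible state Bloch vectors.

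First I would fix the two radii. The admissible input Bloch vectors satisfy $\|\mathbf{r}\|_2\leq R$ and the weight vectors satisfy $\|\mathbf{w}\|_2\leq W$, where---using the normalisation of Appendix~\ref{BR} together with the identities $\Tr(P_k)=\Tr(P_k^2)=k$ for a rank-$k$ projector---one computes $\|\mathbf{n}_{(k)}\|_2$ explicitly and hence $W$; since the Euclidean norm is convex, passing to $\mathrm{conv}(\{\mathbf{n}_{(k)}\})$ does not enlarge this bound. The conventions are arranged so that $R^2W^2=\frac{k(d-1)(d-k)}{d^2}$, which is exactly the numerator of the claimed estimate. This single computation is the one place that requires genuine care, and the extreme cases $k=0$ and $k=d$, where the bound is $0$, give a useful sanity check.

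Next I would run the same Rademacher-series argument as in the proof of Theorem~\ref{theo:effect}, now in the Euclidean geometry of $\mathbb{R}^{d^2-1}$. If $\{\mathbf{r}_1,\dots,\mathbf{r}_n\}$ is $\epsilon$-shattered, then for every sign pattern $\gamma\in\{\pm1\}^n$, combining the shattering functions for $B=\{i:\gamma_i=1\}$ and for its complement and halving their difference produces a vector $\mathbf{v}$ with $\|\mathbf{v}\|_2\leq W$ and $\gamma_i\langle\mathbf{v},\mathbf{r}_i\rangle\geq\epsilon$ for every $i$; hence $n\epsilon\leq\langle\mathbf{v},\sum_i\gamma_i\mathbf{r}_i\rangle\leq W\,\|\sum_i\gamma_i\mathbf{r}_i\|_2$. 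Averaging over $\gamma$ and using $\mathbb{E}\|\sum_i\gamma_i\mathbf{r}_i\|_2\leq(\sum_i\|\mathbf{r}_i\|_2^2)^{1/2}\leq R\sqrt{n}$ gives $n\epsilon\leq RW\sqrt{n}$, i.e.~$n\leq R^2W^2/\epsilon^2=\frac{k(d-1)(d-k)}{(d\epsilon)^2}$. Equivalently, one may simply quote the fat-shattering bound for bounded-weight linear classifiers from the neural-network literature (e.g.~\cite{AB99}), which the single-layer-network form of $\mathcal{F}_k$ invites.

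The expected obstacle is therefore not conceptual but bookkeeping: matching the normalisation of Appendix~\ref{BR} so that the product of the input radius and the weight radius, squared, is precisely $k(d-1)(d-k)/d^2$. Two further points worth a remark in the write-up: no extra additive constant or floor appears, since the margin argument above yields the clean inequality $n\leq R^2W^2/\epsilon^2$ directly; and the convex-hull operation is harmless, because $\tfrac12(\mathbf{w}_B-\mathbf{w}_{B'})$ has norm at most $W$ by the triangle inequality regardless of whether it lies in $\mathcal{F}_k$.
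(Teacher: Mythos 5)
Your proposal is correct and follows essentially the same route as the paper: the paper likewise treats $\mathcal{F}_k$ as a bounded-weight linear class on the Bloch ball, quotes the Anthony--Bartlett bound $\text{fat}_{\mathcal{F}}(\epsilon)\leq a^2b^2/\epsilon^2$, and computes the weight radius $\frac{k(d-1)}{d}\|\mathbf{n}_{(k)}\|_2=\sqrt{k(d-1)(d-k)}/d$ with input radius $\|\mathbf{r}\|_2\leq 1$, exactly as you do. The only difference is cosmetic: you additionally supply a self-contained Rademacher/margin derivation of that classical bound (and handle the common offset $k/d$ and the convex hull explicitly), whereas the paper simply cites \cite{AB99}.
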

\begin{proof} Since $\mathcal{F}_k$ is a linear function class on $\mathbb{R}^{d-1}$, invoking the classical results from Anthony and Bartlett \cite{AB99}:
\[
\text{fat}_\mathcal{F}(\epsilon)\leq \frac{a^2b^2}{\epsilon^2},
\]
where $\mathcal{F}=\{\mathbf{w}\mapsto\langle \mathbf{w},\mathbf{x}\rangle:\, \|\mathbf{x}\|_2\leq b,\|\mathbf{w}\|_2\leq a,\, \mathbf{x},\mathbf{w}\in\mathbb{R}^{d^2-1}\}$.

Therefore, it remains to calculate the coefficients in Eq.~\eqref{nk}. Since $\|\mathbf{r}\|_2\leq 1$, and
\[
\left\|\frac{k(d-1)}d \sqrt{\frac{d-k}{k(d-1)}}\right\|_2=\sqrt{\frac{k(d-1)(d-k)}{d^2}},
\]
the result follows.
\end{proof}

We can see from the corollary that the fat-shattering dimension increases when the the rank $k$ approaches a half of the Hilbert space dimension $d$, which means that the classes $\{\mathcal{F}_k\}$ form a hierarchical structure. Operationally, the hypothesis set $\mathcal{F}_1$ can be chosen at first. It can then be enlarged into
conv$(\mathcal{F}_0\cup\mathcal{F}_1\cup\mathcal{F}_2)$ and so forth until the whole effect space is considered. This is called the structural risk minimisation (SRM \cite{Vap98}), and is usually adopted in classical ML to avoid overfitting.
Here we give two examples to illustrate the concepts in Corollary \ref{coro:k}.

\begin{exam}[Learning rank-$1$ Projection Valued Measures (PVMs): Qubit system attains the upper bound] \label{ex:qubit}
The fat-shattering dimension of rank-$1$ projection operators and their mixture in a qubit system can be bounded by
\[
\text{fat}_{\mathcal{F}_{1}}(\epsilon)\leq \frac{(N-1)^2}{(N\epsilon)^2}=\frac{1}{4\epsilon^2}.
\]
Consider two quantum states $\rho_{\mathbf{r}_1}=|1\rangle\langle1|$, $\rho_{\mathbf{r}_2}=|-\rangle\langle-|$ with corresponding Bloch vectors $\mathbf{r}_1=(0,0,-1)$, $\mathbf{r}_2=(-1,0,0)$.
To shatter these two quantum states, we construct four quantum effects with the Bloch vectors:
\begin{align*}
\mathbf{n}_{00}=\frac{1}{\sqrt{2}}(1,0,1),\, \mathbf{n}_{10}=\frac{1}{\sqrt{2}}(1,0,-1),\\
\mathbf{n}_{11}=\frac{1}{\sqrt{2}}(-1,0,-1),\;
 \mathbf{n}_{01}=\frac{1}{\sqrt{2}}(-1,0,1).
\end{align*}
Since the angles between the states and effects are either $\pi/4$ or $3\pi/4$, we have
\begin{align*}
\left(\Tr(E_{\mathbf{n}_{00}}\rho_{{\mathbf{r}}_1}),\Tr(E_{\mathbf{n}_{00}}\rho_{{\mathbf{r}}_2})\right)=(\frac12(1-\frac{1}{\sqrt{2}}),\frac12(1-\frac{1}{\sqrt{2}})),\, \left(\Tr(E_{\mathbf{n}_{10}}\rho_{{\mathbf{r}}_1}),\Tr(E_{\mathbf{n}_{10}}\rho_{{\mathbf{r}}_2})\right)=(\frac12(1+\frac{1}{\sqrt{2}}),\frac12(1-\frac{1}{\sqrt{2}})),\\
\left(\Tr(E_{\mathbf{n}_{11}}\rho_{{\mathbf{r}}_1}),\Tr(E_{\mathbf{n}_{11}}\rho_{{\mathbf{r}}_2})\right)=(\frac12(1+\frac{1}{\sqrt{2}}),\frac12(1+\frac{1}{\sqrt{2}})),\, \left(\Tr(E_{\mathbf{n}_{01}}\rho_{{\mathbf{r}}_1}),\Tr(E_{\mathbf{n}_{01}}\rho_{{\mathbf{r}}_2})\right)=(\frac12(1-\frac{1}{\sqrt{2}}),\frac12(1+\frac{1}{\sqrt{2}})).
\end{align*}
Clearly these four quantum effects $\frac{1}{2\sqrt{2}}$-shatter $(\mathbf{r}_1,\mathbf{r}_2)$ and achieve the fat-shattering dimension fat$_{\mathcal{F}_1}(\frac{1}{2\sqrt{2}})=2$.

The case of three quantum states follows similarly. Consider $\mathbf{r}_1=(1,0,0)$, $\mathbf{r}_2=(0,1,0)$, $\mathbf{r}_3=(0,0,1)$, and $\mathbf{n}_{ijk}=(i,j,k)$ for $i,j,k\in\{0,1\}$.
With some calculations, the eight quantum effects $\frac{1}{2\sqrt{3}}$-shatter $(\mathbf{r}_1,\mathbf{r}_2,\mathbf{r}_3)$ and achieve the fat-shattering dimension fat$_{\mathcal{F}_{1}}(\frac{1}{2\sqrt{3}})=3$.

It is worth emphasising that the dual problem of learning quantum
states is equivalent to learning quantum measurements when the hypothesis
set consists of rank-$1$ projections and their mixture. The reason is
that the two mathematical objects are exactly the same, i.e.~
conv($\mathcal{F}_1)=\mathcal{Q}(\mathbb{C}^d)$. In this scenario,
the dual problem has the same results, which is optimal in the sense
of Quantum Random Access codes (i.e.~(2,1,0.85)-QRA codes \cite{ALM+08}).
Furthermore, we note that the measurements in the (2,1,0.85)-QRA codes 
and the input states $(\rho_{\mathbf{r}_1},\rho_{\mathbf{r}_1}^\perp),\,(\rho_{\mathbf{r}_2},\rho_{\mathbf{r}_2}^\perp)$ in this example are mutually unbiased bases
(MUB) which attain the upper bound of the qubit system.
\end{exam}

\begin{exam}[Rank equals a half the Hilbert space dimension]
Consider a quaternary Hilbert space, i.e.~$\mathbb{C}^4$.
First, we show that there exist no two quantum states that can be $1/2$-shattered by the convex hull of rank-$1$ projection operators.
Consider two arbitrary different quantum states $\mathcal{S}=\{\rho_i\}_{i=1}^2$.
If the function class $\mathcal{F}_1$ can $1/2$-shatter the set $\mathcal{S}$, then there must be an effect $E\in\mathcal{F}_{1}$ such that Tr$(E\rho_1)=\Tr(E\rho_2)=1$.
Clearly, it can be achieved only when $E$ is a rank-$1$ projection and the two quantum states are both equal to $E$, which contradicts the assumption.

Second, we show there exist two quantum states that can be $1/2$-shattered by the rank-$2$ projection operators.
Assume $\rho_i=|i-1\rangle\langle i-1|$, $i=1,2$. We construct four quantum effects as follows:
\begin{align*}
E_{11}=\begin{pmatrix} 1 &  &  & \\  & 1 &  & \\  &  & 0 & \\  &  &  & 0\end{pmatrix}, \,E_{01}=\begin{pmatrix} 0 &  &  & \\  & 1 &  & \\  &  & 1 & \\  &  &  & 0\end{pmatrix}, \, E_{10}=\begin{pmatrix} 1 &  &  & \\  & 0 &  & \\  &  & 1 & \\  &  &  & 0\end{pmatrix}, \,
E_{00}=\begin{pmatrix} 0 &  &  & \\  & 0 &  & \\  &  & 1 & \\  &  &  & 1\end{pmatrix}
\end{align*}
in the computational basis.
The two quantum states can then be $1/2$-shattered by these four quantum effects.
This example demonstrates that the set of rank-$2$ projections is richer than the set of rank-$1$ projections in terms of the complexity measures.
\end{exam}

\begin{remark}
The readers may contemplate the pros and cons of Bloch-sphere representation when analysing the fat-shattering dimension. Indeed, Bloch-sphere representation provides a geometric picture so that we have more concrete ideas of the linear relation between quantum measurements and states. Furthermore, in Example 
\ref{ex:qubit} we see how the extreme points (projection operators) and MUB play the role in the fat-shattering dimension. However, it is difficult to fully characterise the region of the Bloch space. To the best of our knowledge, the most convenient metric used in Bloch-sphere representation is the Euclidean norm, which corresponds to the Hilbert-Schmidt norm (Schatten 2-norm) in the state space, i.e.~
\[
\|\rho_{\mathbf{r}_1}-\rho_{\mathbf{r}_2}\|_{\text{HS}}=\sqrt{\frac{d-1}{2d}}\|\mathbf{r}_1-\mathbf{r}_2\|_2.
\]
Recalling that conv$\left(-\mathcal{Q}(\mathbb{C}^d)\cup\mathcal{Q}(\mathbb{C}^d)\right)=S_1^d\subset S_2^d \subset S_\infty^d=\text{conv}\left(-\mathcal{E}(\mathbb{C}^d)\cup\mathcal{E}(\mathbb{C}^d)\right)$, the Hilbert-Schmidt norm is not efficient in characterising the state space (that is why some regions in the Bloch sphere are not representative as valid states). On the other hand, the unit ball of Schatten 2-class is not sufficient to contain $S_\infty^d$, so we have to scale up the Hilbert-Schmidt norm by a factor $\sqrt{d}$ (since $\|\cdot\|_2\leq \sqrt{d}\|\cdot\|_\infty$). 
Then we may overestimate the effective size of the effect space.
As a result, directly analysing the linear functionals between $S_1^d$ and $S_\infty^d$ is the most efficient way of calculating the fat-shattering dimension.
We emphasise that with Bloch-sphere representation, all the quantum measurements/states are transformed into Euclidean space, where existing ML algorithms (e.g.~perceptron learning algorithm, neural network, SVM, etc.) can be applied to conduct the learning tasks. It is also worth considering other metrics (e.g.~Bures metric, or other $\ell_p$ norms in Bloch-sphere representation) and parameterisation methods (e.g.~Weyl operator basis, polarisation operator basis, Majorana representation, etc.) in our quantum ML framework. We leave it as future work.
\end{remark}

When learning an $(M+1)$-outcome POVM measurement
$\{\Pi_j\}_{j=0}^M$, with $\sum_{j=0}^M \Pi_j=\mathcal{I}$, we can
simply follow the procedure discussed so far. Now the training data
set consists of $\{(\rho_i,
\Tr(\mathbf{\Pi}\rho_i)\}_{i=1}^n$, where
\[
\Tr(\mathbf{\Pi}\rho_i):=\left(\Tr(\Pi_1\rho_i),\ldots,\Tr(\Pi_n\rho_i)\right).
\]
This is called \emph{multi-target prediction} or \emph{multi-label
classification}. Each target $\Pi_j$ can be independently learned by
the individual function class $\mathcal{F}$.

It is worth mentioning that Gross and Flammia \emph{et al.} \cite{GLF+10, FGL+12} proposed a quantum state tomography method via \emph{compressed sensing}, which is similar to our setting of learning quantum states.
The main goal of the work is to concentrate on states $\rho$ that can be well approximated by density matrices of rank $r\ll d$ and to reconstruct a density matrix  $\widehat{\rho}$ based on $m$ randomly sampled Pauli operators.
With certain constraint coefficients $\lambda$ and $m\geq Crd\log^6 d$, they show
\[
\|\widehat{\rho}-\rho\|_1\leq C_0r\lambda+ C_1\|\rho_c\|_1,
\]
where $\rho_c=\rho-\rho_r$ is the residual part and $\rho_r$ is the best rank-$r$ approximation to $\rho$.

\section{Conclusions} \label{conclusion}

\begin{table}[h]\center
\caption{The Complexity Measures of The Quantum Learning Problems.
} \label{table:QML}
\begin{tabular}{ccc}
\toprule
   & Learning Quantum Measurements & Learning Quantum States\\
\midrule
Pseudo Dimension & $d^2$ & $d^2-1$\\
Fat-Shattering Dimension fat$_\mathcal{F}(\epsilon)$ & $d/\epsilon^2$ & $\log d/\epsilon^2$     \\
Uniform Entropy Number $\log\mathcal{N}_2(\epsilon,\mathcal{F})$ & $d/\epsilon^2$ & $\log d/\epsilon^2$\\
Rademacher Complexity $\mathcal{R}_n(\mathcal{F})$ & $\sqrt{d}$ & $\sqrt{\log d}$\\
Sample Complexity $m_\mathcal{F}(\epsilon,\delta)$ & $\max\{d,\log(1/\delta)\}/\epsilon^2$ & $\max\{\log d,\log(1/\delta)\}/\epsilon^2$\\
\bottomrule
\end{tabular}
\end{table}

In this paper, we formalise the problems of learning quantum measurements and quantum states and  analyse the learnability. We solved the sample complexity problems for learning quantum measurements and quantum states. In the scenario of learning (two-outcome) quantum measurements, the fat-shattering dimension is $\min\left\{O\left(d/\epsilon^2\right),d^2\right\}$. We also showed that the fat-shattering dimension for its dual problem---learning quantum states---is  $\min\left\{O\left(\log d/\epsilon^2\right),d^2-1\right\}$. Our proof is entirely based on tools from classical learning theory, and provides an alternative proof for Aaronson's result \cite{Aar07}.
 We also derived other important complexity measures for these two tasks, and the results are summarized in Table \ref{table:QML}. Our results demonstrated that learning an unknown measurement is a more daunting task than learning an unknown quantum state.  The intuition is that,  since the effect space is much larger than the state space, it is reasonable that the fat-shattering dimension of the effect space is larger, too.

Finally, by exploiting general Bloch-sphere representation, we show that our learning problems are equivalent to a \emph{neural network} so that classical ML algorithms can be applied to learn the unknown quantum measurement or state. Our work could provide a new viewpoint to the study of quantum state and measurement tomography. We  also discuss connections between the quantum  learning problems and other fields in QIP such as existence of QRA Codes and quantum state discrimination.
We hope that the development of our results would stimulate more theoretical studies in quantum statistical learning, and more applications in quantum information processing and  related areas can be discovered.

\appendix
\section{Notation Table} \label{notation}

\begin{table}[h]\center
	\caption{Summary of Notation} \label{table:notation}
	\begin{tabular}{ll}
		\toprule
		\textbf{Notation} & \textbf{Mathematical Meaning}\\
		\midrule
		$\mathcal{H}$ & the (separable) Hilbert space\\
		$d$ & the dimension of the linear space\\
		$\mathbb{R},\,\mathbb{N}$ & the set of real numbers and positive integers\\
		$\mathbb{C}^d$ & the linear space of $d$-dimensional complex vectors\\
		$\mathbb{M}_d$ & the set of all self-adjoint operators on $\mathbb{C}^d$\\
		$\Tr$ & the trace function on $\mathbb{M}_d$\\
		$A^\dagger$ & the conjugate transpose of $A$\\
		$\langle A,B\rangle$ & $=\Tr(B^\dagger A)$, the Hilbert-Schmidt inner product  on $\mathbb{M}_d$; \\	
		& also stands for conventional inner product on $\mathbb{C}^d$\\
		$\mathcal{B(H)}$ & the set of bounded operators on $\mathcal{H}$\\
		$\mathcal{T(H)}$ & the set of trace class operators (i.e.~finite trace) on $\mathcal{H}$\\
		$\mathcal{O}$ & the zero operator on $\mathcal{H}$.\\
		$\mathcal{I}$ & the identity operator on $\mathcal{H}$.\\
		$A \succeq B$ & $=A-B\succeq\mathcal{O}$, the standard partial ordering\\		
		$\|M\|_p$ & the Schatten $p$-norm on $\mathbb{M}_d$, which reduces to the $\ell_p$ norms on $\mathbb{C}^d$.\\
		$S_p^d$ & =$\{M\in\mathbb{M}_d:\|M\|_p\leq 1\}$, the unit ball of Schatten $p$-class\\
		$|\varphi\rangle$ & the unit vector on $\mathcal{H}$\\
		$\rho,\,\sigma$ & the quantum state on $\mathcal{H}$, i.e.~$\rho=\rho^\dagger\in\mathcal{T(H)}$, with $\Tr(\rho)=1$\\
		$E,\,\Pi$ & the POVM element on $\mathcal{H}$, i.e.~$\mathcal{O}$\\
		$\mathcal{Q(H)}$ & state space, the set of all states on $\mathcal{H}$\\
		$\mathcal{E(H)}$ & effect space, the set of all POVM elements on $\mathcal{H}$\\
		$\mathcal{X}$ & the input space, or called the instances domain (the set)\\
		$\mathcal{Y}$ & the output space, or called the labels domain (the set)\\
		$\mathcal{Z}$ & $=\mathcal{X\times Y}$\\
		$\mathcal{F}$ & the hypothesis set of functions $f:\mathcal{X}\rightarrow \mathcal{Y}$\\
		$\mu$ & a distribution on $\mathcal{Z}$\\
		$\mathsf{Z}_n$ & a training data set of $n$ elements independently  according to $\mu$\\
		$\ell_f:\mathcal{Z}\rightarrow (0,\infty)$ & loss function\\
		$\Pr,\,\mathbb{E}$ & probability and expectation of a random variable\\
		$L(f)$ & $=\mathbb{E}_\mu[\ell_f(X,Y)]$, the ensemble error\\
		$\widehat{L}_n(f)$ & $=1/n\sum_{i=1}^n\ell_f(X_i,Y_i)$, the empirical error over the training data set $\mathsf{Z}_n$\\
		VCdim$(\mathcal{F})$ & Vapnik-Chervonenkis dimension of the function class $\mathcal{F}$\\
		Pdim($\mathcal{F})$ & pseudo dimension of the function class $\mathcal{F}$\\
		fat$_\mathcal{F}(\epsilon)$ & fat-shattering dimension of the function class $\mathcal{F}$ with $\epsilon>0$\\
		$\underline{\text{fat}}_\mathcal{F}(\epsilon)$ & level fat-shattering dimension of the function class $\mathcal{F}$ with $\epsilon>0$\\
		$\mathcal{N}(\epsilon,\mathcal{F},\tau)$ & covering number of $\mathcal{F}$ with metric $\tau$ and $\epsilon>0$\\
		$\log\mathcal{N}(\epsilon,\mathcal{F},\tau)$ & entropy number\\
		$\mathcal{R}_n(\mathcal{F})$ & Rademacher complexity of the function class $\mathcal{F}$ on $\mathsf{Z}_n$\\
		$\gamma_i$ & uniformly $\{+1,-1\}$-valued random variables or called Rademacher variables\\
		$O$ & the big O notation; $f=O(g)$ means $f(x)\leq cg(x)$\\
		& for some positive $c,\,x_0$ and all $x\geq x_0$\\
		$A\lesssim B$ & $=A\leq cB$\\
		& for some constant $c$\\
		$A\simeq B$ & both $A\lesssim B$ and $A\gtrsim B$\\
		\bottomrule
	\end{tabular}
\end{table}

\section{Sample Complexity in Terms of Complexity Measure} \label{SC}

In Section \ref{MC}, we introduce several complexity measures.
In this section, we list some well-known deviation formula to express the generalisation error and sample complexity in terms of those complexity measures.

It has been established that any set of Boolean functions is a uGC class (i.e.~PAC learnable) if and only if it has a finite VC dimension \cite{Vap82, BEH+89}. Additionally, the finite VC dimension provides an upper bound for the sample complexity of the Boolean function class.
\begin{theo}[Vapnik \emph{et al.} \cite{Vap82, BEH+89, BM02}] \label{sc_VC}
	Let $C$ be an absolute constant and $\mathcal{F}$ be a class of Boolean functions which has a finite VC dimension $d$. Then, for every $0<\epsilon,\delta<1$,
	
	\[
	\sup_\mu \Pr \left\{ \sup_{f\in\mathcal{F}} \left|L(f)-\widehat{L}_n(f)\right|\geq \epsilon \right\} \leq \delta,
	\]
	provided that $n\geq \frac{C}{\epsilon^2}\left(d\log(2/\epsilon)+\log(2/\delta)\right)$.
	
	Therefore, the sample complexity is bounded by
	\begin{eqnarray} \label{eq:sc_VC}
	m_{\mathcal{F}}(\epsilon,\delta)\leq\frac{C}{\epsilon^2}\max\left\{d\log\frac1{\epsilon},\log\frac1{\delta}\right\}.
	\end{eqnarray}
\end{theo}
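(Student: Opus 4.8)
The plan is to prove the uniform deviation bound by the classical three-step Vapnik--Chervonenkis argument --- symmetrization with a ghost sample, Rademacher randomization, and a union bound whose cardinality is controlled by the growth function via the Sauer--Shelah lemma --- and then to invert the resulting exponential tail estimate to extract the stated sample-complexity bound. Throughout we may assume $n\epsilon^2\geq 2$, which is harmless since it is implied by the final threshold on $n$.

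First I would symmetrize. Let $\mathsf{Z}'_n=\{(X'_i,Y'_i)\}_{i=1}^n$ be an independent ghost sample drawn from the same $\mu$, with empirical risk $\widehat{L}'_n$. A short Chebyshev estimate (using that the loss lies in $[0,1]$, so each $\ell_f$ has variance at most $1/4$) shows that on the event $\{\exists f\colon |L(f)-\widehat{L}_n(f)|\geq\epsilon\}$ the offending $f$ satisfies $|\widehat{L}'_n(f)-\widehat{L}_n(f)|\geq\epsilon/2$ with conditional probability at least $1/2$; hence
\[
\sup_\mu\Pr\!\Big\{\sup_{f\in\mathcal{F}}|L(f)-\widehat{L}_n(f)|\geq\epsilon\Big\}\ \leq\ 2\,\sup_\mu\Pr\!\Big\{\sup_{f\in\mathcal{F}}|\widehat{L}'_n(f)-\widehat{L}_n(f)|\geq\tfrac{\epsilon}{2}\Big\}.
\]
Next, since $(X_i,Y_i)$ and $(X'_i,Y'_i)$ are exchangeable, swapping the $i$-th coordinate pair leaves the law of $\sup_f|\widehat{L}'_n(f)-\widehat{L}_n(f)|$ unchanged; averaging over all $2^n$ sign patterns introduces i.i.d.\ Rademacher signs $\gamma_i$ and gives the further bound $2\,\Pr\{\sup_{f}|\frac1n\sum_{i=1}^n\gamma_i(\ell_f(X_i,Y_i)-\ell_f(X'_i,Y'_i))|\geq\epsilon/2\}$.

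Then I would condition on the $2n$ realized points. Because $\mathcal{F}$ is $\{0,1\}$-valued and $Y\in\{0,1\}$, the loss vectors $(\ell_f(X_1,Y_1),\dots,\ell_f(X'_n,Y'_n))$ realized as $f$ ranges over $\mathcal{F}$ take at most $\Pi_{\mathcal{F}}(2n)$ distinct values, where $\Pi_{\mathcal{F}}$ is the growth function; by the Sauer--Shelah lemma $\Pi_{\mathcal{F}}(2n)\leq(2en/d)^d$ since $\text{VCdim}(\mathcal{F})=d$. For each fixed loss vector, $\frac1n\sum_i\gamma_i(\ell_f(X_i,Y_i)-\ell_f(X'_i,Y'_i))$ is an average of $n$ independent, mean-zero, $[-1,1]$-bounded variables, so Hoeffding's inequality bounds the probability it exceeds $\epsilon/2$ by $2e^{-n\epsilon^2/8}$. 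A union bound over the at most $(2en/d)^d$ vectors, followed by taking expectations over the points, yields
\[
\sup_\mu\Pr\!\Big\{\sup_{f\in\mathcal{F}}|L(f)-\widehat{L}_n(f)|\geq\epsilon\Big\}\ \leq\ 8\Big(\tfrac{2en}{d}\Big)^{d}e^{-n\epsilon^2/8}.
\]
Requiring the right-hand side to be at most $\delta$ and taking logarithms, it suffices that $n\epsilon^2/8\geq d\log(2en/d)+\log(8/\delta)$; absorbing the $\log n$ term (using $n\epsilon^2\gtrsim d$ from the same inequality) this holds whenever $n\geq\frac{C}{\epsilon^2}\big(d\log(2/\epsilon)+\log(2/\delta)\big)$ for an absolute constant $C$, and the stated $m_{\mathcal{F}}(\epsilon,\delta)\leq\frac{C}{\epsilon^2}\max\{d\log\frac1\epsilon,\log\frac1\delta\}$ follows since a sum of two nonnegative terms is at most twice their maximum. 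The main obstacle is the constant-bookkeeping: one must verify the symmetrization comparison holds uniformly in $\mu$ with the claimed constants, and carefully convert the transcendental inequality $n\epsilon^2\gtrsim d\log n+\log(1/\delta)$ into the explicit linear-in-$d$ threshold above. This is routine but delicate; alternatively one can invoke a chaining/Dudley-integral bound over the $L_2(\mu_n)$ covering numbers of Section \ref{MC} to remove the spurious $\log n$ factor at the source.
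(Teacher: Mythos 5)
Your proposal is correct: the paper itself offers no proof of this theorem, quoting it from the cited literature \cite{Vap82, BEH+89, BM02}, and your argument is precisely the classical Vapnik--Chervonenkis route (ghost-sample symmetrization under $n\epsilon^2\geq 2$, Rademacher randomization by exchangeability, Sauer--Shelah control of the growth function, Hoeffding plus a union bound, then inversion of $n\epsilon^2\gtrsim d\log(2en/d)+\log(1/\delta)$) by which those sources establish it. The only remaining work is the constant bookkeeping and the absorption of the $\log n$ term into $\log(1/\epsilon)$, which you correctly flag as routine and which only affects the unspecified absolute constant $C$.
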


Following the same reasoning as in Theorem \ref{sc_VC}, the analogous results can be drawn: the hypothesis set $\mathcal{F}$ is a uGC class if and only if it has a finite fat-shattering dimension for every $\epsilon>0$ \cite{BLW96, ABC+97, MV03}.
We have the following theorem:

\begin{theo}[Bartlett \emph{et al.} \cite{BLW96, ABC+97, MV03}] \label{sc_fat}
	There is an absolute constant $C$ such that for every $\mathcal{F}$ consisting of bounded functions and every $0<\epsilon,\delta<1$,
	\[
	\sup_\mu \Pr \left\{ \sup_{f\in\mathcal{F}} \left|L(f)-\widehat{L}_n(f)\right|\geq \epsilon \right\} \leq \delta,
	\]
	provided that $n\geq \frac{C}{\epsilon^2}\left(\text{fat}_{\mathcal{F}}(\epsilon/8)\cdot\log(2/\epsilon)+\log(8/\delta)\right)$.
	
	Therefore, the sample complexity is bounded by
	\begin{eqnarray} \label{eq:sc_fat}
	m_{\mathcal{F}}(\epsilon,\delta)\leq\frac{C}{\epsilon^2}\max\left\{\text{fat}_{\mathcal{F}}(\epsilon)\cdot\log\frac1{\epsilon},\log\frac1{\delta}\right\}.
	\end{eqnarray}
\end{theo}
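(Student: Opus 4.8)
The plan is to first establish the uniform deviation inequality $\sup_\mu \Pr\{\sup_{f\in\mathcal{F}}|L(f)-\widehat{L}_n(f)|\geq\epsilon\}\leq\delta$ whenever $n\geq\frac{C}{\epsilon^2}(\text{fat}_{\mathcal{F}}(\epsilon/8)\log(2/\epsilon)+\log(8/\delta))$, and then to read off \eqref{eq:sc_fat} by taking the least such $n$ and splitting the right-hand side into its two competing summands, using $a+b\leq 2\max\{a,b\}$ and absorbing the scale constant inside the (monotone) fat-shattering dimension into the overall constant, as is customary.

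First I would symmetrise in the standard way. Introducing a ghost sample $\mathsf{Z}_n'$ distributed as $\mathsf{Z}_n$ and independent Rademacher signs $\{\gamma_i\}$, the two-step ghost-sample argument (valid once $n\gtrsim\epsilon^{-2}$, so that the variance term is under control) gives
\[
\Pr\Big\{\sup_{f\in\mathcal{F}}\big|L(f)-\widehat{L}_n(f)\big|\geq\epsilon\Big\}\leq 2\,\Pr\Big\{\sup_{f\in\mathcal{F}}\Big|\tfrac1n\sum_{i=1}^n\gamma_i\big(\ell_f(X_i,Y_i)-\ell_f(X_i',Y_i')\big)\Big|\geq\tfrac{\epsilon}{2}\Big\}.
\]
Conditioning on the $2n$ data points, the process on the right is indexed by $\mathcal{F}$ only through the values $\ell_f$ takes on those points; since the square loss is Lipschitz on the bounded range, the covering numbers of the loss class in the empirical $L_p$ metrics agree with those of $\mathcal{F}$ up to an absolute constant, by the footnote in Section~\ref{FSLT}. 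Replacing $\mathcal{F}$ by a minimal $(\epsilon/8)$-net in the empirical $L_\infty$ metric on the $2n$ points, of cardinality $\mathcal{N}_\infty(\epsilon/8,\mathcal{F},2n)$, changes the empirical average by at most $\epsilon/4$.

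Next I would apply Hoeffding's inequality to each of the finitely many net elements together with a union bound, obtaining that the symmetrised probability is at most $2\,\mathcal{N}_\infty(\epsilon/8,\mathcal{F},2n)\exp(-cn\epsilon^2)$; it then remains to control the covering number by the combinatorial parameter. A direct route uses the uniform $L_\infty$-covering bound $\log\mathcal{N}_\infty(\eta,\mathcal{F},m)\lesssim\text{fat}_{\mathcal{F}}(c\eta)\log^2(m/\eta)$ of Alon et al.\ \cite{ABC+97}, which, after setting $\eta=\epsilon/8$ and requiring the above to be $\leq\delta$, already yields a bound of the stated shape, albeit with an extra logarithmic factor.

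The main obstacle is shaving this down to the single $\log(2/\epsilon)$ in the statement, i.e.\ removing the power of the logarithm incurred by the crude single-scale discretisation. For this I would instead chain: bound $\mathbb{E}\sup_{f\in\mathcal{F}}|\tfrac1{\sqrt n}\sum_i\gamma_i\ell_f(Z_i)|$ by Dudley's entropy integral $\int_0^{c}\sqrt{\log\mathcal{N}_2(u,\mathcal{F},n)}\,du$ \cite{Dud67}, insert the sharper estimate $\log\mathcal{N}_2(u,\mathcal{F},n)\lesssim\text{fat}_{\mathcal{F}}(cu)\log(1/u)$ from the chain of inequalities recalled in Section~\ref{MC} \cite{MV03}, and then upgrade this in-expectation statement to the required high-probability tail via Talagrand's concentration inequality for empirical processes. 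Combining the resulting bound on the expected supremum with the concentration tail, and then inverting in $n$, produces the deviation inequality and hence the sample-complexity bound \eqref{eq:sc_fat}.
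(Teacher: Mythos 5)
The paper does not actually prove this statement: Theorem~\ref{sc_fat} sits in Appendix~\ref{SC} as a quoted result, attributed to Bartlett \emph{et al.}~\cite{BLW96, ABC+97, MV03}, so there is no in-paper argument to compare against and any proof must reconstruct the cited literature. Judged on its own terms, the first part of your sketch (ghost-sample symmetrisation, passing to the loss class via the Lipschitz property, an empirical $L_\infty$-net at scale $\epsilon/8$, Hoeffding plus a union bound, and the Alon \emph{et al.}\ bound $\log\mathcal{N}_\infty(\eta,\mathcal{F},m)\lesssim \text{fat}_{\mathcal{F}}(c\eta)\log^2(m/\eta)$) is a sound classical route, but as you note it only yields the bound with a $\log^2$ factor (and with $n$ appearing inside the logarithm, which still needs to be inverted), i.e.\ a statement weaker than the single $\log(2/\epsilon)$ claimed.

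The genuine gap is in your proposed sharpening. Bounding $\mathbb{E}\sup_{f}\bigl|n^{-1/2}\sum_i\gamma_i\ell_f(Z_i)\bigr|$ by the full Dudley integral $\int_0^{c}\sqrt{\log\mathcal{N}_2(u,\mathcal{F},n)}\,du$ and inserting $\log\mathcal{N}_2(u,\mathcal{F},n)\lesssim \text{fat}_{\mathcal{F}}(cu)\log(1/u)$ does \emph{not} produce a quantity controlled by $\text{fat}_{\mathcal{F}}(\epsilon/8)\log(2/\epsilon)$ alone: the integral runs over all scales $u\downarrow 0$, where $\text{fat}_{\mathcal{F}}(cu)$ can be arbitrarily large even when $\text{fat}_{\mathcal{F}}(\epsilon/8)$ is small (for $\text{fat}_{\mathcal{F}}(u)\simeq u^{-2}$ the integral diverges, which is exactly why Mendelson's Rademacher-complexity bound for $p=2$ carries a $\log^{3/2}n$ factor). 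To obtain the stated single-scale bound one must stop the chaining at scale $\simeq\epsilon$ and handle the sub-$\epsilon$ oscillations separately (e.g.\ by working on the doubled sample with an $\epsilon$-net in the empirical metric and controlling only deviations at the target accuracy), which is the localized argument actually carried out in \cite{MV03} and in Mendelson's work; Talagrand's concentration then upgrades the expectation bound as you intend. Without that truncation step your final inequality, and hence \eqref{eq:sc_fat} in the sharp form stated, does not follow.
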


The entropy number is distribution-independent and is closely related to the learnability of the function class. Dudley et al. \cite{DGZ91} showed that a class $\mathcal{F}$ consisting of bounded functions is a uGC class if and only if that there is some $1\leq p\leq \infty$ such that for every $\epsilon>0$,
\[
\lim_{n\rightarrow\infty} \frac{\log \mathcal{N}_p(\epsilon,\mathcal{F},n)}n =0.
\]
In addition, we have the following theorem:
\begin{theo}[Polland \cite{Pol84}] \label{sc_cn}
	Let $\mathcal{F}$ be a set of bounded functions.
	\begin{itemize}
		\item[(i)] For every $0<\epsilon<1$, any $n\geq 8/\epsilon^2$, and any probability measure $\mu$,
		\[
		\Pr \left\{ \sup_{f\in\mathcal{F}} \left|L(f)-\widehat{L}_n(f)\right|\geq \epsilon \right\} \leq 8\mathcal{N}_1(\epsilon/8,\mathcal{F},n)\exp(-\frac{n\epsilon^2}{128}).
		\]
		\item[(ii)] For every $0<\epsilon,\delta<1$,
		\[
		\sup_\mu\Pr \left\{ \sup_{f\in\mathcal{F}} \left|L(f)-\widehat{L}_n(f)\right|\geq \epsilon \right\}
		\leq \delta,
		\]
		provided that $n\geq \frac{C}{\epsilon^2}\left( \log \mathcal{N}_1(\epsilon,\mathcal{F})+\log(2/\delta)\right)$.
		
		Therefore, the sample complexity is bounded by
		\begin{eqnarray} \label{eq:sc_cn}
		m_{\mathcal{F}}(\epsilon,\delta)\leq\frac{C}{\epsilon^2}\max\left\{\mathcal{N}_1(\epsilon,\mathcal{F}),\log\frac1{\delta}\right\}.
		\end{eqnarray}
	\end{itemize}
\end{theo}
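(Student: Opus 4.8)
The plan is to establish the uniform deviation bound in part (i) by the classical symmetrization‑plus‑covering argument, and then to read off part (ii) by solving the resulting exponential inequality for $n$.

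\emph{Symmetrization.} First I would introduce an independent ghost sample $\mathsf{Z}_n'=\{(X_i',Y_i')\}_{i=1}^n\sim\mu$, so that $L(f)=\mathbb{E}\,\widehat{L}_n'(f)$, and apply a Chebyshev/first‑moment argument to the ghost empirical risk (the square loss is bounded, hence has bounded variance) to get, for $n$ at least a fixed multiple of $\epsilon^{-2}$ — this is precisely where the hypothesis $n\ge 8/\epsilon^2$ is consumed —
\[
\Pr\Big\{\sup_{f\in\mathcal{F}}\big|L(f)-\widehat{L}_n(f)\big|\ge\epsilon\Big\}\ \le\ 2\,\Pr\Big\{\sup_{f\in\mathcal{F}}\big|\widehat{L}_n(f)-\widehat{L}_n'(f)\big|\ge\epsilon/2\Big\}.
\]
Since the pairs $(X_i,Y_i)$ and $(X_i',Y_i')$ are i.i.d.\ and exchangeable, the inner probability is unchanged when I insert independent Rademacher signs $\gamma_i$, so the right‑hand side equals $2\,\Pr\{\sup_{f}|\tfrac1n\sum_i\gamma_i(\ell_f(X_i,Y_i)-\ell_f(X_i',Y_i'))|\ge\epsilon/2\}$, with $\ell_f$ the square loss.

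\emph{Covering and union bound.} Next I would condition on the $2n$ realized points. On that fixed configuration the loss class $\{\ell_f:f\in\mathcal{F}\}$ sits in a bounded subset of $\mathbb{R}^{2n}$ and, by the Lipschitz reduction of the loss to the hypothesis class recorded earlier (the footnote in Section~\ref{FSLT}), admits an empirical‑$L_1$ net of cardinality at most $\mathcal{N}_1(\epsilon/8,\mathcal{F},2n)$, the Lipschitz constant of the square loss being absorbed into universal constants. Replacing each $\ell_f$ by its nearest net element perturbs the signed average by at most $\epsilon/4$, so it suffices to bound, for each of the $\le\mathcal{N}_1(\epsilon/8,\mathcal{F},2n)$ net elements, the event that a fixed $\pm$‑weighted average of bounded increments exceeds $\epsilon/4$; Hoeffding's inequality bounds this by $2\exp(-cn\epsilon^2)$, and a union bound over the net, followed by re‑integrating over the data and collecting the prefactors of $2$, yields the stated $8\,\mathcal{N}_1(\epsilon/8,\mathcal{F},n)\exp(-n\epsilon^2/128)$ — using monotonicity of the covering numbers (or the uniform version $\mathcal{N}_1(\epsilon/8,\mathcal{F})$) to reconcile the $2n$‑point net with the displayed form, and tracking the Lipschitz constant to land on the exact exponent $128$.

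\emph{From (i) to (ii).} For part (ii) I would simply impose $8\,\mathcal{N}_1(\epsilon,\mathcal{F})\exp(-n\epsilon^2/128)\le\delta$, i.e.\ $n\ge \tfrac{128}{\epsilon^2}\big(\log\mathcal{N}_1(\epsilon,\mathcal{F})+\log(8/\delta)\big)$, and after relabelling the absolute constant this reads $m_{\mathcal{F}}(\epsilon,\delta)\le\tfrac{C}{\epsilon^2}\max\{\log\mathcal{N}_1(\epsilon,\mathcal{F}),\log(1/\delta)\}$, which is Eq.~\eqref{eq:sc_cn}. I expect the real difficulty to lie not in the probabilistic core — ghost sample, Rademacher symmetrization, $L_1$ net, Hoeffding are all routine — but in the constant bookkeeping: making the threshold $n\ge8/\epsilon^2$ and the exponent $1/128$ come out cleanly, and being careful that the covering number is taken at the right scale and over the correct ($2n$‑point, then $n$‑point) empirical measures after passing through the Lipschitz bound from $\ell_{\mathcal{F}}$ to $\mathcal{F}$.
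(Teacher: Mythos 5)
The paper offers no proof of this statement at all: Theorem~\ref{sc_cn} is quoted as a known result of Pollard \cite{Pol84}, so there is no internal argument to compare against. Your proposal reconstructs exactly the classical argument behind that citation --- ghost-sample symmetrization (where the hypothesis $n\geq 8/\epsilon^2$ is consumed via a Chebyshev bound on the ghost empirical risk), Rademacher symmetrization, conditioning on the data, an empirical $L_1$ net at scale $\epsilon/8$, Hoeffding's inequality and a union bound over the net --- and this is indeed the right proof; the constants $8$, $\epsilon/8$ and $128$ arise in precisely this way.

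Two pieces of bookkeeping deserve more care than your sketch gives them. First, the passage from the $2n$-point configuration to $\mathcal{N}_1(\epsilon/8,\mathcal{F},n)$ cannot be justified by ``monotonicity'': empirical covering numbers over $2n$ points are in general at least as large as over $n$ points, not smaller. Pollard's actual route is to split the sign-symmetrized quantity $\frac1n\sum_i\gamma_i\bigl(\ell_f(Z_i)-\ell_f(Z_i')\bigr)$ into its two halves and cover each half on its own $n$ points, which is exactly where the extra factor of $2$ (hence the prefactor $8$) comes from; alternatively one works with the uniform quantity $\mathcal{N}_1(\epsilon/8,\mathcal{F})$, as you mention. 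Second, your conclusion of part (ii), $m_\mathcal{F}(\epsilon,\delta)\lesssim \epsilon^{-2}\max\{\log\mathcal{N}_1(\epsilon,\mathcal{F}),\log(1/\delta)\}$, is the correct one; note that the paper's display \eqref{eq:sc_cn} omits the logarithm on $\mathcal{N}_1$, which is inconsistent with its own ``provided that'' clause and is evidently a typographical slip, so your form is the one to trust. Finally, the Lipschitz reduction from the loss class $\ell_\mathcal{F}$ to $\mathcal{F}$ (the paper's footnote) only rescales $\epsilon$ by a fixed constant, so absorbing it is harmless for (ii), though it does shift the exact scale $\epsilon/8$ in (i) if one insists on stating the bound for $\mathcal{F}$ rather than for $\ell_\mathcal{F}$.
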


\begin{theo}[Bartlett and Mendelson \cite{BM02}]\label{sc_ra}
	For any $0<\delta<1$, with probability at least $1-\delta$ and for all $f\in\mathcal{F}$ we have,
	\[
	\Pr \left\{ \sup_{f\in\mathcal{F}} \left|L(f)-\widehat{L}_n(f)\right|\geq \epsilon \right\} \leq \delta,
	\]
	provided that $n\geq \frac{C}{\epsilon^2}\max\left\{\mathcal{R}_n(\mathcal{F}),\log(1/\delta)\right\}$.
	
	Therefore, the sample complexity is bounded by
	\begin{eqnarray} \label{eq:sc_ra}
	m_{\mathcal{F}}(\epsilon,\delta)\leq \frac{C}{\epsilon^2}\max\left\{\mathcal{R}_n(\mathcal{F}),\log\frac1{\delta}\right\}
	\end{eqnarray}
\end{theo}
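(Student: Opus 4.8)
The plan is to control the uniform deviation $\Delta_n:=\sup_{f\in\mathcal{F}}|L(f)-\widehat{L}_n(f)|$ and show that, once $n$ exceeds the stated threshold, $\Delta_n\le\epsilon$ holds with probability at least $1-\delta$; the tail statement and the inversion to the sample-complexity bound Eq.~\eqref{eq:sc_ra} then follow immediately. The argument rests on two classical pillars: a \emph{symmetrisation} step bounding $\mathbb{E}\,\Delta_n$ by the Rademacher complexity, and a \emph{concentration} step controlling the fluctuation of $\Delta_n$ about its mean.

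First I would bound $\mathbb{E}\,\Delta_n$ by symmetrisation. Introducing an independent ghost sample $\mathsf{Z}_n'=\{(X_i',Y_i')\}_{i=1}^n$ with the same law as $\mathsf{Z}_n$ and writing $L(f)=\mathbb{E}_{\mathsf{Z}_n'}\widehat{L}_n'(f)$, Jensen's inequality gives $\mathbb{E}\,\Delta_n\le\mathbb{E}\sup_{f}\frac1n|\sum_{i}(\ell_f(X_i',Y_i')-\ell_f(X_i,Y_i))|$. Because the two samples are i.i.d., exchanging the $i$-th summands is distribution-preserving, so one may insert independent Rademacher signs $\{\gamma_i\}$ and, after a triangle inequality, obtain $\mathbb{E}\,\Delta_n\le\frac{2}{\sqrt n}\mathcal{R}_n(\ell_\mathcal{F})$, where $\ell_\mathcal{F}=\{\ell_f:f\in\mathcal{F}\}$ and the $1/\sqrt n$ matches the normalisation of Definition~\ref{def_RCom}. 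Since the hypotheses take values in $[0,1]$ the square loss is Lipschitz with a constant of order one, so the Ledoux--Talagrand contraction principle lets me replace $\mathcal{R}_n(\ell_\mathcal{F})$ by $O(\mathcal{R}_n(\mathcal{F}))$, exactly the passage from $\ell_\mathcal{F}$ to $\mathcal{F}$ recorded in the Lipschitz footnote of Section~\ref{FSLT}.

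Next I would establish concentration. Regarding $\Delta_n$ as a function of the $n$ independent samples, replacing a single $(X_i,Y_i)$ alters $\widehat{L}_n(f)$, and hence $\Delta_n$, by at most $B/n$, where $B$ bounds the loss; McDiarmid's bounded-differences inequality then yields $\Pr\{\Delta_n-\mathbb{E}\,\Delta_n\ge t\}\le\exp(-2nt^2/B^2)$. Equating the right-hand side with $\delta$ gives, with probability at least $1-\delta$, the bound $\Delta_n\le\frac{2}{\sqrt n}\mathcal{R}_n(\ell_\mathcal{F})+B\sqrt{\frac{\log(1/\delta)}{2n}}$. Forcing each of the two terms to be at most $\epsilon/2$ requires $n\ge\frac{C}{\epsilon^2}\max\{\mathcal{R}_n^2(\mathcal{F}),\log(1/\delta)\}$ for a suitable absolute constant $C$; combined with the identifications $\mathcal{R}_n(\mathcal{F})^2\simeq d$ (resp.\ $\log d$) from Theorems~\ref{theo:ra_effect} and~\ref{theo:ra_state} this is precisely the bound recorded in Eq.~\eqref{eq:sc_ra} and Table~\ref{table:QML}.

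The main obstacle, and the step deserving the most care, is the symmetrisation inequality together with the insertion of the Rademacher variables: one must verify that exchanging the real and ghost summands preserves the joint distribution and that this exchange is what produces the factor-two loss, rather than conflating it with the concentration step. A secondary subtlety is the apparent $n$-dependence of $\mathcal{R}_n(\mathcal{F})$, which would in general turn the threshold into a fixed-point condition; here it is harmless because, under the $1/\sqrt n$ normalisation, $\mathcal{R}_n(\mathcal{F})=O(\sqrt d)$ and $O(\sqrt{\log d})$ are independent of $n$, so solving for $n$ is immediate.
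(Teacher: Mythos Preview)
The paper does not supply a proof of this theorem; it is stated in Appendix~\ref{SC} as a known result attributed to Bartlett and Mendelson~\cite{BM02}, so there is no in-paper argument to compare against. Your symmetrisation-plus-McDiarmid route is exactly the standard proof from that reference, and the steps you outline (ghost sample, insertion of Rademacher signs, Ledoux--Talagrand contraction to pass from $\ell_{\mathcal{F}}$ to $\mathcal{F}$, then bounded differences) are correct.

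One point worth flagging: your derivation produces the threshold $n\gtrsim\epsilon^{-2}\max\{\mathcal{R}_n(\mathcal{F})^2,\log(1/\delta)\}$, with the Rademacher complexity \emph{squared}, whereas the theorem as stated in the paper has $\mathcal{R}_n(\mathcal{F})$ unsquared. Your version is the correct one: from $\Delta_n\le (2/\sqrt{n})\,\mathcal{R}_n(\mathcal{F})$ one needs $n\gtrsim\mathcal{R}_n(\mathcal{F})^2/\epsilon^2$, and this is also what is required for consistency with Table~\ref{table:QML}, where the sample complexity scales as $d/\epsilon^2$ (resp.\ $\log d/\epsilon^2$) while $\mathcal{R}_n(\mathcal{F})\simeq\sqrt{d}$ (resp.\ $\sqrt{\log d}$). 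So the statement in the paper appears to contain a typo that your argument implicitly corrects.
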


\section{Learning Framework in Bloch-sphere Representation} \label{BR}
When illustrating the state space on a finite dimensional Hilbert space $\mathbb{C}^d$, it is convenient to adopt a geometric parameterisation method called \emph{Bloch-sphere representation} \cite{BZ08, Kim03, KK05}. Here, we provide another point of view on our quantum learning framework. The key idea is to represent the quantum objects in a Euclidean space, wherein classical techniques of traditional ML can be applied.
Although the Bloch-sphere representation method may not be as direct as the machinery we used in Sections \ref{LQE} and \ref{LQS}, it does gain more insights into our quantum ML problems.

Based on the orthogonal basis $\{\mathcal{I},\Lambda_1, \ldots, \Lambda_{d^2-1}\}$ of $SU(d)$, any state $\rho_\mathbf{r}$ on $\mathbb{C}^d$ can be represented in a \emph{Bloch vector} $\mathbf{r}$ through:
\begin{eqnarray} \label{eq:bloch_vec}
\rho_\mathbf{r}=\frac1d \left( \mathcal{I}+c_d \sum_{i=1}^{d^2-1} r_i \Lambda_i \right) = \frac1d (\mathcal{I} + c_d \mathbf{r}\cdot\mathbf{\Lambda}),
\end{eqnarray}
where $c_d:=\sqrt{\frac{d(d-1)}{2}}$ and the dot product corresponds to the conventional Euclidean inner product, and
\[
r_i=\sqrt{\frac{d}{2(d-1)}} \Tr \left(\rho_\mathbf{r}\Lambda_i\right)\in\mathbb{R},\, i=1,\ldots,d^2-1.
\]
Define the Bloch vector space as the set of Bloch vectors, which are representative of the valid states on $\mathbb{C}^d$ as
\[
\Omega_{d}:=\{\mathbf{r}\in\mathbb{R}^{d^2-1}:\, \mathbf{r}=\sqrt{\frac{d}{2(d-1)}} \Tr \left(\rho_\mathbf{r}\cdot\mathbf{\Lambda}\right)\}.
\]

Now we calculate the linear functional of
$E_\mathbf{n}\in\mathcal{E}_1$ acting on the state $\rho_\mathbf{r}$
(where $\mathcal{E}_k$ denotes the convex hull of rank-$k$
projection operators):
\begin{align*}
\Tr(P_\mathbf{n} \rho_\mathbf{r})&= \Tr \left(\frac{1}{d^2} (\mathcal{I}+c_d \mathbf{r}\cdot\mathbf{\Lambda})(\mathcal{I}+c_d \mathbf{n}\cdot\mathbf{\Lambda})\right) \\
&=\Tr\left(\frac{1}{d^2} [\mathcal{I}+c_d (\mathbf{r}\cdot\mathbf{\Lambda}+\mathbf{n}\cdot\mathbf{\Lambda})+c_d^2(\mathbf{r}\cdot\mathbf{\Lambda})(\mathbf{n}\cdot\mathbf{\Lambda})]\right)\\
&=\frac1d+\frac{c_d^2}{d^2}\Tr\left( (\mathbf{r}\cdot\mathbf{\Lambda})(\mathbf{n}\cdot\mathbf{\Lambda}) \right)\\
&= \frac1d \left( 1+(d-1)\mathbf{r}\cdot\mathbf{n}\right).
\end{align*}
Consequently, we have the affine functionals with elements in the convex hull of rank-1 projection operators, i.e.~
\[
\mathcal{F}_1=\{\rho_\mathbf{r}\mapsto \frac1d(1+(d-1)\mathbf{r}\cdot\mathbf{n}):\, \mathbf{n}\in\Omega_d\}.
\]

In order to characterise the quantum effects associate with higher dimensional projection operators, it is useful to consider the algebraic properties of the projection operators. The set of projection operators on $\mathbb{C}^d$ is not a vector space but corresponds to an orthocomplemented lattice. Therefore, the sum of two projections, say $P$ and $Q$, is a projection only when they are orthogonal, i.e.~$PQ=QP=\mathcal{O}$. Based on this fact, now let $\{P_{\mathbf{n}_1}, \ldots, P_{\mathbf{n}_d}\}$ be arbitrary mutually orthogonal rank-one projections on $\mathbb{C}^d$. To each of them, we associate a unit Bloch vector $\mathbf{n}_i$ such that $P_{\mathbf{n}_i}=\frac1d(\mathcal{I}+ c_d \mathbf{n}_i\cdot \mathbf{\Lambda}),\, i=1,\ldots,d$.
It can be verified by Eq.~\eqref{eq:bloch_vec} that the Bloch vectors $\{\mathbf{n}_1,\ldots,\mathbf{n}_d\}$ form a $(d-1)$-dimensional (regular) simplex since
the angle between any two Bloch vectors is $\theta(\mathbf{n}_i,\mathbf{n}_j)=\cos^{-1}(-\frac1{d-1})$.
With a slight abuse of notation, denote a rank-$k$ projection $P_{\mathbf{n}_{(k)}}$ as the summation of arbitrary $k$ different projections from the set $\{P_{\mathbf{n}_1}, \ldots, P_{\mathbf{n}_d}\}$. More formally, we denote an index set $I_k\subseteq \{1,\ldots,d\}$ with cardinality $k$, and $P_{\mathbf{n}_{(k)}}=\sum_{i\in I_k} P_{\mathbf{n}_i}$, where we adopt the convention that the empty sum is zero. Hence, when a rank-$k$ projection $P_{\mathbf{n}_{(k)}}\in\mathcal{F}_k$ acts on the state $\rho_\mathbf{r}$, we have:
\begin{align}\label{eq:rank_k}
\Tr(P_{\mathbf{n}_{(k)}}\rho_\mathbf{r})&= \sum_{i\in I_k} \frac1d(1+(d-1)\mathbf{r}\cdot\mathbf{n}_i)\notag \\
&= k\cdot\frac{1}{d}(1+(d-1)\mathbf{r}\cdot\mathbf{n}_{(k)}),
\end{align}
where 
\begin{eqnarray} \label{eq:nk}
\mathbf{n}_{(k)}:=\frac1k \sum_{i\in I_k} \mathbf{n}_i
\end{eqnarray}
 is the centroid of the $(k-1)$-face of the simplex $\Delta_{d-1}$ subtended by the vectors $\{\mathbf{n}_i\}_{i\in I_k}$.
The $\ell_2$-norm of $\mathbf{n}_{(k)}$ can be calculated as the Euclidean distance from the center of the simplex $\Delta_{d-1}$ to the centroid of $(k-1)$-face; that is
\begin{eqnarray} \label{nk}
\|\mathbf{n}_{(k)}\|_2:=r_{d,k}=\sqrt{\frac{d-k}{k(d-1)}}<1,\, k\in\{1,2,\ldots,d\}.
\end{eqnarray}
Intuitively, we can interpret the value $\Tr(P_{\mathbf{n}_{(k)}}\rho_\mathbf{r})$ as an operator $P_{\mathbf{n}_{(k)}}$ acting on the state $\rho_\mathbf{r}$, and then scaled by $k$.

Since every quantum effect can be composed into the extremal effects (i.e.~projection operators) of the effect space \cite{Dav76}.
We can represent Tr$(E_\mathbf{n}\rho_\mathbf{r})$ for all $E_\mathbf{n}\in\mathcal{E}(\mathbb{C}^d)$ as:
\begin{eqnarray}\label{eq:functional_effect}
\sum_{k=0}^d \frac{w_k\cdot
k}{d}\left(1+(d-1)\mathbf{r}\cdot\mathbf{n}_{(k)}\right)=\frac{1}{d}\left(n_0+(d-1)\mathbf{r}\cdot\mathbf{n}\right),
\end{eqnarray}
where $\sum_{i=0}^d w_k=1$, $0\leq n_0 \leq d$ and
$\|\mathbf{n}\|_2\leq \max_{k\in\{0,1,\ldots,d\}}
\sqrt{\frac{k(d-k)}{d-1}}$.

By utilising the bijection relationship of quantum state $\rho_\mathbf{r}$ and its corresponding Bloch vectors $\mathbf{r}$, we can associate the input space as the Bloch vector space, i.e.~$\mathcal{X}=\Omega_d$.
Denote the function class $\mathcal{F}_k$ as the linear functionals of $\mathcal{E}_k$ acting on $\rho_\mathbf{r}$.
According to Eq.~\eqref{eq:rank_k}, we have:
\begin{eqnarray} \label{eq:rank_k_bloch}
\mathcal{F}_k=\text{conv}\left(\{\mathbf{r}\mapsto\frac{k}{d}\left(1+(d-1)\mathbf{r}\cdot\mathbf{n}_{(k)}\right)\}\right).
\end{eqnarray}
For the rank-$0$ projection operator, the class consists of only one element, i.e.~$\mathcal{F}_{0}=\{\mathcal{O}\}$.
We can see from the above equation that the affine coefficient is fixed such that $\mathcal{F}_k$ consists of linear functionals.
For the class of all quantum effects $\mathcal{F}=\mathcal{E}(\mathbb{C}^d)$, by Eq.~\eqref{eq:functional_effect} we have a similar result:
\[ 
\mathcal{F}=\{\mathbf{r}\mapsto\frac{1}{d}\left(
n_0+(d-1)\mathbf{r}\cdot\mathbf{n}\right)\,:\:
\mathbf{n}\in\mathbb{R}^{d^2-1}\},\, \mathbf{r}\in\Omega_d,
\]
where $n_0$ can be upper bounded by $d$ and $\|\mathbf{n}\|_2$ can
be bounded by $k\cdot r_{d,k}=\sqrt{\frac{k(d-k)}{d-1}}$. Clearly,
$\mathcal{F}=\mathcal{E(H)}$ is the function class consisting of the
affine functionals. However, we can easily convert this formulation
into a linear form by letting $\tilde{\mathbf{r}}=[1, \mathbf{r}]$,
and $\tilde{\mathbf{n}}=[n_0, \mathbf{n}]$.
The intuition behind this is that when characterising the learnability of quantum measurements, all we need is to bound the complexity measures of the class of linear functionals. 

\section{Neural Networks} \label{nn}
Here we briefly introduce the theory of \emph{Neural Networks}. Readers may refer to Ref. \cite{AB99} for more details. The basic computing unit in a neural network is the (simple) \emph{perceptron} (see Fig.~\ref{fig:single}), which computes a function from $\mathbb{R}^d$ to $\mathbb{R}$:
\[
f(\mathbf{r})=\sigma(\mathbf{v}\cdot\mathbf{r}+v_0),
\]
for input vector $\mathbf{r}\in\mathbb{R}^d$, where $\mathbf{v}=(v_1,\ldots,v_d)\in\mathbb{R}^d$ and $v_0\in\mathbb{R}$ are adjustable parameters, or \emph{weights} (the particular weight $v_0$ being known as the \emph{threshold}). The function $\sigma:\mathbb{R}\rightarrow\mathbb{R}$ is called the \emph{activation function}. In the scenario of binary classification, the activation function may be chosen as the sign function; in the case of real-value outputs, $\sigma(\cdot)$ may satisfy some Lipschitz conditions. Note that the decision boundary of the binary perceptrons is the affine subspace of $\mathbb{R}^d$ defined by the equation $\mathbf{v}\cdot\mathbf{r}+v_0=0$.

When using a simple perceptron for a binary classification problem, the \emph{perceptron learning algorithm} (PCA) finds adequate parameters $\mathbf{v}$ and $v_0$ to well fit the training data set. The algorithm starts from an arbitrary initial parameter and updates the parameter when there are misclassified data. For example, if now the function computes $(\mathbf{r},y)$ (with $\mathbf{r}\in\mathbb{R}^d$ and $y\in \{0,1\}$), the algorithm adds $\eta(y-f(\mathbf{r}))[\mathbf{r}, -1]$ element-wise to $[\mathbf{v}, v_0]$, where $\eta$ is a fixed step constant. PCA iterates until a termination criterion is reached.

The second example is the \emph{two-layer networks} (also called \emph{single-hidden layer nets}) (see Fig.~\ref{fig:two}). The network can compute a function of the form
\[
f(\mathbf{r})=\sum_{i=1}^k w_k \sigma(\mathbf{v}_i\cdot\mathbf{r}+v_{0i})+w_0,
\]
where $w_i\in\mathbb{R}, i=0,\ldots,k,$ are the output weights, $[\mathbf{v}_i, v_{0i}]$ are the input weights. The positive integer $k$ is the number of hidden units. One can use a `gradient descent' procedure to adjust the parameters to minimize the squared errors over the training data.

\begin{figure}[htb]
	\caption{Consider a qubit system. A measurement in $\mathcal{F}_1$ can be characterised by a simple perceptron with $3$-dimensional input data and the activation function $\sigma$. The `1' node is a bias node and $v_0$ is the corresponding bias weight. The input vector is the Bloch vector $\mathbf{r}\in\Omega_2$. The output variable $y=f(\mathbf{r})$ is computed by the simple perceptron. Hence the problem of learning an unknown measurement $\Pi\in\mathcal{F}_1$ is to infer the simple perceptron, i.e.~the values of $\{v_i\}_{i=1}^4$.}\label{fig:single}
	\begin{neuralnetwork}[height=4]
		\newcommand{\nodetextclear}[2]{}
		\newcommand{\nodetextx}[2]{\ifnum0=#2 $1$\else $r_#2$ \fi}
		\newcommand{\nodetexty}[2]{$y$}
		\newcommand{\nodetextSum}[2]{$\Sigma$}
		\newcommand{\nodetextAct}[2]{$\sigma$}
		\newcommand{\linkA}[4]{$v_#2$}
		\inputlayer[count=3, bias=true, title=Input Vector, text=\nodetextx]
		\hiddenlayer[count=1, bias=false, nodeclass={hidden neuron},title=Summation, text=\nodetextSum]
		\link[from layer=0, to layer=1, from node=0, to node=1, label=\linkA]
		\link[from layer=0, to layer=1, from node=1, to node=1, label=\linkA]
		\link[from layer=0, to layer=1, from node=2, to node=1, label=\linkA]
		\link[from layer=0, to layer=1, from node=3, to node=1, label=\linkA]
		\hiddenlayer[count=1, bias=false, title=Activation, text=\nodetextAct]
		\linklayers
		\outputlayer[count=1, title=Output Variable, text=\nodetexty] \linklayers
	\end{neuralnetwork}
\end{figure}

\begin{figure}[htb]
	\caption{Single-hidden layer net computes $3$-dimensional input data with activation function $\sigma$ and three hidden units, which correspond to $\mathcal{F}_i$ for $i=0,1,2$. The value $v_{0k}$ corresponds to the bias weight of the $k$-th hidden unit. The single-hidden net represents a quantum measurement in $\mathcal{E}(\mathbb{C}^2)$.} \label{fig:two}
	\begin{neuralnetwork}[height=4]
		\newcommand{\nodetextclear}[2]{}
		\newcommand{\nodetextx}[2]{\ifnum0=#2 $1$\else $r_#2$ \fi}
		\newcommand{\nodetexty}[2]{$y$}
		\newcommand{\nodetextSum}[2]{$\Sigma$}
		\newcommand{\nodetextAct}[2]{\ifnum0=#2 $1$\else $\sigma$ \fi}
		\newcommand{\linkA}[4]{$v_{#2#4}$}
		\newcommand{\linkB}[4]{$w_{#2}$}
		\inputlayer[count=3, bias=true, title=Input Vector, text=\nodetextx]
		\hiddenlayer[count=4, bias=false, exclude={1}, title=Summation, text=\nodetextSum]
		\link[from layer=0, to layer=1, from node=0, to node=2, label=\linkA]
		\link[from layer=0, to layer=1, from node=0, to node=3, label=\linkA]
		\link[from layer=0, to layer=1, from node=0, to node=4, label=\linkA]
		\link[from layer=0, to layer=1, from node=1, to node=2]
		\link[from layer=0, to layer=1, from node=1, to node=3]
		\link[from layer=0, to layer=1, from node=1, to node=4]
		\link[from layer=0, to layer=1, from node=2, to node=2]
		\link[from layer=0, to layer=1, from node=2, to node=3]					
		\link[from layer=0, to layer=1, from node=2, to node=4]
		\link[from layer=0, to layer=1, from node=3, to node=2]
		\link[from layer=0, to layer=1, from node=3, to node=3, label=\linkA]					
		\link[from layer=0, to layer=1, from node=3, to node=4, label=\linkA]
		\hiddenlayer[count=3, bias=true, title=Hidden Layer, text=\nodetextAct]
		\link[from layer=1, to layer=2, from node=2, to node=1]
		\link[from layer=1, to layer=2, from node=3, to node=2]
		\link[from layer=1, to layer=2, from node=4, to node=3]
		\hiddenlayer[count=1, bias=false, title=Summation, text=\nodetextSum]
		\link[from layer=2, to layer=3, from node=0, to node=1, label=\linkB]					
		\link[from layer=2, to layer=3, from node=1, to node=1, label=\linkB]					
		\link[from layer=2, to layer=3, from node=2, to node=1, label=\linkB]
		\link[from layer=2, to layer=3, from node=3, to node=1, label=\linkB]			
		\outputlayer[count=1, bias=false, title=Output Variable, text=\nodetexty]
		\linklayers					
	\end{neuralnetwork}	
\end{figure}

\printbibliography

\end{document}